\documentclass{article}

\usepackage{arxiv}

\usepackage[utf8]{inputenc} 
\usepackage[T1]{fontenc}    
\usepackage{hyperref}       
\usepackage{url}            
\usepackage{booktabs}       
\usepackage{amsfonts}       
\usepackage{nicefrac}       
\usepackage{microtype}      
\usepackage{graphicx}
\usepackage{algorithmic}
\usepackage{subfigure}
\usepackage{booktabs} 

\usepackage{graphicx}
\usepackage{tikz}
\usepackage{amsthm}
\usepackage{thmtools}
\usepackage{thm-restate}
\usepackage{hyperref}
\usepackage{amsmath}
\usepackage{cleveref}
\usepackage{natbib}
\usepackage{amsfonts}

\usetikzlibrary{arrows,fit}
\usepackage[]{algorithm2e}
\usepackage{mathtools}


\newtheorem{theorem}{Theorem}
\newtheorem{definition}{Definition}
\newtheorem{lemma}{Lemma}

\newtheorem{conjecture}{Conjecture}
\DeclareMathOperator{\leaves}{leaves}
\DeclareMathOperator{\nonleaves}{non-leaves}
\DeclareMathOperator{\rev}{rev}
\DeclareMathOperator{\val}{val}
\DeclareMathOperator{\cost}{cost}
\DeclareMathOperator{\mergecost}{merge-cost}
\DeclareMathOperator{\mergerev}{merge-rev}
\DeclareMathOperator{\clustercost}{clustering-cost}
\DeclareMathOperator{\clusterrev}{clustering-rev}

\newenvironment{customthm}[1]
  {\innercustomthm}
  {\endinnercustomthm}

\newenvironment{customlem}[1]
  {\innercustomthm}
  {\endinnercustomthm}

\tikzset{
  treenode/.style = {align=center, inner sep=0pt, text centered,
    font=\sffamily},
  arn_bwb/.style = {treenode, circle, black, font=\sffamily\bfseries, draw=black,
    fill=white, text width=2em},
  arn_bwbb/.style = {treenode, circle, black, font=\sffamily\bfseries, draw=black,
    fill=white, text width=2em, very thick},
  arn_wbb/.style = {treenode, circle, white, font=\sffamily\bfseries, draw=black,
    fill=black, text width=2em},   
  arn_wbr/.style = {treenode, circle, white, font=\sffamily\bfseries, draw=red,
    fill=black, text width=2em, very thick}, 
  arn_bl/.style = {treenode, circle, black, font=\sffamily\bfseries, draw=blue,
    fill=white, text width=2em, very thick}, 
  arn_b/.style = {treenode, circle, black, draw=blue, 
    text width=2em, very thick},
  arn_x/.style = {treenode, rectangle, draw=black,
    minimum width=0.5em, minimum height=0.5em}
}
\newcommand{\matchcluster}{In the revenue context (where edge weights are data similarity), with $\widetilde{O}(n)$ machine space, Matching Affinity Clustering achieves (whp):
\begin{itemize}
\item a $(1/3-\epsilon)$-approximation for revenue in $O(\log (n)\log\log (n) \cdot (1/\epsilon)^{O(1/\epsilon)})$ rounds when $n=2^N$,
\item and a $(1/9-\epsilon)$-approximation for revenue in $O(\log(nW)\log\log(n)\cdot(1/\epsilon)^{O(1/\epsilon)})$ rounds in general.
\end{itemize}
}

\newcommand{\matchclusterval}{Assume there exists an MPC algorithm that achieves an $\alpha$-approximation for minimum weight $k$-sized matching whp in $O(f(n))$ rounds and $\widetilde{O}(n)$ machine space. In the value context (where edge weights are data distances) and in $O(f(n)\log(n))$ rounds with $\widetilde{O}(n)$ machine space, Matching Affinity Clustering achieves (whp):
\begin{itemize}
\item a $\frac23\alpha$-approximation for value when $n=2^N$,
\item and a $\frac13\alpha$-approximation for value in general.
\end{itemize}
}

\newcommand{\affinity}{Affinity Clustering cannot achieve better than a $O(1/n)$-factor approximation for revenue or value.}

\newcommand{\affinityrev}{There exists a family of graphs on which Affinity Clustering cannot achieve better than a $O(1/n)$-factor approximation for revenue.}

\newcommand{\affinityval}{There exists a family of graphs on which Affinity Clustering cannot achieve better than a $O(1/n)$-factor approximation for value.}

\newcommand{\kmatch}{There exists an MPC algorithm for $k$-sized maximum matching with nonnegative edge weights and max edge weight $W$ for $k>n/2$ that achieves a $(1-\epsilon)$-approximation whp in $O(\log(nW)\log\log(n)\cdot (1/\epsilon)^{1/\epsilon})$ rounds and $O(n/polylog(n))$ machine space.}

\newcommand{\half}{After the first round of merges, Matching Affinity Clustering maintains cluster balance (ie, the minimum ratio between cluster sizes) of $1/2$ whp.}

\newcommand{\transform}{Given $\mathcal{C}(G,C^i)$ and $C^{i+1}$ where clusters are all composed of two subclusters in $C^i$, $\mathcal{C}(G, C^{i+1})$ can be computed in the MPC model with $\widetilde{O}(n)$ machine space and one round.}

\newcommand{\lemrev}{Let clusters $C^{i}$ and $C^{i+1}$ be the $i$th and $i+1$th level clusterings found by Matching Affinity Clustering, where $C^0=V$. Let $p$ be the indicator that is 1 if $n$ is not a power of 2. Then the clustering revenue of Matching Affinity Clustering at the $i$th level is at least (whp):
    \begin{align*}
    \clusterrev_G(C^i, C^{i+1}) \geq  2^{3i-2p+1} \left(2^{n-i-1} - 1\right)\sum_{(A,B)\in M_i} w_{\mathcal{C}(G,C^i)}(v_A,v_B).
    \end{align*}}

\newcommand{\costb}{Let $C^{i}$ and $C^{i+1}$ be the $i$th and $i+1$th level clusterings found by Matching Affinity Clustering, where the $i$th step uses matching $M_i\geq (1-\epsilon)M^*$ for maximum matching $M^*$ and $C^0=V$. Then the clustering cost of Matching Affinity Clustering at the $i$th level is at most (whp):
    \begin{align*}\clustercost_G(C^i, C^{i+1}) \leq \frac{2^{2p+1}}{1-\epsilon} \clusterrev_G(C^i, C^{i+1}).
    \end{align*}}
    
\newcommand{\lemapx}{Matching Affinity Clustering obtains a $(1/3-\epsilon)$-approximation for revenue on graphs of size $2^N$, and a $(1/9-\epsilon)$-approximation on general graphs whp.}

\newcommand{\bounds}{Matching Affinity Clustering uses $\widetilde{O}(n)$ space per machine and runs in $ O(\log(n)\allowbreak\log\log (n) \cdot (1/\epsilon)^{O(1/\epsilon)})$ rounds on graphs of size $2^N$, and $O(\log (nW)\log\log (n) \cdot (1/\epsilon)^{O(1/\epsilon)})$ rounds in general.}

\newcommand{\lb}{There is a graph $G$ on which Matching Affinity Clustering achieves no better than a $(1/3 + o(1))$-approximation of the optimal revenue.}

\newcommand{\lbval}{There is a graph $G$ on which Matching Affinity Clustering achieves no better than a $(2/3 + o(1))$-approximation of the optimal revenue.}

\newcommand{\vallem}{Let $T$ be the tree returned by Matching Affinity Clustering in the distance context. Consider any clustering $C$ at some iteration of Matching Affinity Clustering above the first level. Let $C_i$ be the $i$th cluster which merged clusters $A_i$ and $B_i$ in the previous iteration. Say there are $k$ clusters in $C$. Then whp, given an $\alpha$-approximation MPC algorithm for minimum weight $k$-sized matching whp:
\begin{align*}
\frac{\sum_{i=1}^k w(A_i, B_i)}{\sum_{i=1}^k |A_i|\cdot |B_i|} \geq 2\alpha\frac{\sum_{i=1}^k (w(A_i) + w(B_i)) }{\sum_{i=1}^k(|A_i|(|A_i|-1) + |B_i|(|B_i|-1))}
\end{align*}}

\title{Improved Hierarchical Clustering on Massive Datasets with Broad Guarantees}

\author{
  MohammadTaghi Hajiaghayi\\
  Department of Computer Science\\
  University of Maryland, College Park\\
  \texttt{hajiaghayi@gmail.com}
    \And
  Marina Knittel\\
  Department of Computer Science\\
  University of Maryland, College Park\\
  \texttt{mknittel@umd.edu} \\
}

\begin{document}

\maketitle

\begin{abstract}  
Hierarchical clustering is a stronger extension of one of today's most influential unsupervised learning methods: clustering. The goal of this method is to create a hierarchy of clusters, thus constructing cluster evolutionary history and simultaneously finding clusterings at all resolutions. We propose four traits of interest for hierarchical clustering algorithms: (1) empirical performance, (2) theoretical guarantees, (3) cluster balance, and (4) scalability. While a number of algorithms are designed to achieve one to two of these traits at a time, there exist none that achieve all four.

Inspired by Bateni et al.'s scalable and empirically successful Affinity Clustering [NeurIPs 2017], we introduce Affinity Clustering's successor, \textit{Matching Affinity Clustering}. Like its predecessor, Matching Affinity Clustering maintains strong empirical performance and uses \textit{Massively Parallel Communication} as its distributed model. Designed to maintain provably balanced clusters, we show that our algorithm achieves good, constant factor approximations for Moseley and Wang's \textit{revenue} and Cohen-Addad et al.'s \textit{value}. We show Affinity Clustering cannot approximate either function. Along the way, we also introduce an efficient $k$-sized maximum matching algorithm in the MPC model.
\end{abstract}

\section{Introduction}\label{sec:intro}

Clustering is one of the most prominent methods to provide structure, in this case clusters, to unlabeled data. It requires a single parameter $k$ for the number of clusters. Hierarchical clustering elaborates on this structure by adding a hierarchy of clusters contained within superclusters. This problem is unparameterized, and takes in data as a graph whose edge weights represent the similarity or dissimilarity between data points. A hierarchical clustering algorithm outputs a tree $T$, whose leaves represent the input data, internal nodes represent the merging of data and clusters into clusters and superclusters, and root represents the cluster of all data.

Obviously it is more computationally intensive to find $T$ as opposed to a flat clustering. However, having access to such a structure provides two main advantages: (1) it allows a user to observe the data at different levels of granularity, effectively querying the structure for clusterings of size $k$ without recomputation, and (2) it constructs a history of data relationships that can yield additional perspectives. The latter is most readily applied to phylogenetics, where dendrograms depict the evolutionary history of genes and species \citep{phylo}. Hierarchical clustering in general has been used in a number of other unsupervised applications. 
In this paper, we explore four important qualities of a strong and efficient hierarchical clustering algorithm:

\begin{enumerate}
\item \textbf{Theoretical guarantees}. Previously, analysis of hierarchical clustering algorithms has relied on experimental evaluation. While this is one indicator for success, it cannot ensure performance across a large range of datasets. Researchers combat this by considering optimization functions to evaluate broader guarantees \citep{charikar2004,linetal}. One function that has received significant attention recently \citep{charikar2018,cohen-addad} is a hierarchical clustering cost function proposed by \citet{dasgupta}. This function is simple and intuitive, however, \citet{dasguptabounds} showed that it is likely not constant-factor approximable. To overcome this, we examine its dual, revenue, proposed by \citet{dual}, which considers a graph with \textit{similarity}-based edge weights. For \textit{dissimilarity}-based edge weights, we look to \citet{cohen-addad}'s value, another cost-inspired function. We are interested in constant factor approximations for these functions.

\item \textbf{Empirical performance}. As theoretical guarantees are often only intuitive proxies for broader evaluation, it is still important to evaluate the empirical performance of algorithms on specific, real datasets. Currently, \citet{affinity}'s Affinity Clustering remains the state-of-the-art for scalable hierarchical clustering algorithms with strong empirical results. With Affinity Clustering as an inspirational baseline for our algorithm, we strive to preserve and, hopefully, extend Affinity Clustering's empirical success.

\item \textbf{Balance}. One downside of algorithms like Affinity Clustering is that they are prone to creating extremely unbalanced clusters. There are a number of natural clustering problems where balanced clusters are preferable or more accurate for the problem, for example, clustering a population into genders. Some more specific applications include image collection clustering, where balanced clusters can make the database more easily navigable \citep{balanced1}, and wireless sensor networks, where balancing clusters of sensor nodes ensures no cluster head gets overloaded \citep{balanced2}. Here, we define balance as the minimum ratio between cluster sizes. 

\item \textbf{Scalability}. Most current approximations for revenue are serial and do not ensure performance at scale. We achieve scalability through distributed computation. Clustering itself, as well as many other big data problems, has been a topic of interest in the distributed community in recent years \citep{chitnisetal2,chitnisetal,networkcluster}. In particular, hierarchical clustering has been studied by \citet{jinetal2,jinetal}, but only \citet{affinity} has attempted to ensure theoretical guarantees through the introduction of a Steiner-based cost metric. However, they provide little motivation for its use. Therefore, we are interested in evaluating distributed algorithms with respect to more well-founded optimization functions like revenue and value.

For our distributed model, we look to \textit{Massively Parallel Communication} (MPC), which was used to design Affinity Clustering. MPC is a restrictive, theoretical abstraction of \textit{MapReduce}: a popular programming framework famous for its ease of use, fault tolerance, and scalability \citep{mapreduce}. In the MPC model, individual machines carry only a fraction of the data and execute individual computations in rounds. At the end of each round, machines send limited messages to each other. Complexities of interest are the number of rounds and the individual machine space. This framework has been used in the analysis for many large-scale problems in big data, including clustering \citep{im,ludwig,networkcluster}. It is a natural selection for this work.
\end{enumerate}

\subsection{Related Work}
There exist algorithms that can achieve up to two of these qualities at a time. Affinity Clustering, notably, exhibits good empirical performance and scalability using MPC. While \citet{affinity} describe some minor theoretical guarantees for Affinity Clustering, we believe that proving an algorithm's ability to optimize for revenue and value is a stronger and more well-founded result due to their popularity and relation to Dasgupta's cost function. A simple random divisive algorithm proposed by \citet{dasguptabounds} was shown to achieve a $1/3$ expected approximation for revenue and can be efficiently implemented using MPC. However, it is notably nondeterministic, and we show that it does not exhibit good empirical performance. Similarly, balanced partitioning may achieve balanced clusters, but it is unclear whether it is scalable, and it has not been shown to achieve strong theoretical guarantees.

For both revenue and value, Average Linkage achieves near-state-of-the-art $1/3$ and $2/3$-approximations respectively~\citep{dual, cohen-addad}. \citet{linkage} marginally improves these factors to $1/3+\epsilon$ for revenue and $2/3+\epsilon$ for value, through semi-definite programming (SDP, a non-distributable method). However, since value and revenue both strove to characterize Average Linkage's optimization goal, and this was only marginally beat by an SDP, we do not expect to surpass Average Linkage in the restrictive distributed context.

After the completion of our work, a new algorithm was introduced achieving an impressive 0.585 approximation for revenue~\cite{revapx}. While this result was unknown during the course of this work, it sets a new standard to strive for in future work.

\subsection{Our contributions}\label{sec:contributions}
In this work, we propose a new algorithm, \textit{Matching Affinity Clustering}, for distributed hierarchical clustering. Inspired by Affinity Clustering's reliance on the minimum spanning tree in order to greedily merge clusters \citep{affinity}, Matching Affinity Clustering merges clusters based on iterative matchings. It notably generalizes to both the edge weight similarity and dissimilarity contexts, and achieves all four desired qualities.

In Section~\ref{sec:results}, we \textbf{theoretically motivate} Matching Affinity Clustering by proving it achieves a good approximation for both revenue and value (the latter depending on the existence of an MPC minimum matching algorithm), nearing the bounds achieved by Average Linkage:

\begin{theorem}\label{thm:matchcluster}
    \matchcluster
\end{theorem}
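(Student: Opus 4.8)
The plan is to prove the two halves of the statement separately and take their conjunction: first the approximation factors $1/3-\epsilon$ and $1/9-\epsilon$, and then the round and space bounds. The approximation is established by a level-by-level analysis of the hierarchy, and the resource bounds by charging each of the $O(\log n)$ merge rounds to a single matching computation on a contracted graph.

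For the approximation, I would work level by level. Writing $C^0,C^1,\dots$ for the successive clusterings produced by the algorithm, the total revenue of the returned tree decomposes as a sum over levels of the per-level quantities $\clusterrev_G(C^i,C^{i+1})$, and likewise the cost decomposes into the $\clustercost_G(C^i,C^{i+1})$. The engine of the argument is the per-level inequality
\begin{align*}
\clustercost_G(C^i,C^{i+1}) \leq \frac{2^{2p+1}}{1-\epsilon}\,\clusterrev_G(C^i,C^{i+1}),
\end{align*}
where $p$ indicates that $n$ is not a power of two. This itself is obtained by lower bounding each level's revenue and upper bounding each level's cost in terms of the \emph{same} matched edge weights in the contracted graph, so that those Steiner-type weights cancel. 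Summing over levels and combining with the identity $\mathrm{rev}(T)+\mathrm{cost}(T)=n\sum_e w_e$ gives
\begin{align*}
\mathrm{rev}(T) \geq \frac{n\sum_e w_e}{1 + 2^{2p+1}/(1-\epsilon)} \geq \frac{\mathrm{OPT}}{1 + 2^{2p+1}/(1-\epsilon)},
\end{align*}
where the last step uses the standard upper bound $\mathrm{OPT}\leq (n-2)\sum_e w_e \leq n\sum_e w_e$. For $p=0$ the factor tends to $1/3$ and for $p=1$ to $1/9$ as $\epsilon\to 0$, which after reparametrizing $\epsilon$ yields exactly the claimed $(1/3-\epsilon)$ and $(1/9-\epsilon)$ guarantees.

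For the resource bounds I would multiply the number of merge levels by the cost of one level. Since the algorithm maintains cluster balance $1/2$ after its first round, cluster sizes roughly double each level, so there are $O(\log n)$ levels. Each level contracts the current clustering into a weighted graph of $O(n)$ vertices and edges in a single MPC round, then computes a matching on it; with the $(1-\epsilon)$-approximate MPC matching subroutine each such matching costs $O(\log\log n\cdot (1/\epsilon)^{O(1/\epsilon)})$ rounds when $n=2^N$ (perfect matchings), giving $O(\log n\log\log n\cdot (1/\epsilon)^{O(1/\epsilon)})$ rounds in total. In the general case a single special first round invokes the $k$-sized maximum matching algorithm at cost $O(\log(nW)\log\log n\cdot (1/\epsilon)^{1/\epsilon})$; adding the $O(\log n\log\log n)$ rounds for the remaining balanced levels and using $\log(nW)\geq \log n$ collapses the total to $O(\log(nW)\log\log n\cdot (1/\epsilon)^{O(1/\epsilon)})$. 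The contracted graph fits in $\widetilde{O}(n)$ space per machine throughout, and all high-probability guarantees survive a union bound over the $O(\log n)$ levels.

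The main obstacle is the per-level cost/revenue inequality, especially in the non-power-of-two case. The jump from $1/3$ to $1/9$ is precisely the $2^{2p}$ factor, and pinning it down requires carefully tracking how an unbalanced initial merge inflates subtree sizes and hence the cost-to-revenue ratio, while verifying that the $(1-\epsilon)$ matching error does not compound across levels. Relating the matched weights in the contracted Steiner graph back to the true revenue and cost in $G$, so that the intermediate quantities telescope cleanly when summed, is the technically delicate step; once that inequality is in hand, the remainder is bookkeeping over the $O(\log n)$ levels.
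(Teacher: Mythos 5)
Your high-level architecture coincides with the paper's: decompose revenue and cost level by level, establish the per-level inequality $\clustercost_G(C^i,C^{i+1}) \leq \frac{2^{2p+1}}{1-\epsilon}\clusterrev_G(C^i,C^{i+1})$ (the paper's Lemma~\ref{lem:costb}), sum over levels, combine with $\rev_G(T)=n\sum_{e} w_e-\cost_G(T)$ and the bound $\textsc{OPT}\leq (n-2)\sum_e w_e$, and charge the round complexity to $O(\log n)$ matching calls plus one $k$-sized matching call (your additive accounting here, using $\log(nW)\geq\log n$, is in fact cleaner than the multiplication in the paper's own proof of Lemma~\ref{lem:bounds}). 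One small slip: since $\cost_G(T)= 2\sum_e w_e+\sum_i\clustercost_G(C^i,C^{i+1})$, the chain yields $\rev_G(T)\geq (n-2)\sum_e w_e\big/\bigl(1+\tfrac{2^{2p+1}}{1-\epsilon}\bigr)$, not $n\sum_e w_e$ over that denominator; the conclusion survives because $\textsc{OPT}\leq(n-2)\sum_e w_e$.

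The genuine gap is that the per-level inequality --- which you correctly identify as the crux --- is asserted rather than proved, and the mechanism is not the one you gesture at. The difficulty is that $\clustercost$ at level $i$ charges, for each matched pair $(A,B)\in M_i$, the weight of \emph{every} edge from $A$ or $B$ to all other clusters, whereas $\clusterrev$ only sees the matched edges; nothing ``cancels'' until you bound the total weight of all non-matching edges of the complete clustering graph by a multiple of $w(M_i)$. The paper's idea is that the multigraph containing two copies of every non-matching edge on the $k$ cluster-vertices is $2(k-2)$-regular, hence decomposes into $2(k-2)$ perfect matchings, each of weight at most $w(M_i^*)\leq w(M_i)/(1-\epsilon)$; the resulting factor $k-2=2^{n}/2^{i}-2$ is then exactly absorbed by the $(2^n-2^{i+1})$ non-leaf factor in the revenue lower bound of Lemma~\ref{lem:rev}. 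Your sketch also elides two points that need separate treatment: the first iteration in the $p=1$ case is not a perfect matching, so the decomposition must be redone (the paper adds a bipartite cover of the edges running to unmatched vertices); and the $2^{2p}$ degradation does not come from ``inflated subtree sizes'' but from the duplicated dummy vertices --- by Lemma~\ref{lem:half} only half the vertices of each cluster are guaranteed real, so the $|A||B|$ term in the revenue bound loses a factor $2^{2p}$. Without these pieces the $1/9$ case, in particular, is not established.
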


\begin{theorem}\label{thm:matchclusterval}
    \matchclusterval
\end{theorem}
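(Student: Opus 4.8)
The plan is to bound the value of the tree $T$ produced by Matching Affinity Clustering against the trivial upper bound $\mathrm{OPT} \le nW$, where $W = \sum_{i<j} w(i,j)$ (every pair's lowest common ancestor spans at most $n$ leaves). The first step is to rewrite value as a sum over the merge nodes of $T$: a cross pair $u\in A$, $v\in B$ has its LCA exactly at the node merging $A$ and $B$, so it contributes $w(u,v)$ times the number of leaves under that node. Writing $M_i$ for the matching at step $i$ and $W_i=\sum_{C}w(C)$ for the total intra-cluster weight at level $i$ (so $W_0=0$ and $W_N=W$), in the power-of-two case every level-$i$ cluster has exactly $2^i$ leaves, and Abel summation gives
\begin{align*}
\mathrm{val}(T) = \sum_{i=1}^{N} 2^i \sum_{(A,B)\in M_i} w(A,B) = \sum_{i=1}^N 2^i (W_i - W_{i-1}) = nW - \sum_{i=1}^{N-1} 2^i W_i .
\end{align*}
So it suffices to show that $\sum_i 2^i W_i$ is small relative to $nW$.

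Second, I would feed the per-level density-comparison lemma (the value lemma) into this identity. Since every cluster participates in exactly one merge, $\sum_{(A,B)\in M_i}(w(A)+w(B)) = W_{i-1}$, and in the balanced case $|A|=|B|=2^{i-1}$, so the lemma's ratio $\sum|A||B| : \sum\bigl(|A|(|A|-1)+|B|(|B|-1)\bigr)$ is essentially $1:2$. Plugging in collapses the lemma to $W_i - W_{i-1} \ge \alpha W_{i-1}$, i.e.\ $W_{i-1} \le W_i/(1+\alpha)$, so the intra-cluster weights decay geometrically from the top: $W_i \le W/(1+\alpha)^{N-i}$. Bounding the geometric series $\sum_{i=1}^{N-1} 2^i W_i \le 2^N W \sum_{j\ge 1}(2(1+\alpha))^{-j} = nW/(2\alpha+1)$ yields $\mathrm{val}(T) \ge \frac{2\alpha}{2\alpha+1}\,nW \ge \frac{2}{3}\alpha\,nW \ge \frac{2}{3}\alpha\,\mathrm{OPT}$, using $\frac{2\alpha}{2\alpha+1}\ge \frac{2}{3}\alpha$ for $\alpha\le 1$. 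Intuitively, the lemma says each minimum-weight matching still cuts a constant fraction of the remaining intra-cluster weight, forcing the bulk of $W$ to be resolved near the root where the leaf counts are largest.

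Third, I would handle the general case and the complexity bounds. Cluster sizes are no longer exactly $2^i$, but the balance lemma guarantees a minimum size ratio of $1/2$ after the first round, so any node created at level $i$ spans at least half as many leaves as in the balanced case; carrying this through the leaf-count factor costs exactly one factor of $2$ and gives $\frac{1}{3}\alpha$. The first-level merge is treated separately (there the intra-cluster weights vanish and the lemma is vacuous), but it only enters the telescoping through the level-$\ge 2$ inequalities, so it incurs no loss beyond the balance factor. For the resource bounds, each of the $O(\log n)$ levels invokes the assumed $\alpha$-approximate minimum-weight $k$-matching oracle once in $O(f(n))$ rounds and rebuilds the contracted cluster graph in $\widetilde{O}(n)$ space, for a total of $O(f(n)\log n)$ rounds whp.

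The main obstacle is the value lemma itself: turning ``each matching cuts a constant fraction of the remaining weight'' into the clean inequality $W_i \ge (1+\alpha)W_{i-1}$, while controlling the discrepancy between $|A||B|$ and $|A|(|A|-1)$ and absorbing the whp guarantee of the approximate matching oracle. Once that local inequality is established, the global bound is the short telescoping-plus-geometric-series computation above, and the only remaining care is the factor-of-two bookkeeping for imbalanced, non-power-of-two instances.
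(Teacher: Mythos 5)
Your argument for the $n=2^N$ case is correct but takes a genuinely different route from the paper. The paper follows Cohen-Addad et al.'s template: an induction on tree levels with the strengthened invariant $\val(T_C)\ge \left(\frac13\right)^p\left(\frac23\right)^{1-p}\alpha\sum_i |C_i|\,w(C_i)$, where each inductive step splits the newly accumulated value into a cross term $\sum_i(|A_i|+|B_i|)w(A_i,B_i)$ and an inherited term, and uses the value lemma plus the balance bound $m\le |A_i|,|B_i|\le 2^p m$ to recover the missing pieces $|A_i|w(B_i)+|B_i|w(A_i)$. Your global telescoping identity $\val(T)=nW-\sum_{i=1}^{N-1}2^iW_i$ combined with the geometric decay $W_{i-1}\le W_i/(1+\alpha)$ reaches the same conclusion (indeed a marginally sharper constant $\frac{2\alpha}{2\alpha+1}\ge\frac23\alpha$ for $\alpha\le 1$) without carrying an invariant through the tree, and I verified that the decay step only ever invokes the value lemma at merges where cluster size is at least $2$, so the degenerate first level causes no trouble. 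Both proofs lean on the same two ingredients — the value lemma and the trivial bound $\mathrm{OPT}\le nW$ — so the difference is in the bookkeeping, where yours is cleaner for the balanced case.

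The gap is in your general-case claim that imbalance ``costs exactly one factor of 2.'' Imbalance degrades your argument in two places simultaneously, not one. First, the telescoping coefficients: with cluster sizes only guaranteed to lie in $[m,2m]$, a level-$i$ merge spans at least half the balanced leaf count, costing a factor $\frac12$ in the Abel summation. Second, and independently, the value lemma's collapse to $W_i-W_{i-1}\ge\alpha W_{i-1}$ itself degrades, because the ratio $\sum|A_j||B_j| \,/\, \sum\bigl(|A_j|(|A_j|-1)+|B_j|(|B_j|-1)\bigr)$ drops from essentially $\frac12$ to as low as $\frac18$ under the same size bounds (this is exactly the $2^{2p}$ factor the paper tracks), weakening the decay to roughly $W_i-W_{i-1}\ge\frac{\alpha}{4}W_{i-1}$. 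Feeding both degradations through your geometric series gives something like $\frac{\alpha}{2(2+\alpha)}\,nW$, which is $\frac16 nW$ at $\alpha=1$ — strictly worse than the claimed $\frac13\alpha$. The paper's inductive formulation avoids this compounding because its invariant $\sum_i|C_i|w(C_i)$ adapts to the actual cluster sizes rather than comparing against an idealized balanced profile, so the two losses are absorbed jointly rather than multiplied. To salvage your route in the general case you would either need to accept a weaker constant or redo the telescoping against the true (non-uniform) cluster sizes, at which point you are effectively reconstructing the paper's induction.
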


Furthermore, in Theorem~\ref{thm:affinity}, we prove that Matching Affinity Clustering can give no guarantees with respect to revenue or value. The discussion and proof of this theorem can be found in the Appendix.

\begin{theorem}\label{thm:affinity}
\affinity
\end{theorem}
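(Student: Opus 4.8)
The plan is to prove the two halves separately --- a similarity family witnessing the revenue claim and a dissimilarity family witnessing the value claim --- and then combine them, since the statement only asserts badness for revenue \emph{or} value. In both cases the strategy is to exploit the single structural weakness of Affinity Clustering: it forms the next level of clusters by taking connected components of the ``best neighbour'' graph (a Boruvka step), so one chain of edges can fuse together points that an optimal dendrogram keeps apart, and the entire fusion is recorded as a single level of the tree. I would first pin down, once and for all, the exact dendrogram Affinity produces on each family (being careful about tie-breaking and about how a same-round fusion of many clusters is rendered as a tree node), and only then compare objectives against a concrete near-optimal tree.

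For revenue I would build a similarity graph whose best-neighbour graph is already connected in the first round, so Affinity collapses all $n$ points into one cluster at the first level and returns an essentially flat dendrogram (one internal node with all leaves as children). The cleanest instance is a weighted star-type graph, where every vertex's unique or heaviest incident edge points into a common fragment, forcing the one-round collapse. On a flat tree every pair $(i,j)$ has $\mathrm{LCA}(i,j)$ equal to the root, so $n-|\leaves(\mathrm{LCA}(i,j))| = 0$ and the revenue is $0$ (hence $O(\mathrm{OPT}/n)$). Meanwhile an explicit caterpillar on the same weights achieves revenue $\Theta(n^2)$, which lower-bounds $\mathrm{OPT}$, giving an approximation ratio of $O(1/n)$.

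For value the subtlety is that Affinity's greedy agglomerative rule is normally well aligned with value (close points should sit in small subtrees, far points should split high), so a flat collapse would actually be \emph{good}. The fix is to use a non-metric dissimilarity that violates the triangle inequality: take a gadget $a,u,b$ with $d_{a,u}=d_{u,b}=1$ but $d_{a,b}=D$ for a huge $D$, padded out to $n$ vertices whose remaining distances are negligible and whose gadget-to-padding distances exceed $1$. Since $u$ is the nearest neighbour of both $a$ and $b$, the best-neighbour graph chains $a,u,b$ into one fragment in the first round, so Affinity buries the far pair $(a,b)$ in a size-$3$ cluster, whose LCA has only $3$ leaves below it and which therefore contributes only $\approx 3D$ to the value; subsequent merges never raise this LCA. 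An optimal tree instead separates $a$ and $b$ at the root, giving that pair the maximal multiplier $n$ and value $\ge Dn$. Taking $D$ large enough that this pair dominates both objectives yields a ratio of $\approx 3/n = O(1/n)$.

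The main obstacle is the value direction, for two reasons. First, I must choose the non-metric weights so that the chaining provably occurs and is robust to Affinity's tie-breaking, while the $n-3$ padding vertices neither interfere with the first-round fusion of the gadget (so the buried cluster stays size $3$) nor inflate Affinity's value above $o(Dn)$; this forces a clean separation of weight scales, with the gadget distance $D$ dwarfing all padding contributions. Second, I must check that the claimed $\mathrm{OPT}$ bound is legitimate, i.e.\ that a single tree realizes the root split of the far pair --- immediate here, but the step the argument hinges on. Fixing Affinity's exact output dendrogram under the Boruvka merging is the common prerequisite for both halves; once it is fixed the objective computations are routine, and the theorem follows by taking the worse of the two families.
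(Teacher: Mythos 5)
Your value construction is sound and is in fact a cleaner witness than the paper's: the paper buries $\Theta(n)$ unit-weight edges inside size-four clusters built from an adversarially chosen zero-weight MST, whereas you bury a single dominant distance $D$ inside a size-three cluster. Both give the $O(1/n)$ ratio, and your version is robust to how the three-way merge $\{a,u,b\}$ is rendered as a binary tree, since $\mathrm{LCA}(a,b)$ has at most $3$ leaves under any resolution.

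The revenue half has a genuine gap. The paper explicitly raises your star example (``all vertices will be merged to a cluster in one round for a revenue of zero'') and then declines to rest the proof on it, insisting that one must ``consider all possible ways Affinity Clustering might decide to resolve edges on the max spanning tree'' --- that is, a one-round collapse of a component must be charged as some sequence of binary merges, not as a single flat internal node. Under that reading your star fails: resolving the collapse as a caterpillar that attaches the spokes to the center one at a time gives edge $(c,v_i)$ a non-leaf multiplier of $n-i-1$, for total revenue $\Theta(n^2)$, which is a constant fraction of your own bound $\mathrm{OPT}\le (n-1)(n-2)$. The flaw is structural: a star carries only $\Theta(n)$ total edge mass, all incident to the center, so \emph{some} resolution of the collapse places that mass at large multipliers. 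The paper's fix is a complete bipartite graph with $\Theta(n^2)$ unit edges: a legal first-round run of Affinity produces two star-shaped clusters whose binary resolutions earn only $O(n^2)$ revenue from the $O(n)$ edges touching the two centers, while the remaining $\Theta(n^2)$ unit edges are all deferred to the final merge, where the multiplier $n-|A|-|B|$ is exactly $0$; meanwhile $\mathrm{OPT}=\Theta(n^3)$. You need a family of this type --- quadratic edge mass that Affinity provably postpones to zero-multiplier merges under every resolution --- rather than one where Affinity's tree is merely flat. (A minor point: the paper proves the theorem as a conjunction, with separate families for revenue and for value, not by ``taking the worse of the two''; since you supply both halves anyway, this is only a phrasing issue.)
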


We also present an efficient and near-optimal MPC algorithm for $k$-sized maximum matching in Theorem~\ref{thm:kmatch} in Section~\ref{sec:kmatch}. This is used by Matching Affinity Clustering.


To evaluate the \textbf{empirical performance} of our algorithm, we run \citet{affinity}'s experiments used for Affinity Clustering on small-scale datasets in Section~\ref{sec:experiments}. We find Matching Affinity Clustering performs competitively with respect to state-of-the-art algorithms. On filtered, balanced data, we find that Matching Affinity Clustering consistently outperforms other algorithms by at least a small but clear margin. This implies Matching Affinity Clustering may be more useful on balanced datasets than Affinity Clutsering.

To confirm the \textbf{balance} of our algorithm, we are able to prove that Matching Affinity Clustering achieves perfectly balanced clusters on datasets of size $2^N$, and otherwise guarantee near balance (a cluster size ratio of at most 2). See Lemma~\ref{lem:half}. This was also confirmed in our empirical evaluation in Section~\ref{sec:experiments}.

Finally, we show in Section~\ref{sec:results} that Matching Affinity Clustering is highly \textbf{scalable} because it was designed in the same MPC framework as Affinity Clustering. We provide similar complexity guarantees to Affinity Clustering.

Matching Affinity Clustering is ultimately a nice, simply motivated successor to Affinity Clustering that achieves all four desired qualities: empirical performance, theoretical guarantees, balance, and scalability. No other algorithm that we know of does this.

\section{Background}
In this section, we describe basic notation, hierarchical clustering cost functions, and Massively Parallel Communication (MPC). 


\subsection{Preliminaries}\label{sec:prelim}
The standard hierarchical clustering problem takes in a set of data represented as a graph where weights on edges measure similarity or dissimilarity between data. In this paper, edge weights, denoted $w_G(u,v)$ for a graph $G$ and may be similarities or differences as specified.


One of the most simple, serial hierarchical clustering algorithms is Average Linkage, which provides a good approximation for Moseley and Wang's dual \citep{dual}. It starts with each vertex being its own cluster, and builds the next cluster by merging the two clusters $A$ and $B$ that maximize the average distance between points in the clusters:
\[\frac{1}{|A|\cdot |B|} \sum_{u,v\in G} w_G(u,v).\]

Finally, we note that much of our results occur \textit{with high probability} (whp), as is true with many MPC algorithms. This means they occur with probability $1-1/(\Omega(n))$, where the denominator is generally exponential in $n$. The probabilistic aspects of the algorithm come from our use of \citet{matching}'s maximum matching algorithm, which finds a $(1+\epsilon)$-approximate maximum matching with high probability. As we do not introduce a probabilistic aspect ourselves, this will not be discussed in depth in our proofs, but will be stated in the theorems and lemmas.

\begin{figure}
  \centering
  \begin{tikzpicture}[->,>=stealth',level/.style={sibling distance = 5cm/#1,
  level distance = 1cm}] 
\node (t1) [arn_bwb,label=left:3rd level] {}
    child{ node [arn_wbb,label=left:2nd level,label=right:$T(i\lor j)$] {$i\lor j$} 
            child{ node (t2) [arn_wbb,label=left:1st level] {} 
            	child{ node (t3) [arn_wbr,label=left:0th level] {$i$} 
                         } 
							child{ node (t4) [arn_wbr,label=below:$\leaves(T(i\lor j))$] {} }
            }
            child{ node (t5) [arn_wbb] {}
							child{ node (t6) [arn_wbr] {$j$}}
							child{ node (t7) [arn_wbr] {}}
            }                            
    }
    child{ node [arn_bwb] {}
            child{ node [arn_bwb] {} 
							child{ node [arn_bl] {}}
							child{ node [arn_bl] {}}
            }
            child{ node [arn_bwb] {}
							child{ node [arn_bl,label=below:$\nonleaves(T(i\lor j))$] {}}
							child{ node [arn_bl] {}}
            }
		}
		
; 
\end{tikzpicture}
  \caption{An example hierarchical tree $T$. $i$ and $j$ are data points, or leaves in the tree. Then $T(i\lor j)$, the subtree in black, is rooted at their least common ancestor. The red vertices are the leaves of this true, and the blue are the non-leaves. }
  \label{fig:tree}
\end{figure}

\subsection{Optimization functions}\label{sec:cost}
 Consider some hierarchical tree, $T$. We say $i\lor j$ for leaves $i$ and $j$ is the least common ancestor of $i$ and $j$. The subtree rooted at an interior vertex $v$ is $T[v]$, therefore the subtree representing the smallest cluster that contains both $i$ and $j$ is $T[i\lor j]$. Let $\leaves(T[v])$ be the set of leaves in $T[v]$, and $\nonleaves(T[v])$ be the set of all of the leaves of $T$ but not $T[v]$. Now we can describe Dasgupta's function.

\begin{definition}[\citet{dasgupta}]
Dasgupta's \textbf{cost} function of tree $T$ on graph $G$ with similarity-based edge weights $w_G$ is a minimization function.
\[\cost_G(T) = \sum_{i,j\in V(G)} w_G(i,j) |\leaves (T[i\lor j])|.\]
\end{definition}

To minimize edge weight contribution, we want a small $|\leaves (T[i\lor j])|$ for heavy edges. This ensures that heavy edges will be merged earlier in the tree. To calculate this, it is easier to break it down into a series of merge costs for each node in $T$. It counts the costs that accrue due to the merge at that node so that we can keep track of the cost throughout the construction of $T$. It is defined as:

\begin{definition}[\citet{dual}] 
The \textbf{merge cost} of a node in $T$ which merges disjoint clusters $A$ and $B$ is:
\begin{align*}
\mergecost_G(A, B) =& |B|\sum_{a\in A, c\in G\setminus(A\cup B)}w_G(a,c) +|A|\sum_{b\in B,c\in G\setminus(A\cup B)} w_G(b,c).
\end{align*}
\end{definition}

This breaks down the cost of a hierarchy tree into a series of merge costs. Consider some edge, $(i,j)$. At each merge containing exclusively $i$ or $j$, this edge contributes $w_G(i,j)$ times the size of the other cluster. In the hierarchical tree, this counts how many vertices accrue during merges along the paths from $i$ and $j$ to $i\lor j$. However, this does not account for the leaves $i$ or $j$ themselves, so we need to add $w_G(i,j)$ two extra times in addition to each merge. This means we can derive the total cost from the merge costs as: $\cost_G(T) = 2\sum_{i,j\in V(G)} w_G(i,j) + \sum_{\text{merges } A,B} \mergecost_G(A,B).$

Next, we consider Moseley and Wang's dual to Dasgupta's function \cite{dual}.

\begin{definition}[\citet{dual}] 
The \textbf{revenue} of tree $T$ on graph $G$ with similarity-based edge weights is a maximization function.
\[\rev_G(T) = \sum_{i,j\in V(G)} w_G(i,j) |\nonleaves (T[i\lor j])|.\]
\end{definition}

We can, in a similar fashion to Dasgupta's cost function, break revenue down into a series of merge revenues.

\begin{definition}\cite{dual}. 
The \textbf{merge revenue} of a node in $T$ which merges disjoint clusters $A$ and $B$ is:
\[\mergerev_G(A,B) = (n-|A|-|B|)\sum_{a\in A,b\in B} w_G(a,b).\]
\end{definition}

Note that for some $i$ and $j$, $w_G(i,j)$ is contributed exactly once, when $i$ and $j$ merge at $i\lor j$, and $n-|A|-|B|$ is the number of non-leaves at that step. Therefore: $\rev_G(T) = \sum_{i,j\in V(G)} \mergerev_G(i,j)$.
 In addition, note the contribution of each $i,j$ pair, which is scaled by $w_G(i,j)$, is the number of leaves of $i\lor j$ for revenue, and the number of non-leaves of $i\lor j$ for cost. Therefore the contribution of each edge for revenue is $n$ minus the contribution for cost, scaled by $w_G(i,j)$. In other words: $\rev_G(T) = n\sum_{i,j\in V(G)} w_G(i,j) - \cost_G(T)$.

While cost is a popular and well-founded metric, \citet{dasguptabounds} found that it is not constant factor approximable under the Small Set Expansion Hypothesis. On the other hand,~\citet{dual} proved that revenue is, and Average Linkage achieves a $1/3$-approximation. This makes it a more practical function to work with.

Our other function of interest is~\citet{cohen-addad}'s value function. This was introduced as a Dasgupta-inspired optimization function where edge weights represent distances. It looks exactly like cost, except it is now a maximization function because it is in the distance context.

\begin{definition}[\citet{cohen-addad}] 
The \textbf{value} of tree $T$ on graph $G$ with dissimilarity-based edge weights $w_G$ is a maximization function.
\[\val_G(T) = \sum_{i,j\in V(G)} w_G(i,j) |\leaves (T[i\lor j])|.\]
\end{definition}

Like revenue, value is constant factor-approximable. In fact, the best approximation (other than an SDP) for value is Average Linkage's $2/3$-approximation~\citep{cohen-addad}. To our knowledge, there are no distributable approximations for value.

\subsection{Massively Parallel Communication (MPC)}\label{sec:MPC}

Massively Parallel Communication (MPC) is a model of distributed computation used in programmatic frameworks like MapReduce \citep{mapreduce2}, Hadoop \citep{hadoop}, and Spark \citep{spark}. MPC consists of ``rounds'' of computation, where parts of the input are distributed across machines with limited memory, computation is done locally for each machine, and then the machines send limited messages to each other. The primary complexities of interest are machine space, which should be $\widetilde{O}(n)$, and the number of rounds. Many MPC algorithms are extremely efficient. For instance, Affinity Clustering in some cases can have constantly many rounds, and otherwise may use up to $O(\log^2n)$ rounds \citep{affinity}.

\section{Finding a k-sized maximum matching}\label{sec:kmatch}

The algorithm we introduce in Section \ref{sec:results} requires the use of a $(1-\epsilon)$-approximation for the maximum $k$-sized (or less) matching, where $k> n/2$. For this we will use \citet{matching}'s $(1-\epsilon)$-approximation for maximum matching in MPC, which runs in $O(\log\log(n)\cdot (1/\epsilon)^{1/\epsilon})$ rounds with $O(n/polylog(n))$ space. Inspired by the results of~\citet{kmatching}, we provide a distributed reduction between matching and $k$-matching. To do this, we add $n-2k$ vertices and edges of weight $Q$ (which is found with a binary search) between the new and original vertices, and run the matching algorithm. The algorithm can be seen below and the proof is found in the Appendix in the full paper.

\begin{theorem}\label{thm:kmatch}
\kmatch
\end{theorem}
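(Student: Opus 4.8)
The plan is to reduce the cardinality-constrained problem to an \emph{unconstrained} maximum weight matching instance and then invoke \citet{matching}'s $(1-\epsilon)$-approximate MPC matching algorithm as a black box, following the reduction philosophy of \citet{kmatching}. Given the input graph $G$ on $n$ vertices, I would build an augmented graph $G'$ by introducing a set of auxiliary vertices together with uniform-weight edges, each of weight $Q$, joining every auxiliary vertex to every original vertex. Intuitively, each auxiliary edge gives an original vertex an ``outside option'' worth $Q$: in a maximum weight matching of $G'$, an original vertex is matched internally (contributing real edge weight) only when its best real pairing beats $Q$, and otherwise it is absorbed by an auxiliary vertex. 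The original-to-original edges that survive in an optimal matching of $G'$ therefore form a maximum weight matching on $G$ whose cardinality is governed by $Q$, and recovering it is just a matter of discarding the auxiliary edges.

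The parameter $Q$ cannot be fixed a priori, since the number of internal edges in the optimum depends on the unknown weight profile of $G$: as $Q$ increases the outside option becomes more attractive and the cardinality of the real matching decreases, so $Q$ behaves like a Lagrangian multiplier controlling matching size. I would binary search for $Q$ over the range $[0, O(nW)]$ of achievable weights, running the matching subroutine once per candidate value and using the cardinality of the induced real matching as the search criterion to pin it to the target size $k$. Since matching weights are bounded by $O(nW)$, this terminates in $O(\log(nW))$ iterations. To keep per-machine space at $O(n/polylog(n))$ despite the auxiliary edges being dense, I would represent them implicitly, recording only that each auxiliary vertex attaches to all original vertices with the common weight $Q$, so $G'$ still has $O(n)$ explicitly stored vertices and edges.

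For correctness I would argue that a maximum weight matching $M'$ of $G'$ decomposes into a real part $M \subseteq E(G)$ together with a collection of auxiliary edges contributing a fixed total $cQ$, where $c$ is the number of absorbed vertices. For the correctly tuned $Q$, the real part $M$ has the target size $k$ (or the nearest feasible size, capturing the ``or less'' in the statement), and the weight of $M$ equals the weight of $M'$ minus $cQ$. Because the auxiliary contribution is common to all matchings sharing this cardinality profile, optimality transfers from $G'$ back to $G$; carrying the argument through with the $(1-\epsilon)$-approximate subroutine in place of an exact one then yields a real matching of weight at least $(1-\epsilon)$ times that of the optimal $k$-sized matching, possibly after rescaling $\epsilon$ by a constant.

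The main obstacle is ensuring that the multiplicative $(1-\epsilon)$ guarantee survives the reduction. Two issues interact. First, the large additive term $cQ$ is part of the objective on $G'$ but is \emph{not} part of the quantity we wish to approximate, so a careless transfer would let the subroutine's error on the auxiliary edges, of order $\epsilon \cdot cQ$, swamp the real matching weight and destroy the ratio; the analysis must isolate the real part cleanly and choose $Q$ small enough that this error is controlled. Second, because the subroutine is only approximate, the map from $Q$ to induced cardinality is only approximately monotone, so I must show the binary search still locates a $Q$ producing a real matching of the correct size while retaining near-optimal weight. Managing this interplay between the additive auxiliary weight, the cardinality constraint, and the multiplicative error is the crux; the round and space bounds then follow immediately by multiplying \citet{matching}'s $O(\log\log(n)\cdot(1/\epsilon)^{1/\epsilon})$ cost by the $O(\log(nW))$ search iterations.
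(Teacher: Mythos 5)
Your reduction is the same one the paper uses: augment $G$ with dummy vertices joined to every original vertex by uniform weight-$Q$ edges, binary search $Q$ over a range of size $O(nW)$, run \citet{matching}'s $(1-\epsilon)$-approximate matching as a black box, and discard the auxiliary edges at the end. Your error accounting also matches the paper's: the auxiliary contribution is a multiple of $Q$, the subroutine's $\epsilon$-error on it is of order $\epsilon\,(n/k)\cdot\mathrm{OPT}_{G,k}$ once $Q$ is normalized against the matching weight, and since $k=\Omega(n)$ this is absorbed by running the subroutine at a constant-factor-smaller error $\delta$ (the paper takes $\delta$ with $\epsilon=(c+1)\delta-\delta^2$ for $k\le cn$). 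So the architecture and the crux you identify are both right.

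The genuine gap is in how the cardinality constraint is enforced. You treat $Q$ as the sole control knob and propose to binary search until the induced real matching has the target size; you then correctly observe that with an approximate oracle the map from $Q$ to induced cardinality is only approximately monotone, and you leave that difficulty unresolved --- but it is not a difficulty you need to solve, because the paper enforces the size bound \emph{structurally} rather than via $Q$. It adds a \emph{prescribed number} of dummy vertices ($|U|=n-2k$ as the algorithm sets it), and argues by an exchange argument that for the relevant $Q$ every dummy is matched to a distinct original vertex (an unmatched dummy together with a real matched edge of weight at most $Q$ could be swapped to improve the matching). Once all dummies are absorbed, only $2k$ original vertices remain available to each other, so the real part of the matching has at most $k$ edges for \emph{every} such $Q$; the binary search is then only needed to locate the smallest $Q$ consistent with the self-referential normalization $Q=\frac{1}{k(1-\delta)}w(M)$ that drives the error bound, not to hit a target cardinality. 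Without specifying the dummy count, your plan has to extract an exactly-$k$-sized matching from a non-monotone, approximately computed cardinality curve, and as you note yourself, that step does not go through as stated.
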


\begin{algorithm}[b]
   \caption{Approximate $k$-Sized Matching}
   \label{alg:kmatch}
\begin{algorithmic}[1]
\STATE  Let $U$ be set of dummy vertices for $|U|=n-2k$\;\label{line:transstart}
\STATE  Let $\delta$ be the minimum of $\epsilon$ and the value satisfying $\epsilon = (c+1)\delta - \delta^2$ for $k \leq cn$\;
\STATE  $V'\gets V\cup U$\tcp*{Constructing the transformed graph}
\STATE  $E'\gets E\cup \{(u,v): u\in U, v\in V\}$\;
\WHILE{Binary search of $Q\in[nW]$ for the min $Q$ that results in $|M|\leq k$ and $\frac1{k(1-\delta)}w(M) \leq Q$}\label{line:binary}
\STATE    $w'(u,v) = Q$ for all $u\in U, v\in V$\;\label{line:transend}
\STATE    $M \gets \textsc{Match}(V',E')$\tcp*{\citet{matching}'s matching algorithm}\label{line:ghaf}
\STATE$M \gets M \setminus \{(u,v):(u,v)\in M, u\in U, v\in V\}$\tcp*{Remove edges not in $G$}
\ENDWHILE
\end{algorithmic}
\end{algorithm}

\section{Bounds on a matching-based hierarchical clustering algorithm}\label{sec:results}

We now introduce our main algorithm, \textit{Matching Affinity Clustering}. For revenue, we show it achieves a $\left(\frac13-\epsilon\right)$-approximation for graphs with $2^N$ vertices, and a $\left(\frac19-\epsilon\right)$-approximation in general.  Similarly, for value, we show it achieves a $\frac23\alpha$-approximation for graphs with $2^N$ vertices, and a $\frac13\alpha$-approximation in general, given an $\alpha$-approximation algorithm for minimum weighted $k$-sized matching in MPC.

\subsection{Matching Affinity Clustering}

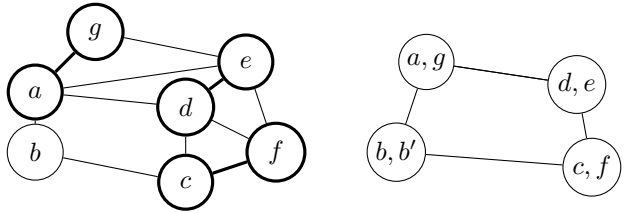
\begin{figure}[b]
  \centering
  \vspace{0cm}
  \begin{tikzpicture}[node distance=1.3cm,>=stealth',bend angle=45,auto,scale=0.8]

  \tikzstyle{place}=[circle,thick,draw=blue!75,fill=blue!20,minimum size=6mm]
  \tikzstyle{red place}=[place,draw=red!75,fill=red!20]
  \tikzstyle{transition}=[rectangle,thick,draw=black!75,
  			  fill=black!20,minimum size=4mm]

  \tikzstyle{every label}=[red]
  \begin{scope}
    \node[arn_bwbb] (A) at (0,-1.5) {$a$};
    \node[arn_bwb] (B) at (0,-2.5) {$b$};
    \node[arn_bwbb] (C) at (2.5,-3) {$c$};
    \node[arn_bwbb] (D) at (2.5,-1.75) {$d$};
    \node[arn_bwbb] (E) at (3.5,-1) {$e$};
    \node[arn_bwbb] (F) at (4,-2.5) {$f$} ;
    \node[arn_bwbb] (G) at (1, -0.5) {$g$};

    \path [-](A) edge node {} (B);
    \path [-](B) edge node {} (C);
    \path [-](A) edge node {} (D);
    \path [-](D) edge node {} (C);
    \path [-](A) edge node {} (E);
    \path [-, very thick](D) edge node {} (E);
    \path [-](D) edge node {} (F);
    \path [-, very thick](C) edge node {} (F);
    \path [-](E) edge node {} (F);
    \path [-, very thick](A) edge node {} (G);
    \path [-](E) edge node {} (G);  
  \end{scope}

  \begin{scope}[xshift=6cm]
     \node[arn_bwb] (AG) at (0.5,-1) {$a,g$};
    \node[arn_bwb] (B) at (0,-2.5) {$b,b'$};
    \node[arn_bwb] (CF) at (3.25,-2.75) {$c,f$};
    \node[arn_bwb] (DE) at (3,-1.375) {$d,e$};

    \path [-](AG) edge node {} (B);
    \path [-](B) edge node {} (CF);
    \path [-](AG) edge node {} (DE);
    \path [-](DE) edge node {} (CF);
    \path [-](DE) edge node {} (AG);  
  \end{scope}
\end{tikzpicture}
  \caption{An example of the first iteration of Matching Affinity Clustering. We start by doing a 6-sized matching on the current graph on the right. We then duplicate unmatched vertices and merge to create the next cluster graph with $2^n$ vertices on the right. In subsequent iterations, matches are perfect. Edge weights are the Average Linkage between clusters (non-edges are zero).}
  \label{fig:alg}
  \vspace{-0.5cm}
\end{figure}

Matching Affinity Clustering is defined in Algorithm \ref{alg:matchcluster}. Its predecessor, Affinity Clustering, uses the MST to select edges to merge across, which sometimes causes imbalanced clusters. This is one reason why it cannot achieve a good approximation for revenue or value (Theorem \ref{thm:affinity}). We fix this by, instead, using iterated maximum matchings (for similarity edge weights) and minimum perfect matchings (for dissimilarity edge weights). This ensures that on $n = 2^N$ vertices for some $N$, clusters will always be perfectly balanced. Otherwise, we will show how to achieve at least relative balance.

The algorithm starts with one cluster per each of $n$ vertices. Let $2^N$ be the smallest value such that $2^N\geq n$. It finds a maximum (resp. minimum) matching of size $k= 2n-2^N$ (line 8, this means it matches $2n-2^N$ vertices with $n-2^{N-1}$ edges) and merges these vertices (line 12, Figure~\ref{fig:alg}). Note that if $n=2^N$, then $k=0$ and the first step is a perfect matching. After this step, we have $2^{N-1}$ clusters. We then transform the graph into a graph of clusters with edge weights equal to the Average Linkage between clusters (lines 17-21). We find a maximum (resp. minimum) perfect matching of clusters in this new graph (line 10), then iterate.

\begin{algorithm}[tb]
   \caption{Matching Affinity Clustering}
   \label{alg:matchcluster}
\begin{algorithmic}[1]
\STATE {\bfseries Input:} A graph $G$ with weight function $w:E(G)\to \mathbb{Z}^+$.
\STATE $n\gets |V|$
\STATE $N \gets$ Such that $2^{N-1} < n \leq 2^N$\label{line:2n}
\STATE $\mathcal{C} \gets G $\tcp*{Current clustering graph, see Definition \ref{def:clustgraph}}
\WHILE{$n > 1$}
\STATE  Yield $\mathcal{C}$\tcp*{Output each level of the hierarchy}
\IF{First iteration}
\STATE  \label{line:startmerge} $M \gets \textsc{kMatch}(\mathcal{C},2n-2^{N})$\tcp*{Alg. 2 (Appendix)}
\ELSE
\STATE    $M \gets \textsc{Match}(\mathcal{C})$\tcp*{\cite{matching}}\label{line:merge}
\ENDIF
\STATE   $V \gets \{v = (i,j) : (i,j)\in M\}$\;\label{line:duplicate}
\STATE  $E\gets V\times V$\;
\STATE  $w \gets \emptyset$\;
\STATE  $n\gets |V|$\;
\STATE  Allocate each $C_j \in V$ to a machine\;
\FOR{Every machine $m_j$ on $C_j$ that merged $A_j, B_j \in V(\mathcal{C})$}\label{line:transform}
\FOR{Every other $C_k\in V$ that merged $A_k, B_k\in V(\mathcal{C})$}
\STATE $w(C_j, C_k) \gets \frac14(w_{\mathcal{C}}(A_j,A_k) + w_{\mathcal{C}}(A_j,B_k) + w_{\mathcal{C}}(B_j,A_k) + w_{\mathcal{C}}(B_j,B_k))$
\ENDFOR
\ENDFOR
\STATE $\mathcal{C} \gets (V,E,w)$
\ENDWHILE
\end{algorithmic}
\end{algorithm}

\subsection{Revenue approximation}\label{sec:const}

Now, we evaluate the efficiency and approximation factor of Matching Affinity Clustering with respect to revenue. In this section, edge weights represent the \textit{similarity} between points. Proofs are in the Appendix in the full version of the paper. Ultimately, we will show the following.

\begin{customthm}{\ref{thm:matchcluster}}
    \matchcluster
\end{customthm}

This will be a significant motivation for Matching Affinity Clustering's theoretical strength. As stated previously, one of the goals of Matching Affinity Clustering is to keep the cluster sizes balanced at each level. However, in the first step, note that Matching Affinity Clustering creates $n-2^{N-1}$ clusters of size 2, and the rest of the vertices form singleton clusters. Therefore, to use this benefit of Matching Affinity Clustering, we need to ensure that cluster sizes will never deviate \textit{too much}. 

\begin{lemma}\label{lem:half}
\half
\end{lemma}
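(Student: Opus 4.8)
\emph{The plan} is to induct on the clustering level $i \ge 1$, maintaining as an invariant that after the $i$th round of merges every cluster contains between $2^{i-1}$ and $2^i$ of the original data points. Since the smallest possible cluster is then within a factor of two of the largest, this invariant immediately yields a minimum size ratio of $2^{i-1}/2^i = 1/2$, which is the claim. Throughout I would take the size of a cluster to mean its number of genuine input vertices, so that the phantom copies used in the first round to pad unmatched vertices (such as $b'$ in Figure~\ref{fig:alg}) contribute nothing to size.

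For the base case I would analyze the first round directly: $\textsc{kMatch}$ forms $n - 2^{N-1}$ matched pairs, producing that many size-$2$ clusters, while the remaining $2^N - n$ vertices persist as size-$1$ clusters, for a total of exactly $2^{N-1}$ clusters. Every cluster then has size in $\{1,2\} = [2^0,2^1]$, so the invariant holds at $i=1$ with ratio exactly $1/2$ (or ratio $1$ in the special case $n = 2^N$, where $k=0$ and all clusters have size $2$). For the inductive step I would assume that after round $i$ all sizes lie in $[2^{i-1},2^i]$ and the cluster count is the power of two $2^{N-i}$; since round $i+1$ merges each surviving cluster with exactly one other, each resulting cluster has size equal to a sum of two values in $[2^{i-1},2^i]$, hence in $[2^i,2^{i+1}]$, the count halves to $2^{N-i-1}$, and the invariant is restored with ratio at least $2^i/2^{i+1} = 1/2$.

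The step I expect to require the most care is justifying that every round past the first pairs up \emph{all} clusters, even though \citet{matching}'s subroutine only guarantees a $(1-\epsilon)$-approximate maximum matching rather than an exact perfect one. I would resolve this from two facts: first, the first-round padding fixes the cluster count at the power of two $2^{N-1}$, which remains a power of two under repeated halving, so a perfect matching always exists on the complete cluster graph $\mathcal{C}$; and second, because $\mathcal{C}$ is complete, the merge step is purely a choice of pairing, so any clusters left uncovered by the approximate matching can be paired arbitrarily---this affects only the revenue/value quality, never the balance. The ``whp'' qualifier would then be inherited entirely from the matching subroutine, exactly as flagged in the preliminaries, and would play no role in the balance argument itself.
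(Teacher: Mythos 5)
Your proof is correct and follows essentially the same route as the paper's: establish that after the first round every cluster holds one or two genuine vertices (the paper phrases this as each padded $2$-cluster containing at least one real vertex, since a phantom is forced to pair with its original by the arbitrarily large duplicate edge weight), then observe that pairwise merging preserves the factor-of-two spread between the smallest and largest possible real sizes. If anything, your write-up is the more careful one, since you state the inductive invariant $[2^{i-1},2^i]$ explicitly and address why rounds after the first can be treated as perfect matchings on a power-of-two cluster count, a point the paper's proof passes over silently.
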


After every matching, the algorithm creates a new graph with vertices representing clusters and edges representing the average linkage between clusters. We will call this a clustering graph.

\begin{definition}\label{def:clustgraph}
    A \textbf{clustering graph} $\mathcal{C}(G,C)$ for graph $G$ and clustering $C = \{C_1,\ldots,C_k\}$ of $G$ is a complete graph with vertex set $V = C$. Its edge weights are the average linkage between clusters. Specifically, for vertices $v_{C_i}$ and $v_{C_j}$ in $\mathcal{C}(G,C)$ corresponding to clusters $C_i$ and $C_j$ where $i\neq j$, the weight of the edge between these vertices is:
    \[w_{\mathcal{C}(G,C)}(v_{C_i},v_{C_j}) = \frac1{|C_i| \cdot |C_j|}\sum_{u\in C_i, w\in C_j} w_G(u,w).\]
\end{definition} 

The fact that the edge weights in the clustering graph are the average linkage between clusters denotes the similarities between Matching Affinity Clustering and Average Linkage. Essentially, we are trying to optimize for average linkage at each step, but instead of merging two clusters, we merge many pairs of clusters at once with a maximum matching.
 
 Since Matching Affinity Clustering computes this graph, we must show how to efficiently transform a clustering graph at the $i$th level, $\mathcal{C}(G,C^i)$ with clustering $C^i$, into a clustering at the $i+1$th level, $\mathcal{C}(G,C^{i+1})$ with clustering $C^{i+1}$.

\begin{lemma}\label{lem:transform}
\transform
\end{lemma}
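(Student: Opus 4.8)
\textbf{Proof proposal for Lemma~\ref{lem:transform}.}

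The plan is to show that computing $\mathcal{C}(G, C^{i+1})$ from $\mathcal{C}(G, C^i)$ is exactly the averaging operation already written out in lines 17--21 of Algorithm~\ref{alg:matchcluster}, and that this operation fits within the stated MPC budget. The key observation is that the average-linkage edge weights of the new clustering graph can be computed \emph{locally} from the old clustering graph's edge weights, without ever touching the original graph $G$. First I would verify the algebraic identity: if cluster $C_j \in C^{i+1}$ is the merge of subclusters $A_j, B_j \in C^i$, and similarly $C_k$ merges $A_k, B_k$, then the average linkage between $C_j$ and $C_k$ decomposes as
\begin{align*}
w_{\mathcal{C}(G, C^{i+1})}(v_{C_j}, v_{C_k}) &= \frac{1}{|C_j|\cdot|C_k|}\sum_{u\in C_j, v\in C_k} w_G(u,v)\\
&= \tfrac14\bigl(w_{\mathcal{C}(G,C^i)}(v_{A_j},v_{A_k}) + w_{\mathcal{C}(G,C^i)}(v_{A_j},v_{B_k})\\
&\quad + w_{\mathcal{C}(G,C^i)}(v_{B_j},v_{A_k}) + w_{\mathcal{C}(G,C^i)}(v_{B_j},v_{B_k})\bigr).
\end{align*}
The factor $\tfrac14$ works out precisely because every cluster at level $i+1$ is the union of exactly two level-$i$ subclusters of \emph{equal size} (guaranteed by the perfect matching after the first round and by Lemma~\ref{lem:half}): with $|A_j|=|B_j|$ and $|A_k|=|B_k|$, the four sub-sums over $A_j\times A_k$, $A_j\times B_k$, $B_j\times A_k$, $B_j\times B_k$ partition $C_j\times C_k$ into four blocks of equal cardinality, so each contributes with equal weight $\tfrac14$.

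Having established the identity, I would then argue the MPC implementation. Assign each new cluster $C_j \in C^{i+1}$ to its own machine (line 15), and have that machine hold the relevant rows of the old clustering graph's weighted adjacency structure, namely the weights incident to $A_j$ and $B_j$. Since $\mathcal{C}(G,C^i)$ has at most $n$ vertices and each machine stores a constant number of length-$O(n)$ weight vectors, the per-machine space is $\widetilde{O}(n)$ as required. Each new edge weight $w(C_j,C_k)$ is a sum of four previously-computed quantities, so after a single round of communication — in which machines exchange the $A$- and $B$-indexed weights needed — every machine computes all its incident new edge weights locally. This yields $\mathcal{C}(G,C^{i+1})$ in one round.

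The main obstacle, and the place I would be most careful, is the equal-size assumption underpinning the clean $\tfrac14$ factor. The lemma's hypothesis states that clusters ``are all composed of two subclusters in $C^i$,'' and I would lean on Lemma~\ref{lem:half} (balance $1/2$) together with the perfect-matching structure of every round past the first to justify that these two subclusters have matching sizes, so that the unweighted average over $C_j\times C_k$ really does split into four equally-weighted sub-averages. If one instead only knew $|A_j|,|B_j|$ were within a factor of two, the correct formula would be a size-weighted combination rather than a flat average; I would note that the algorithm as written uses the flat $\tfrac14$ and that this is exactly justified in the regime the lemma targets. A secondary, purely bookkeeping point is confirming that the communication pattern genuinely fits in one round under the $\widetilde{O}(n)$ constraint — each of the $\le n$ machines needs $O(n)$ weights, for $O(n^2)$ total words of communication spread across $n$ machines, which is within budget.
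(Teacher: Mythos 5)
Your proposal matches the paper's own proof essentially step for step: both establish the flat $\tfrac14$ averaging identity by appealing to the equal sizes of the level-$i$ subclusters, and both implement it in MPC by assigning one machine per level-$(i+1)$ cluster holding the $O(n)$ incident weights from $\mathcal{C}(G,C^i)$, giving one round and $\widetilde{O}(n)$ space. Your extra caution about the equal-size assumption (and what would change if sizes were only within a factor of two) is a reasonable elaboration of what the paper asserts more briefly, but it is not a different argument.
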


This will eventually be used for our proof of efficiency of Theorem \ref{thm:matchcluster}. For now, we return our attention to the approximation factor. Our approximation proof is going to observe the total merge cost and revenue across all merges on a single level of the hierarchy. For concision, we introduce the following notation to describe cost and revenue over a single clustering.

\begin{definition}
    The \textbf{clustering revenue} based off of some superclustering $C'$ of $C$ on graph $G$ is the sum of the merge revenues of combining clusters in $C$ to create clusters in $C'$. It is denoted by $\clusterrev_G(C, C')$.
\end{definition}
\begin{definition}
    The \textbf{clustering cost} based off of some superclustering $C'$ of $C$ on graph $G$ is the sum of the merge costs of combining clusters in $C$ to create clusters in $C'$. It is denoted by $\clustercost_G(C,C')$.
\end{definition}

In order to prove an approximation for revenue, we want to compare each clustering revenue and cost. First, we must show that Matching Affinity Clustering has a large clustering revenue at any level.

\begin{lemma}\label{lem:rev}
\lemrev
\end{lemma}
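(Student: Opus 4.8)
The plan is to peel the clustering revenue down to a single weighted sum over the matched pairs and then lower-bound the per-merge coefficient uniformly. By definition the clustering revenue at level $i$ is the sum of the merge revenues of the pairs chosen by $M_i$, and each merge revenue is $\mergerev_G(A,B)=(n-|A|-|B|)\sum_{a\in A,\,b\in B}w_G(a,b)$. The first move is to rewrite the inner double sum through the clustering graph: Definition~\ref{def:clustgraph} gives $\sum_{a\in A,\,b\in B}w_G(a,b)=|A|\,|B|\,w_{\mathcal{C}(G,C^i)}(v_A,v_B)$, so that
\[
\clusterrev_G(C^i,C^{i+1})=\sum_{(A,B)\in M_i}\bigl(n-|A|-|B|\bigr)\,|A|\,|B|\;w_{\mathcal{C}(G,C^i)}(v_A,v_B).
\]
Since every clustering-graph weight is nonnegative, it then suffices to bound the coefficient $\gamma(A,B):=(n-|A|-|B|)\,|A|\,|B|$ from below by a quantity not depending on the particular pair and factor it out of the sum. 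Note the number of clusters at level $i$ is $2^{N-i}$ (reading the exponent in the statement as $N$, the dyadic rounding of $n$), which is what drives the $2^{N-i-1}$ term.

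Next I would control the cluster sizes. When $n=2^N$ (so $p=0$), every cluster produced so far is perfectly balanced, $|A|=|B|=2^i$, and $\gamma(A,B)=(2^N-2^{i+1})\,2^{2i}=2^{3i+1}(2^{N-i-1}-1)$ exactly; this already delivers the bound with equality and is a useful sanity check on the exponents. When $n$ is not a power of two ($p=1$), I would invoke Lemma~\ref{lem:half}: after the first round the minimum ratio between cluster sizes is $1/2$, and since the balanced size at level $i$ is $2^i$, this confines every size to the range $[2^{i-1},2^i]$. Morally, the factor $2^{-2p}$ in the statement is the loss absorbed by the product $|A|\,|B|$ when each factor can fall from $2^i$ to $2^{i-1}$.

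The main work, and where I expect the real difficulty, is the unbalanced case: producing a uniform lower bound on $\gamma(A,B)$ over $|A|,|B|\in[2^{i-1},2^i]$ that recovers exactly the stated constant $2^{3i-2p+1}(2^{N-i-1}-1)$. For the meaningful levels ($i$ small relative to $N$, where $n>2^{N-1}\ge 3\cdot 2^{i}$) the partial derivatives show $\gamma$ is nondecreasing in each argument, so its box-minimum sits at $|A|=|B|=2^{i-1}$; but there $|A|+|B|=2^i$ rather than $2^{i+1}$, so the factor $n-|A|-|B|$ is simultaneously larger. The delicate point is that the shrinkage of $|A|\,|B|$ and the growth of $n-|A|-|B|$ do not cancel cleanly against the balanced reference $2^{3i+1}(2^{N-i-1}-1)$, and a naive box minimization leaves a gap of a constant factor; the top levels $i\in\{N-2,N-1\}$ must be handled separately (the bound there is trivial since $2^{N-i-1}-1$ is $0$ or $1$). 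I would aim to close this by tracking $\gamma$ as a function of the total size deficit $(2^i-|A|)+(2^i-|B|)$ and, if the per-pair minimization is too lossy, by additionally using the global constraint $\sum_{C\in C^i}|C|=n$ together with the limited supply of $2^N-n$ singletons to rule out too many matched pairs sitting simultaneously at the minimum size. Finally, the ``whp'' qualifier is inherited verbatim from the approximate matching subroutine of~\citet{matching} used to build $M_i$, since no additional randomness enters this argument.
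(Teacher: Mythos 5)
Your setup is exactly the paper's: decompose the clustering revenue into merge revenues over $M_i$, rewrite $\sum_{a\in A,b\in B}w_G(a,b)$ as $|A|\,|B|\,w_{\mathcal{C}(G,C^i)}(v_A,v_B)$ via Definition~\ref{def:clustgraph}, handle $p=0$ by exact balance, and for $p=1$ use Lemma~\ref{lem:half} to sandwich every cluster size in $[2^{i-1},2^i]$. Where you go wrong is in the last step: there is no constant-factor gap to close, and the extra machinery you sketch (tracking size deficits, global constraints on the supply of padding vertices) is unnecessary. The stated constant $2^{3i-2p+1}(2^{N-i-1}-1)$ factors as $2^{2i-2p}\cdot(2^{N}-2^{i+1})$, so it is obtained by bounding the two factors of $\gamma(A,B)=(n-|A|-|B|)\,|A|\,|B|$ \emph{independently}, each in its own worst direction: $n-|A|-|B|\ge 2^{N}-2^{i+1}$ from the upper bound $|A|,|B|\le 2^{i}$, and $|A|\,|B|\ge 2^{2i-2p}$ from the lower bound $|A|,|B|\ge 2^{i-1}$ when $p=1$ (equal to $2^{2i}$ when $p=0$). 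That these two worst cases cannot occur simultaneously is irrelevant for a product of two lower bounds on nonnegative quantities; multiplying them gives the target exactly. This independent bounding is the paper's entire argument.

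Indeed, your own observation already proves more than you need: you note that the box minimum of $\gamma$ sits at $|A|=|B|=2^{i-1}$, where $\gamma=(2^{N}-2^{i})\,2^{2i-2}$, which strictly exceeds the target $(2^{N}-2^{i+1})\,2^{2i-2}$. The confusion arises because you compare against the balanced reference $2^{3i+1}(2^{N-i-1}-1)$ rather than the actual claim, which is already discounted by $2^{-2p}$ precisely to absorb the imbalance. Had you done that one line of arithmetic, your proof would be complete and would coincide with the paper's. (Your remark about the top levels is also mostly moot: monotonicity of $\gamma$ holds for all $i\le N-2$, and at $i=N-1$ the claimed bound is zero and hence vacuous.) The ``whp'' provenance you give is correct.
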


Now we address clustering cost. This time, we must show an upper bound for clustering cost at the $i$th level in terms of clustering revenue at the $i$th level. Let $M_i$ be the matching Matching Affinity Clustering uses to merge $C^i$ into $C^{i+1}$. Then $M_i$ is a $(1-\epsilon)$-approximation of the optimum $M_i^*$.

\begin{lemma}\label{lem:costb}
\costb
\end{lemma}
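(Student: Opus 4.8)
The plan is to sandwich both sides of the inequality by the total matching weight $w(M_i):=\sum_{(A,B)\in M_i}w_{\mathcal{C}(G,C^i)}(v_A,v_B)$: bound $\clustercost_G(C^i,C^{i+1})$ from above and $\clusterrev_G(C^i,C^{i+1})$ from below, both in terms of $w(M_i)$, and then divide. The lower bound on revenue is exactly Lemma~\ref{lem:rev}, which I may assume, so the real work is an upper bound on the clustering cost in terms of $w(M_i)$, after which the two estimates combine and the powers of two cancel.

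First I would expand the clustering cost as a sum of merge costs over the matched pairs and convert it into clustering-graph weights. For a pair $(A,B)\in M_i$, writing out $\mergecost_G(A,B)$ and replacing each inner sum $\sum_{a\in A,\,c\in C_j}w_G(a,c)$ by $|A|\,|C_j|\,w_{\mathcal{C}(G,C^i)}(v_A,v_{C_j})$ (the defining identity of the clustering graph), and bounding every cluster size by $2^i$ using the balance guarantee of Lemma~\ref{lem:half}, yields
\begin{align*}
\clustercost_G(C^i,C^{i+1}) \le 2^{3i}\sum_{(A,B)\in M_i}\left(\sum_{j} w_{\mathcal{C}(G,C^i)}(v_A,v_{C_j}) + \sum_{j} w_{\mathcal{C}(G,C^i)}(v_B,v_{C_j})\right),
\end{align*}
where the inner sums run over the clusters other than $A$ and $B$. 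This is the cost upper bound I will feed into the next step.

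The heart of the argument is to control this double sum. Writing $m=|M_i|$, note that past the first round the matching is perfect, so the clustering graph restricted to the matched clusters is a complete graph $K_{2m}$ with $2m=2^{N-i}$ vertices; moreover the double sum equals $2\sum_{e\notin M_i}w_{\mathcal{C}(G,C^i)}(e)$, since each edge joining two distinct matched pairs is counted once from each of its endpoints' pairs. Because $S_{2m}$ acts transitively on perfect matchings, $M_i$ is a color class of some proper $(2m-1)$-edge-coloring (a $1$-factorization) of $K_{2m}$; the remaining $2m-2$ color classes are perfect matchings that partition exactly the non-matching edges. Each such class $M'$ is itself a matching, so $w(M')\le w(M^*)\le \tfrac{1}{1-\epsilon}w(M_i)$ by the hypothesis $M_i\ge(1-\epsilon)M^*$. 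Summing over the $2m-2=2(2^{N-i-1}-1)$ classes gives $\sum_{e\notin M_i}w_{\mathcal{C}(G,C^i)}(e)\le \tfrac{2(2^{N-i-1}-1)}{1-\epsilon}w(M_i)$, hence
\begin{align*}
\clustercost_G(C^i,C^{i+1}) \le \frac{2^{3i+2}}{1-\epsilon}\bigl(2^{N-i-1}-1\bigr)\,w(M_i).
\end{align*}

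Finally I would substitute Lemma~\ref{lem:rev}, which rearranges to $(2^{N-i-1}-1)\,w(M_i)\le 2^{2p-1}\,2^{-3i}\,\clusterrev_G(C^i,C^{i+1})$; plugging this into the display collapses the powers of two via $2^{3i+2}\cdot 2^{2p-1-3i}=2^{2p+1}$ and gives the claim. I expect the combinatorial step of paragraph three to be the main obstacle: one must verify that $M_i$ genuinely occurs as a factor of a $1$-factorization (so the remaining factors are honest matchings comparable to $M^*$, which is what pins down the exact constant rather than a slightly weaker one). A secondary subtlety is the first round, where $n$ is not a power of two: there $M_i$ is only a $k$-matching and the clustering graph is complete on an odd or unmatched vertex set, so the clean $K_{2m}$ decomposition does not apply verbatim and that level must be argued separately (the statement's hypothesis on $M_i$ and the extra $2^{2p}$ slack inherited from Lemma~\ref{lem:rev} are precisely what absorb this case).
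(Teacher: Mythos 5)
Your proposal is correct and follows essentially the same route as the paper: the paper likewise reduces the clustering cost to twice the total weight of the non-matching edges of the clustering graph, decomposes those edges into $2(|V(\mathcal{C}(G,C^i))|-2)$ perfect matchings (by iteratively peeling factors off a regular multigraph rather than invoking a $1$-factorization of $K_{2m}$, but this is the same idea), bounds each by $w(M_i^*)\le w(M_i)/(1-\epsilon)$, and then substitutes Lemma~\ref{lem:rev} so the powers of two cancel. The one place you stop short is the first round when $n$ is not a power of two, which you correctly flag but do not carry out; the paper completes that case by separately decomposing the edges between matched and unmatched clusters into matchings via a bipartite degree count, with the extra $2^{2p}$ slack absorbing the constants exactly as you predict.
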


Now we are ready to prove the approximation factor for Matching Affinity Clustering. We combine Lemma \ref{lem:costb} with properties of revenue from Section~\ref{sec:cost} to obtain an expression for revenue in terms of $(n-2)$ times the sum of weights in the graph. We use this as a bound for the optimal revenue.

\begin{lemma}\label{lem:apx}
\lemapx
\end{lemma}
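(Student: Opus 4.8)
The plan is to relate the total revenue $\rev_G(T)$ of the tree $T$ produced by Matching Affinity Clustering to an upper bound on the optimum, using the identities from Section~\ref{sec:cost} together with the per-level bound of Lemma~\ref{lem:costb}. Write $S = \sum_{i,j\in V(G)} w_G(i,j)$ for the total edge weight. First I would decompose the global quantities by level. The total cost splits as $\cost_G(T) = 2S + \sum_i \clustercost_G(C^i,C^{i+1})$, since the $2S$ term records the two ``missing'' leaf contributions and the remaining merge costs are grouped by the level at which the merge occurs; dually $\rev_G(T) = \sum_i \clusterrev_G(C^i,C^{i+1})$, because each merge-revenue is counted exactly once. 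Combining the level decomposition of cost with the identity $\rev_G(T) = nS - \cost_G(T)$ gives $\rev_G(T) = (n-2)S - \sum_i \clustercost_G(C^i,C^{i+1})$.

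Next I would apply Lemma~\ref{lem:costb} at every level and sum. Since $\clustercost_G(C^i,C^{i+1}) \le \tfrac{2^{2p+1}}{1-\epsilon}\clusterrev_G(C^i,C^{i+1})$ for each $i$, summing over levels and using the revenue decomposition yields $\sum_i \clustercost_G(C^i,C^{i+1}) \le \tfrac{2^{2p+1}}{1-\epsilon}\rev_G(T)$. Substituting this into the expression for $\rev_G(T)$ produces the self-referential inequality $\rev_G(T) \ge (n-2)S - \tfrac{2^{2p+1}}{1-\epsilon}\rev_G(T)$, which I solve for $\rev_G(T)$ to obtain
\[
\rev_G(T) \ge \frac{(n-2)S}{1 + 2^{2p+1}/(1-\epsilon)} = \frac{(1-\epsilon)(n-2)S}{(1-\epsilon) + 2^{2p+1}}.
\]

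To turn this into an approximation guarantee I would bound the optimum. For any tree $T'$ and any pair $i,j$, the subtree $T'[i\lor j]$ contains at least the two leaves $i$ and $j$, so $|\nonleaves(T'[i\lor j])| \le n-2$; hence $\rev_G(T') \le (n-2)S$, and in particular $\mathrm{OPT} \le (n-2)S$. Combining with the previous display gives $\rev_G(T) \ge \tfrac{1-\epsilon}{(1-\epsilon)+2^{2p+1}}\,\mathrm{OPT}$. For $n=2^N$ we have $p=0$, so the factor is $\tfrac{1-\epsilon}{3-\epsilon} = \tfrac13 - \tfrac{2\epsilon}{3(3-\epsilon)} \ge \tfrac13 - \epsilon$; for general $n$ we have $p=1$, so the factor is $\tfrac{1-\epsilon}{9-\epsilon} = \tfrac19 - \tfrac{8\epsilon}{9(9-\epsilon)} \ge \tfrac19 - \epsilon$ (both elementary inequalities hold for all reasonable $\epsilon$). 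These are precisely the claimed $(1/3-\epsilon)$- and $(1/9-\epsilon)$-approximations, with the same $\epsilon$ as the underlying matching routine, and the whp qualifier is inherited from Lemma~\ref{lem:costb}.

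The main obstacle I expect lies not in this final algebra but in justifying the level decompositions cleanly and, in particular, in confirming that Lemma~\ref{lem:costb} may be applied at \emph{every} level with the same global exponent $p$. The delicate case is the first merge when $n$ is not a power of $2$, where the clusters are genuinely imbalanced (sizes $1$ and $2$); one must verify that the $p=1$ bound absorbs this first-level irregularity so that the per-level inequalities can be summed uniformly. Were Lemma~\ref{lem:costb} to hold only for $i>0$ (as in the variant with the ``$p=0$ or $i>0$'' restriction), the $i=0$ term would have to be extracted and bounded separately before the summation, so pinning down exactly which form of the cost bound is available at the first level is the step that requires the most care.
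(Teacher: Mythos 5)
Your proposal is correct and follows essentially the same route as the paper's proof: decompose cost and revenue by level, apply Lemma~\ref{lem:costb} at each level, use the duality $\rev_G(T)=nS-\cost_G(T)$ to obtain the same self-referential inequality, and bound the optimum by $(n-2)S$. Your worry about the first level is moot here, since the version of the cost bound actually stated as Lemma~\ref{lem:costb} holds for all levels (its proof handles the $i=0$, $p=1$ case separately), so the uniform summation you perform is exactly what the paper does.
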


Finally, the round complexity is limited by the iterations and calls to the matching algorithm. The space complexity is determined by the clustering graph construction.

\begin{lemma}\label{lem:bounds}
\bounds
\end{lemma}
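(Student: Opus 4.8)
The plan is to bound the two complexity measures separately, and in both cases to reduce the analysis to (i) the number of iterations of the while loop and (ii) the per-iteration cost, which is dominated by the matching subroutine together with the single-round clustering-graph transform of Lemma~\ref{lem:transform}.

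First I would count iterations. Recall $2^{N-1} < n \le 2^N$, so $N = \lceil \log_2 n\rceil = O(\log n)$. The first iteration pairs up $2n - 2^N$ vertices into $n - 2^{N-1}$ clusters of size two and leaves $2^N - n$ singletons, for a total of exactly $(n - 2^{N-1}) + (2^N - n) = 2^{N-1}$ clusters, a power of two. Every later iteration computes a perfect matching on the current clustering graph, which halves the number of clusters; since the count starts at a power of two it stays an integer power of two and reaches $1$ after exactly $N-1$ further iterations. Hence the loop runs exactly $N = O(\log n)$ times. This is a structural count of cluster multiplicities and does not rely on the size-ratio balance guarantee of Lemma~\ref{lem:half}.

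Next I would bound the per-iteration round cost. Each iteration performs one matching call and one transform; by Lemma~\ref{lem:transform} the transform costs one round, contributing only $O(\log n)$ rounds overall, which is dominated by the matching cost. For iterations $2,\dots,N$ the matching is perfect and is computed by \citet{matching}'s algorithm in $O(\log\log n \cdot (1/\epsilon)^{1/\epsilon})$ rounds, so summed over $O(\log n)$ iterations this is $O(\log n \log\log n \cdot (1/\epsilon)^{O(1/\epsilon)})$. The first iteration is where the regimes diverge. When $n = 2^N$ the first matching is itself perfect, found directly by \citet{matching}'s algorithm at the same $O(\log\log n \cdot (1/\epsilon)^{1/\epsilon})$ cost, giving the total $O(\log n \log\log n \cdot (1/\epsilon)^{O(1/\epsilon)})$ and hence the first bound. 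In general the first iteration instead finds a $k$-sized matching with $k = 2n - 2^N$, which by Theorem~\ref{thm:kmatch} costs $O(\log(nW)\log\log n \cdot (1/\epsilon)^{1/\epsilon})$ rounds owing to the binary search over $Q \in [nW]$. Adding this one term to the $O(\log n \log\log n \cdot (1/\epsilon)^{1/\epsilon})$ from the remaining iterations and using $\log n \le \log(nW)$ collapses the sum to $O(\log(nW)\log\log n \cdot (1/\epsilon)^{O(1/\epsilon)})$, the second bound.

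For space I would observe that every primitive the algorithm invokes already fits in $\widetilde O(n)$ per machine: the matching routines of \citet{matching} and Theorem~\ref{thm:kmatch} use $O(n/\mathrm{polylog}(n))$ space, and Lemma~\ref{lem:transform} guarantees the transform uses $\widetilde O(n)$ per machine in one round. The only point needing care is that the clustering graph is a complete graph on up to $O(n)$ cluster-vertices and is therefore dense; but the algorithm allocates one cluster-vertex per machine (see Algorithm~\ref{alg:matchcluster}), so each machine stores only its $O(n)$ incident edges, keeping per-machine space $\widetilde O(n)$. Taking the maximum over all primitives yields the claimed bound. The main obstacle is the round accounting of the first iteration: correctly isolating the $\log(nW)$ factor to the single $k$-matching call (and seeing that it collapses to just $\log n$ in the power-of-two case), and verifying that this one expensive term is absorbed additively rather than multiplied into the per-iteration product, so that the final bounds take the stated dominated form.
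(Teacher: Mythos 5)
Your proposal is correct and follows essentially the same structure as the paper's proof: count $O(\log n)$ iterations of the while loop, charge each iteration the cost of one call to \citet{matching}'s matching algorithm plus the one-round transform of Lemma~\ref{lem:transform}, isolate the single $k$-matching call of Theorem~\ref{thm:kmatch} to the first iteration, and observe that all primitives fit in $\widetilde{O}(n)$ space per machine. One point worth noting: your round accounting is actually \emph{more} careful than the paper's. The paper's proof concludes an intermediate total of $O(\log(nW)\log(n)\log\log(n)\cdot(1/\epsilon)^{O(1/\epsilon)})$ rounds for the general case --- a product that would only arise if every iteration paid the $\log(nW)$ binary-search factor --- which does not match the lemma's stated bound of $O(\log(nW)\log\log(n)\cdot(1/\epsilon)^{O(1/\epsilon)})$. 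Your version correctly treats the $k$-matching call as a single additive term, $O(\log(nW)\log\log n) + O(\log n\log\log n) = O(\log(nW)\log\log n)$ (up to the $\epsilon$ factors), which is what the lemma actually claims. Your explicit count that the number of clusters becomes exactly $2^{N-1}$ after the first iteration, so that all subsequent matchings are perfect and the loop terminates after exactly $N$ iterations, is also a useful detail the paper leaves implicit.
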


Lemmas \ref{lem:apx} and \ref{lem:bounds} are sufficient to prove Theorem \ref{thm:matchcluster}. Our algorithm achieves an approximation for revenue efficiently in the MPC model. In addition, the algorithm creates a desirably near-balanced hierarchical clustering tree.


We now prove the approximation bound tightness for Matching Affinity Clustering when $|V|=2^N$. Recently, \citet{linkage} proved by counterexample that Average Linkage achieves at best a $(1/3 + o(1))$-approximation on certain graphs. We find that Matching Affinity Clustering acts the same as Average Linkage on these graphs, and so has at best a $(1/3 + o(1))$-approximation.


\begin{theorem}\label{thm:lb}
\lb
\end{theorem}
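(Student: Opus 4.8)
The plan is to reduce directly to \citet{linkage}'s existing lower bound for Average Linkage. They exhibit a graph $G^*$ on $2^N$ vertices on which Average Linkage (under worst-case tie-breaking) builds a hierarchy $T_{AL}$ with $\rev_{G^*}(T_{AL}) \le (1/3 + o(1))\,\mathrm{OPT}$. Since $n = 2^N$, the very first matching Matching Affinity Clustering computes is already perfect (it matches all $2n-2^N = n$ vertices into $n/2$ pairs), so no vertex is ever duplicated and, by Lemma~\ref{lem:half}, every later matching is perfect as well; the algorithm therefore outputs a complete balanced binary tree $T_{MAC}$ of height $N$. The goal is to show that on $G^*$ this $T_{MAC}$ has revenue no larger than $T_{AL}$, so that the same $(1/3+o(1))$ bound transfers to Matching Affinity Clustering and proves Theorem~\ref{thm:lb}.

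The heart of the argument is an induction on the level $i$, showing that the clustering $C^i$ maintained by Matching Affinity Clustering agrees with an intermediate stage of Average Linkage's greedy process on $G^*$, and that the maximum-weight perfect matching on the clustering graph $\mathcal{C}(G^*, C^i)$ pairs exactly the clusters Average Linkage would merge next. The key leverage is Definition~\ref{def:clustgraph}: the edge weights of $\mathcal{C}(G^*, C^i)$ are precisely the average linkages between the current clusters, so the heaviest available matching edges correspond to the pairs of largest average linkage, which are exactly Average Linkage's greedy choices. I would exploit the symmetry of $G^*$ (its clustering graphs carry a highly regular weight pattern) to argue that, up to relabeling, the maximum-weight perfect matching is forced to be this ``Average-Linkage matching,'' so $C^{i+1}$ again matches a later stage of Average Linkage. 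Chaining this up the hierarchy yields $\rev_{G^*}(T_{MAC}) = \rev_{G^*}(T_{AL}) \le (1/3 + o(1))\,\mathrm{OPT}$.

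Two wrinkles must be handled. First, Matching Affinity Clustering merges $n/2^{i+1}$ pairs \emph{simultaneously} at each level, whereas Average Linkage merges a single pair at a time; the reduction works only because $G^*$ is symmetric enough that Average Linkage's aggregated sequential greedy merges reproduce exactly the balanced merges of one matching level, so that $T_{AL}$ is itself the balanced tree $T_{MAC}$. I would verify this directly from the structure of $G^*$. Second, the matching subroutine only returns a $(1-\epsilon)$-approximate maximum matching, and only whp. I would argue that on $G^*$ every perfect matching whose weight is within a $(1-\epsilon)$ factor of the maximum induces a tree whose revenue differs from $\rev_{G^*}(T_{AL})$ by at most an $o(1)$ fraction of $\mathrm{OPT}$, so the slack and the randomness can be absorbed into the $o(1)$ term and do not let the algorithm escape the bound.

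I expect the main obstacle to be precisely this mismatch between simultaneous matching-based merges and sequential greedy merges: proving that on $G^*$ the forced maximum-weight perfect matching, rather than granting the algorithm extra freedom to find a better tree, is pinned down by the symmetry of the clustering graph to reproduce Average Linkage's poor hierarchy, and that this remains true for \emph{all} near-maximum matchings, not merely the exact optimum.
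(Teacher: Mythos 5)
Your proposal is correct and follows essentially the same route as the paper: it uses the \citet{linkage} counterexample graph, observes that the clustering-graph edge weights are average linkages so the (near-)maximum matching is forced to merge the heavy $1+\epsilon$ ``row'' cliques first exactly as Average Linkage does, and then transfers the $(1/3+o(1))$ bound, absorbing the $(1-\epsilon)$ matching slack into the $o(1)$ term. The only cosmetic difference is that you aim for exact equality $T_{MAC}=T_{AL}$, whereas the paper only needs (and only argues) that both algorithms fully merge the rows before anything else, which suffices to invoke the existing lower bound.
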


\subsection{Value approximation}
Now we consider Matching Affinity Clustering when edge weights represent \textit{distances} instead of similarities. In this context, instead of running a $k$-sized maximum matching and then iterative general maximum matchings, we run a $k$-sized minimum matching and then iterative general minimum perfect matchings. Therefore, this algorithm is dependent on the existence of a $k$-sized minimum matching algorithm in MPC. Due to its similarity to other classical problems with $1+\epsilon$ solutions in MPC~\citep{matching, maximalmatch}, we conjecture:

\begin{conjecture}\label{claim}
There exists an MPC algorithm that achieves a $(1+\epsilon)$-approximation for minimum weight $k$-sized matching whp that uses $\widetilde{O}(n)$ machine space.
\end{conjecture}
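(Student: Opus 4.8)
The plan is to mirror the structure we already used for the maximum case (Theorem~\ref{thm:kmatch} and Algorithm~\ref{alg:kmatch}): reduce minimum weight $k$-sized matching to an \emph{unconstrained} matching problem on an augmented graph via dummy vertices, and then invoke an MPC matching primitive as a black box. Given the input on $n$ vertices, I would add a set $U$ of $n-2k$ dummy vertices, connect every dummy to every real vertex (adding no dummy--dummy edges), and ask for a minimum weight \emph{perfect} matching of the resulting graph on $2(n-k)$ vertices. Any perfect matching must cover all $n-2k$ dummies, each by a distinct real vertex, so it uses exactly $n-2k$ real--dummy edges and hence exactly $k$ real--real edges; deleting the dummy edges leaves precisely a $k$-sized matching of $G$. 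In Matching Affinity Clustering this instantiates with $k=n-2^{N-1}$, padding $n$ up to the power of two $2^N$ so the augmented graph has an even vertex count.

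If the dummy edges are given weight $0$, their contribution to any perfect matching is identically $0$, so the weight of a perfect matching equals the weight of its induced $k$-sized real matching. This makes the reduction ratio-preserving in the strongest sense: a $(1+\epsilon)$-approximate minimum weight perfect matching of the augmented graph restricts to a $(1+\epsilon)$-approximate minimum weight $k$-sized matching of $G$, the $\mathrm{whp}$ guarantee is inherited directly from the randomized subroutine, and the extra $n-2k$ dummy vertices keep the space within $\widetilde{O}(n)$. For robustness against a primitive that returns a near-optimal but not exactly perfect matching, one can instead weight the dummy edges by a parameter $Q$ and binary search for $Q$, exactly as in Algorithm~\ref{alg:kmatch}, tuning $Q$ so that the returned matching covers the correct number of real vertices while its weight stays within a $(1+\delta)$ factor of optimal.

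The hard part --- and the reason this is stated as a conjecture rather than a theorem --- is supplying the missing primitive: a $(1+\epsilon)$-approximate minimum weight perfect matching algorithm in MPC with $\widetilde{O}(n)$ space. The asymmetry with the maximum case is fundamental. For \emph{maximum} weight matching with nonnegative weights, maximization already drives the matching toward being perfect, so heavy dummy edges plus an approximate maximum matching subroutine suffice and the binary search merely trims the cardinality. For \emph{minimum} weight matching the opposite holds: with nonnegative weights the unconstrained optimum is the empty matching, so the perfectness (equivalently, cardinality) constraint is load-bearing and cannot be relaxed away. The tempting shortcut of complementing weights, $w'(e)=W-w(e)$, and calling the maximum weight matching algorithm of~\citet{matching} does convert minimum weight perfect matchings into maximum weight perfect matchings, but it destroys the approximation guarantee: a $(1-\epsilon)$ multiplicative guarantee for $w'$ becomes only an \emph{additive} $\epsilon|M|W$ guarantee for $w$, which is far too weak.

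I would therefore expect the real work to be a direct MPC algorithm for minimum weight perfect matching, obtained either by porting the round-compression and local-peeling machinery behind the known $(1\pm\epsilon)$ maximum and maximal matching results~\citep{matching, maximalmatch} to the cardinality-constrained minimum setting, or by implementing a weight-scaling (Gabow--Tarjan style) scheme within the $O(\log\log(n)\cdot(1/\epsilon)^{O(1/\epsilon)})$ round and $\widetilde{O}(n)$ space budget. Verifying that either approach attains a genuine multiplicative $(1+\epsilon)$ factor while respecting both the perfect-matching constraint and the MPC resource bounds is the central obstacle, and it is exactly the step I cannot push through here.
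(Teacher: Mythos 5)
The statement you were asked to prove is stated in the paper as Conjecture~\ref{claim}, and the paper supplies \emph{no} proof of it: it is introduced only as a hypothesis (``Due to its similarity to other classical problems with $1+\epsilon$ solutions in MPC\ldots we conjecture''), and Theorem~\ref{thm:val} is then proved \emph{conditionally} on an $\alpha$-approximate minimum weight $k$-sized matching algorithm existing. So there is no paper proof to compare against, and your proposal --- which explicitly declines to claim a proof --- is in the same honest position as the paper itself.

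That said, your analysis is sound and consistent with the paper's framing. Your dummy-vertex reduction is exactly the structure of Algorithm~\ref{alg:kmatch} and Theorem~\ref{thm:kmatch} transplanted to the minimum setting, and your accounting is correct: $n-2k$ dummies force $n-2k$ real--dummy edges in any perfect matching of the augmented graph, leaving exactly $k$ real--real edges, and zero-weight dummy edges make the reduction weight-preserving. You also correctly locate why this does not discharge the conjecture: the load-bearing primitive is a $(1+\epsilon)$-approximate minimum weight \emph{perfect} matching in MPC, which is not supplied by \citet{matching} (whose guarantee is for maximum weight matching), and your observation that complementing weights via $w'(e)=W-w(e)$ converts a multiplicative guarantee into a useless additive one is the right diagnosis of why the maximum-case machinery does not transfer. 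The one thing I would flag is that your fallback of ``binary search on $Q$ exactly as in Algorithm~\ref{alg:kmatch}'' still presupposes a minimum-matching oracle underneath, so it does not buy you anything toward closing the gap --- the conjecture remains open for precisely the reason you identify, and that matches the paper's own treatment.
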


Given such an algorithm, we can show that Matching Affinity Clustering approximates value.

\begin{theorem}\label{thm:val}
\matchclusterval
\end{theorem}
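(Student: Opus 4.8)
The plan is to mirror the revenue analysis of Theorem~\ref{thm:matchcluster}, swapping the assumed $\alpha$-approximate minimum $k$-sized matching in for the maximum matching and swapping the value decomposition in for the revenue decomposition. Since $\val_G(T)$ has exactly the same algebraic form as $\cost_G(T)$, I would first reuse the merge-cost identity from Section~\ref{sec:cost} to write the value of the tree $T$ produced by Matching Affinity Clustering as $\val_G(T) = 2\sum_{i,j} w_G(i,j) + \sum_{\ell}\clustercost_G(C^\ell, C^{\ell+1})$, so that bounding the value reduces to bounding the clustering cost accrued at each level $\ell$. The governing intuition, which motivates using \emph{minimum} matchings here, is that for distances the heavy (dissimilar) edges should be deferred to high merges: a minimum-weight matching pairs the clusters whose mutual average linkage is smallest, so that large cross-weights are resolved late in $T$ and contribute a large $|\leaves(T[i\lor j])|$, which is precisely what value rewards.

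The technical heart is a per-level structural lemma, the distance-context analog of Lemmas~\ref{lem:rev} and~\ref{lem:costb}. For a clustering $C$ whose clusters $C_i$ were formed by merging $A_i$ and $B_i$ at the previous level, I would establish a lower bound of the form
\[
\frac{\sum_i w(A_i,B_i)}{\sum_i |A_i|\,|B_i|} \;\geq\; 2\alpha\,\frac{\sum_i\bigl(w(A_i)+w(B_i)\bigr)}{\sum_i\bigl(|A_i|(|A_i|-1)+|B_i|(|B_i|-1)\bigr)},
\]
i.e. the average cross-weight of the matched pairs stays within a factor $2\alpha$ of the average internal weight one level deeper. I would prove this by comparing the matching the algorithm selects against the pairing induced by the optimal tree on the same set of clusters: approximate optimality of the minimum matching controls the aggregate cross-weight, while the balance guarantee of Lemma~\ref{lem:half} (perfect when $n=2^N$, ratio at least $1/2$ otherwise) converts the per-pair size ratios into the global size factors appearing in the denominators above.

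With this per-level bound in hand, I would sum over the $O(\log n)$ levels and compare against the optimal value, following the same charging scheme that yields Average Linkage's $2/3$-approximation. The inherent $2/3$ arises from the averaging of leaf counts across a balanced merge; the assumed matching accuracy supplies the factor $\alpha$; and in the general case the loss of perfect balance (clusters differing in size by up to a factor of two, by Lemma~\ref{lem:half}) introduces an additional factor, degrading $\tfrac23\alpha$ to $\tfrac13\alpha$. This parallels the degradation from $1/3$ to $1/9$ in the revenue setting, the difference in the size of the loss coming from the one-sided nature of the value comparison as opposed to the two-sided cost/revenue duality exploited in Lemma~\ref{lem:apx}.

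For the complexity claims I would argue exactly as in Lemma~\ref{lem:bounds}: there are $O(\log n)$ levels, each invoking the assumed minimum $k$-sized matching algorithm once in $O(f(n))$ rounds and then rebuilding the clustering graph in a single round with $\widetilde{O}(n)$ space by Lemma~\ref{lem:transform}, for a total of $O(f(n)\log n)$ rounds and $\widetilde{O}(n)$ space per machine. I expect the main obstacle to be the per-level structural lemma: unlike Average Linkage, which commits to a single locally optimal pair, Matching Affinity Clustering commits to many simultaneous merges, so I must show that an aggregate $\alpha$-approximate minimum matching still yields the same local-ratio guarantee a greedy single merge would, and that this guarantee composes cleanly across levels once the balance factor is accounted for. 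Making the constants line up so that the averaging factor, the matching factor $\alpha$, and the balance factor combine to exactly $\tfrac23\alpha$ and $\tfrac13\alpha$ is where the careful bookkeeping will lie.
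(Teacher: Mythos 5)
Your plan matches the paper's proof: the per-level inequality you write down is exactly the paper's key lemma, and the paper likewise combines it with the balance guarantee of Lemma~\ref{lem:half} and a Cohen-Addad-style induction on truncated subtrees $T_C$ (yielding $\val(T_C)\geq(\tfrac13)^p(\tfrac23)^{1-p}\alpha\sum_i|C_i|w(C_i)$), with the round and space bounds argued as in Lemma~\ref{lem:bounds}. The only substantive divergences are that the paper proves the key lemma by two averaging arguments over the algorithm's \emph{own} lower levels --- exhibiting a feasible alternative matching among descendants of each $A_i\cup B_i$ that the $\alpha$-approximate minimum matching must nearly beat --- rather than by comparison with the optimal tree, and that your opening suggestion to route everything through $\val_G(T)=2\sum_{i,j}w_G(i,j)+\sum_\ell\clustercost_G(C^\ell,C^{\ell+1})$ is not what closes the argument, since value has no duality partner to play the role revenue plays against cost; the direct induction is what does the work.
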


The proof for this result is quite similar to the proof for the $2/3$-approximation of Average Linkage by \citet{cohen-addad}. Instead of focusing on single merges, however, we observed the entire set of merges across a clustering layer in our hierarchy. Then we can make the same argument about the value across an entire level of the hierarchy, and use the cluster balance from Lemma~\ref{lem:half} to achieve our result.

If Conjecture~\ref{claim} holds, then the approximation factors become $2/3-\epsilon$ and $1/3-\epsilon$ respectively. We see a similar pattern as the revenue result, where the algorithm nears the state-of-the-art $2/3$-approximation achieved by Average Linkage \citep{cohen-addad} on datasets of size $n=2^N$, and still achieves a constant factor in general. Finally, we can additionally show the former approximation is tight. See the construction and proofs in the Appendix.

\begin{theorem}\label{thm:lbval}
\lbval
\end{theorem}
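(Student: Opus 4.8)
The plan is to parallel the argument used for the revenue lower bound in Theorem~\ref{thm:lb}: I will exhibit a single graph $G$ on $n = 2^N$ vertices in the distance (value) context on which Matching Affinity Clustering is forced to reproduce exactly the greedy merges of Average Linkage, and on which Average Linkage attains value at most a $(2/3 + o(1))$ fraction of the optimum. Since the $\frac23\alpha$-approximation of Theorem~\ref{thm:val} specializes to $2/3$ when $\alpha = 1$, pairing it with a matching $(2/3 + o(1))$ upper bound on the achievable ratio establishes tightness on graphs of size $2^N$. (Note the statement should read ``value'' rather than ``revenue,'' as this is the value subsection.)

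First I would design the hard instance. In the value context Matching Affinity Clustering merges via minimum-weight (perfect) matchings, which is the batch analogue of Average Linkage's rule of repeatedly combining the pair of clusters of smallest average distance. I will construct $G$ with enough symmetry that, at every level of the hierarchy, the minimum-weight perfect matching on the current clustering graph is unique and pairs up precisely the clusters that successive Average Linkage merges would combine. Restricting to $n = 2^N$ ensures every matching is perfect and clusters stay perfectly balanced (Lemma~\ref{lem:half}), eliminating the dummy-vertex complications of the general case so that the intended groupings are the only ones the algorithm can produce.

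Next I would establish behavioural equivalence: on $G$, the tree $T$ returned by Matching Affinity Clustering coincides with the Average Linkage tree. This is the step that departs from the single-merge setting and where I expect the main obstacle to lie, since Matching Affinity Clustering commits to an entire matching at once rather than one merge at a time. The crux is to verify that the unique minimum matching at each level realizes exactly the Average Linkage choices, so the two algorithms yield identical trees; the symmetry built into $G$ is chosen precisely to make this matching unique and to make the clustering-graph edge weights (the average linkages of Definition~\ref{def:clustgraph}) rank the candidate merges the same way Average Linkage does.

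Finally, with $T$ identified, the argument reduces to a value computation. I would evaluate $\val_G(T)$ directly from the merge structure and exhibit a competing tree $T^{*}$ (the intended grouping that separates the most dissimilar points high in the hierarchy, so that heavy-distance edges accrue large $|\leaves(T[i\lor j])|$) whose value is strictly larger, then take the ratio and let $N \to \infty$ to drive it to $2/3 + o(1)$. This last part is routine once equivalence is in hand, being a finite symbolic sum over the levels of the two trees; the only genuinely delicate point remains the matching-uniqueness and equivalence claim established in the preceding step.
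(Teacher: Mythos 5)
Your plan has a genuine gap: it is a strategy outline whose load-bearing step --- exhibiting a concrete graph on which (a) the minimum-weight perfect matching at every level is unique and reproduces the Average Linkage merge sequence, and (b) Average Linkage is known to attain only a $(2/3+o(1))$ fraction of the optimal value --- is never carried out. For the revenue bound (Theorem~\ref{thm:lb}) this template works because \citet{linkage} supply an explicit hard instance for Average Linkage and revenue; no analogous value instance is cited or constructed here, so there is nothing to reduce to. There is also a structural obstacle you gloss over: Average Linkage in the distance context merges one pair at a time and can produce arbitrarily unbalanced trees, whereas Matching Affinity Clustering commits to a perfect matching per level and always produces balanced trees, so ``behavioural equivalence'' can only hold on instances symmetric enough that Average Linkage itself happens to build a balanced tree --- a property you would have to engineer and verify, not assume. (You are right, though, that the theorem statement's ``revenue'' is a typo for ``value.'')

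The paper's proof sidesteps Average Linkage entirely with a short, self-contained construction. Take $G$ to be the complete bipartite graph on parts $A,B$ with $|A|=|B|=n$, delete a perfect matching (i.e.\ set $w(a_i,b_i)=0$, $w(a_i,b_j)=1$ for $i\neq j$, and all intra-part weights $0$). The deleted matching is a zero-weight, hence minimum, perfect matching, so Matching Affinity Clustering may begin by merging each $\{a_i,b_i\}$. The resulting cluster graph is a clique on $n$ vertices with uniform edge weights, and by \citet{dasgupta} every hierarchy on a uniform-weight clique has the same value, which here contributes $\tfrac43(n^3-n)$ to the value on $G$. Meanwhile the tree that first merges all of $A$, then all of $B$, then joins them puts every unit edge into the final cluster of size $2n$, for value $2(n^3-2n^2)$. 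The ratio tends to $2/3$. If you want to salvage your approach you would need to either construct and analyze a value-tight Average Linkage instance from scratch (at which point you are doing at least as much work as the direct argument) or adopt a direct construction of this kind.
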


\subsection{Round comparison to Affinity Clustering}
In this section, we only consider Matching Affinity Clustering in the similarity edge weight context. The round complexities of Matching Affinity Clustering and regular Affinity Clustering depend on graph qualities, and in certain cases one outperforms the other. On dense graphs with $n^{1+c}$ edges for constant $c$, \citet{affinity} showed that Affinity Clustering runs in $\lceil\log(c/\epsilon)\rceil + 1$ rounds. On sparse graphs, it runs in $O(\log^2n)$ rounds, and it runs in $O(\log n)$ rounds when given access to a distributed hash table. We saw that Matching Affinity Clustering runs in $O\left(\log(n)\log\log(n)\cdot(1/\epsilon)^{O(1/\epsilon)}\right)$ rounds on graphs of size $2^N$, and $O\left(\log(nW)\log\log(n)\cdot(1/\epsilon)^{O(1/\epsilon)}\right)$ in general for max edge weight $W$.

There are two situations where our algorithm outperforms Affinity Clustering. First, if the graph is sparse and the number of vertices is $2^N$, then our algorithm runs in $O\big(\log(n)\allowbreak\log\log(n)\cdot (1/\epsilon)^{O(1/\epsilon)}\big)$ rounds, and Affinity Clustering runs in $O(\log^2(n))$ rounds. Otherwise, if the graph is sparse, Matching Affinity Clustering performs better as long as the largest edge weight is $W = o\left(\frac{\exp(\log^2(n)/\log\log(n))}{n}\right)$. This is strictly larger than constant. If $W$ is large, Affinity Clustering is slightly more efficient. Finally, if the graph is dense, Affinity Clustering achieves an impressive constant round complexity, and is therefore more efficient. In any case, Matching Affinity Clustering is an efficient and highly scalable algorithm.
\section{Affinity Clustering approximation bounds}

In this section and all following sections, we provide the proofs for all theorems and lemmas introduced in this paper. It is broken down into sections based off of the sections corresponding to sections in the paper itself. 
 
We start by proving Theorem~\ref{thm:affinity}. \citet{affinity} were in part motivated by the lack of theoretical guarantees for distributed hierarchical clustering algorithms. Thus, they introduced Affinity Clustering, based off of \citet{boruvka}'s algorithm for parallel MST. In every parallel round of Bor{\r u}vka's algorithm, each connected component (starting with disconnected vertices) selects the lowest-weight outgoing edge and adds that to the solution, eventually creating an MST. Affinity Clustering creates clusters of each component. Note that Affinity Clustering was evaluated on a graph with weights representing dissimilarities between vertices, as opposed to our representation where weights are similarities. It is easy to verify that Affinity Clustering functions equivalently using max spanning tree in our representation. \citet{affinity} theoretically validate their algorithm by defining a cost function based off the cost of the minimum Steiner tree for each cluster in the hierarchy, however they do not motivate this metric. Therefore, it is more interesting to evaluate in terms of revenue and value. We ultimately show:

\begingroup
\def\thetheorem{\ref{thm:affinity}}
\begin{theorem}
\affinity
\end{theorem}
\addtocounter{theorem}{-1}
\endgroup 


We will split this into two cases for each objective function. We start with revenue. First, note that when Affinity Clustering merges clusters in common connected components, it creates one supercluster (ie, cluster of clusters) for all clusters in that component. Therefore, it may merge many clusters at once. A brief counterexample of why such a hierarchy does not work is when the max spanning tree is a star. Here, all vertices will be merged to a cluster in one round for a revenue of zero, which is not approximately optimal. To evaluate this algorithm, we must consider all possible ways Affinity Clustering might decide to resolve edges on the max spanning tree of the input graph. We propose a graph family that shows Affinity Clustering cannot achieve a good revenue approximation. The hierarchy we use for comparison is one that Matching Affinity Clustering would find, not including the $k$-matching step. We prove the following lemma.

\begin{lemma}\label{lem:affinityrev}
\affinityrev
\end{lemma}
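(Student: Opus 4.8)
The plan is to exhibit an explicit weighted graph family $G_n$ on which Affinity Clustering is forced, by the structure of its maximum spanning tree, to commit a premature near-complete merge, while the balanced tree that Matching Affinity Clustering produces keeps the heavy edges inside small clusters. Concretely, I would take $G_n$ to be a heavy star: a center $v_0$ joined to leaves $v_1,\ldots,v_{n-3}$ by spokes of large weight $W$, together with two \emph{spare} vertices $x,y$ joined to each other by an edge of medium weight and to the rest of the graph only by light edges. To sidestep the difficulty flagged in the text --- that one must rule out \emph{every} way Affinity might resolve a Bor\r uvka round --- I would give all edges distinct weights, ordered so that the spokes dominate, the $(x,y)$ edge dominates $x$'s and $y$'s other edges, and everything else is negligible. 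This pins down Affinity's execution uniquely and removes all tie-breaking ambiguity.

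The first analytic step is to read off Affinity's opening round. Each leaf $v_i$ selects its spoke to $v_0$ (its uniquely heaviest incident edge) and $v_0$ selects some spoke, so the whole core $\{v_0,\ldots,v_{n-3}\}$ is joined into a single connected component; meanwhile $x$ and $y$ each select the edge $(x,y)$, forming a second component of size two. Affinity therefore merges the core into one flat supercluster (a tree node with $n-2$ leaf children) in a single round. The key consequence is that for every spoke $(v_0,v_i)$ the least common ancestor is this node, whose subtree already contains $n-2$ of the $n$ leaves, so $|\nonleaves(T[v_0\lor v_i])| = 2$. Summing the merge revenues, Affinity's total revenue is $O(Wn)$: the $\Theta(n)$ heavy spokes each contribute only $\Theta(W)$, and the remaining edges are light.

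Next I would lower bound the optimum by the revenue of the balanced binary tree $T^\ast$ that Matching Affinity Clustering builds on $G_n$. In $T^\ast$ the center $v_0$ meets successively larger balanced clusters, so the spoke to a leaf absorbed at level $k$ is merged when $v_0$'s cluster still has size about $2^k \ll n$; that spoke's merge then retains $\Theta(n)$ non-leaves. A direct summation over the $\log n$ levels gives $\rev_{G_n}(T^\ast) = \Theta(Wn^2)$, so the optimum is $\Omega(Wn^2)$. Dividing the two bounds yields an approximation ratio of $O(Wn)/\Omega(Wn^2) = O(1/n)$, which is the claim, and one checks it is tight since Affinity's revenue is genuinely $\Theta(Wn)$.

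The main obstacle is robustness rather than arithmetic: I must be certain that \emph{no} execution of Affinity escapes the bad merge. This is exactly what the distinct-weight star buys --- it forces a unique Bor\r uvka round in which the entire core collapses flat and the spares stay separate --- so the remaining work is the routine verification that (i) no core leaf or center ever selects an edge leaving the core, (ii) the spares select only their mutual edge, and (iii) the two revenue estimates have the claimed orders. The secondary subtlety is keeping $|\nonleaves|$ on the spokes equal to a small positive constant (here $2$) rather than $0$: the two spare vertices are included precisely so that the forced flat merge has size $n-2$ instead of $n$, which makes the separation a clean, tight $\Theta(1/n)$ rather than a degenerate zero.
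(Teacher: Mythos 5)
Your construction has a genuine gap, and it is exactly at the point you flag as ``the main obstacle.'' The distinct-weight trick does pin down which connected component forms in Affinity's first Bor\r uvka round, but it does not pin down how that multi-way component merge is resolved into a tree --- and that resolution, not the round structure, is the real source of ambiguity. The paper explicitly grants Affinity Clustering the benefit of resolving a component merge into an arbitrary sequence of binary merges (``we must consider all possible ways Affinity Clustering might decide to resolve edges on the max spanning tree''), and its own proof is structured to be robust to this. Under that convention your heavy star fails: if the core $\{v_0,v_1,\dots,v_{n-3}\}$ is resolved as a caterpillar that starts from $v_0$ and absorbs one leaf at a time, the spoke $(v_0,v_i)$ is merged while the cluster has only $i+1$ vertices, so it retains $n-i-1$ non-leaves, and the total revenue is $\sum_{i} W(n-i-1)=\Theta(Wn^2)$ --- matching your own lower bound on the optimum up to a constant. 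Your $O(Wn)$ bound holds only under the strict flat-merge reading (one node with $n-2$ leaf children), which is precisely the reading the paper declines to rely on.

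The structural reason the paper's bipartite example survives this and yours does not is worth noting: there, the first round produces two components, and the $\Theta(N^2)$ unit-weight edges that carry almost all of the graph's weight run \emph{between} those components, so their least common ancestor is the root and they contribute zero non-leaves no matter how the within-component merges are ordered; only the $O(N)$ star spokes inside each component can ever be credited, capping Affinity at $O(N^2)$ against an optimum of $\Theta(N^3)$. Your star concentrates all of the heavy weight on edges that lie \emph{inside} the prematurely formed component, so a favorable resolution can still credit them fully. To repair your approach you would need to move the bulk of the edge weight onto edges that are forced to cross between distinct first-round components --- at which point you have essentially rediscovered the paper's construction.
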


We now move on to value.

\begin{lemma}\label{lem:affinityval}
\affinityval
\end{lemma}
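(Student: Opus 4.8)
The plan is to mirror the revenue argument of Lemma~\ref{lem:affinityrev} with a ``dual'' construction suited to the distance context. In the value setting edge weights are dissimilarities, so Affinity Clustering runs Bor{\r u}vka on the \emph{minimum} spanning tree: in each round every current component contracts its lightest outgoing edge. The asymmetry I will exploit is that every edge Affinity contracts is light, yet a light contraction can pull a \emph{heavy} non-tree edge into a tiny cluster, and once two endpoints first share a component their least common ancestor---hence $|\leaves(T[i\lor j])|$---is frozen no matter how the rest of the hierarchy is later built.

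Concretely I would take $n=3m$ and build $m$ disjoint triangle gadgets. Gadget $i$ on vertices $x_i,y_i,z_i$ has two ``skeleton'' edges $(x_i,y_i),(y_i,z_i)$ of weight $\epsilon$ and one ``chord'' $(x_i,z_i)$ of weight $W$, with every inter-gadget edge set to a common weight $\epsilon'$ satisfying $\epsilon<\epsilon'\ll W/n^2$ (e.g. $\epsilon=1,\ \epsilon'=2,\ W=n^3$). First I would analyze Affinity. Since each vertex's lightest incident edge is a skeleton edge of its own gadget (weight $\epsilon$ beats every inter-gadget weight $\epsilon'$ and the chord weight $W$), the edges contracted in the first Bor{\r u}vka round connect exactly the three vertices of each gadget and never cross gadgets. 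Thus the components after round~1 are precisely the gadgets, so for every $i$ the chord endpoints $x_i,z_i$ already lie in a size-$3$ component; whatever (possibly nondeterministic) merges occur afterwards, $x_i\lor z_i$ is that gadget node and $|\leaves(T[x_i\lor z_i])|=3$. Summing, the chords contribute only $3mW=nW$ to $\val_G(T)$, while the skeleton edges (also frozen at subtree size $3$) and the $\Theta(n^2)$ inter-gadget edges (resolved at subtree size at most $n$) are chosen so their total contribution is $o(nW)$; hence $\val_G(T)=O(nW)$ for \emph{every} run of Affinity.

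Next I would lower-bound the optimum by exhibiting one good tree: split the vertices into $L=\{x_i,y_i:i\in[m]\}$ and $R=\{z_i:i\in[m]\}$ and merge $L$ and $R$ at the root. Then each chord $(x_i,z_i)$ has its endpoints on opposite sides of the root cut, so $x_i\lor z_i$ is the root and $|\leaves|=n$, giving total chord value $mWn=\tfrac13 n^2W$, so the optimal tree has value $\Omega(n^2W)$. Combining the two bounds, Affinity achieves value at most $O(nW)=O(1/n)\cdot\Omega(n^2W)$, i.e. an $O(1/n)$-factor of the optimum, as claimed.

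The main obstacle is the weight bookkeeping that must satisfy two competing requirements at once: the chords must be heavy enough that the value is dominated by edges Affinity is \emph{provably forced} to resolve in constant-size subtrees, yet Bor{\r u}vka must still contract the light skeleton edges first so that each gadget collapses in isolation. This pins down the ordering $\epsilon<\epsilon'\ll W/n^2$ and requires verifying that the ``negligible'' edges (skeleton plus the $\Theta(n^2)$ inter-gadget edges) contribute $o(nW)$ even when generously assigned subtree size $n$. I also need the cap $|\leaves(T[x_i\lor z_i])|=3$ to be robust to every tie-break and every later merge order---which holds because the LCA of two vertices is fixed the instant they first co-occur in a component---and I must check that the comparison tree can place \emph{all} chords across a single root cut simultaneously, which is possible precisely because each gadget contributes only one chord.
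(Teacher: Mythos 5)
Your proof is correct, but your construction is genuinely different from the paper's. The paper takes $G$ to be a perfect matching of unit-weight edges on $4n$ vertices (all other edges zero) and argues that a zero-weight spanning structure can group two matched pairs into a single size-$4$ component, so every unit edge is resolved in a subtree with at most $4$ leaves, giving value $O(n)$ against an optimum of $\Omega(n^2)$ obtained by splitting every matched pair at the root. You instead build $n/3$ triangle gadgets in which two light skeleton edges force Bor{\r u}vka's first round to contract each gadget in isolation, trapping the heavy chord in a $3$-leaf subtree; the chords dominate Affinity's value, and the $L$/$R$ split at the root certifies $\mathrm{OPT}=\Omega(n^2W)$. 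The two constructions buy slightly different things. The paper's is simpler (two weight levels rather than three), but its bad outcome depends on Affinity resolving the spanning tree in one particular adversarial way among many possibilities; your gadget forces the size-$3$ components under \emph{every} tie-break, since each vertex's lightest incident edge (unique for $x_i$ and $z_i$, and irrelevant for $y_i$'s tie) stays inside its own gadget, so your $O(1/n)$ bound holds for every execution of the algorithm rather than for an adversarially chosen one. Your weight bookkeeping ($\epsilon<\epsilon'\ll W/n^2$) checks out: with $W=n^3$ the chords contribute $nW=n^4$ to Affinity's tree, the skeleton and inter-gadget edges contribute $O(n^3)=o(nW)$ even when granted subtree size $n$, and the exhibited comparison tree places all $m$ chords across the root cut simultaneously for value $\tfrac13 n^2W$.
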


Finally, we simply combine the results of Lemma~\ref{lem:affinityrev} and Lemma~\ref{lem:affinityval} to prove Theorem~\ref{thm:affinity}.

\section{Experiments}\label{sec:experiments}
We now empirically validate these results to further motivate Matching Affinity Clustering. The algorithm is implemented as a sequence of maximum or minimum perfect matchings, and the testing software is provided in supplementary material. The software as well as the five UCI datasets \cite{uci} ranging between 150 and 5620 data points are exactly the same as those that were used for small-scale evaluation of Affinity Clustering \cite{affinity}. The data is represented by a vector of features. Similarity-based edge weights are the cosine similarity between vectors, and dissimilarity-based edge weights are the L2 norm. Most data and algorithms are deterministic and thus have consistent outcomes, but for any randomness, we run the experiment 50 times and take the average. Just like the evaluation of Affinity Clustering, our evaluation runs hierarchical clustering algorithms on $k$-clustering problems until we find a $k$-clustering within the hierarchy. This was compared to the ground truth clustering for the dataset.

We evaluate performance using the Rand index, which was designed by \citet{rand} to be similar to accuracy in the unsupervised context. This is an established and commonly used metric for evaluating clustering algorithms and was used in the evaluation of Affinity Clustering.

\begin{definition}[\citet{rand}]\label{def:rand}
Given a set $V = \{v_1,\ldots,v_n\}$ of $n$ points and two clusterings $X = \{X_1,\ldots,X_r\}$ and $Y = \{Y_1,\ldots,Y_s\}$ of $V$, we define:
\begin{itemize}
\item $a$: the number of pairs in $V$ that are in the same cluster in $X$ and in the same cluster in $Y$.
\item  $b$: the number of pairs in $V$ that are in different clusters in $X$ and in different clusters in $Y$.
\end{itemize}

\begin{figure*}
\centering 
\subfigure[Rand Index on Raw Data ]{\label{fig:sub1}\includegraphics[width=80mm]{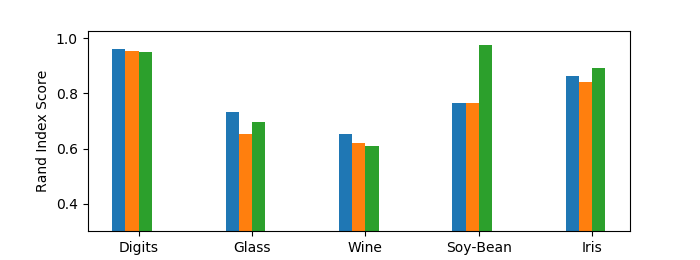}}
\subfigure[Rand Index on Filtered Data]{\label{fig:sub2}\includegraphics[width=80mm]{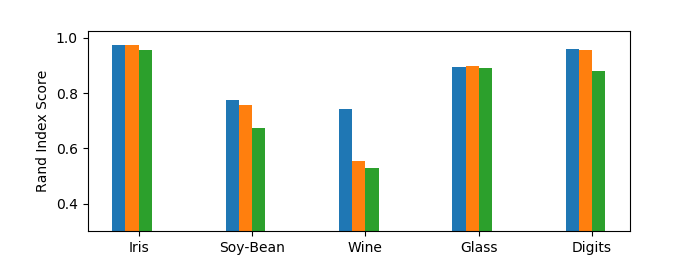}}
\subfigure[Cluster Balance on Raw Data]{\label{fig:sub3}\includegraphics[width=80mm]{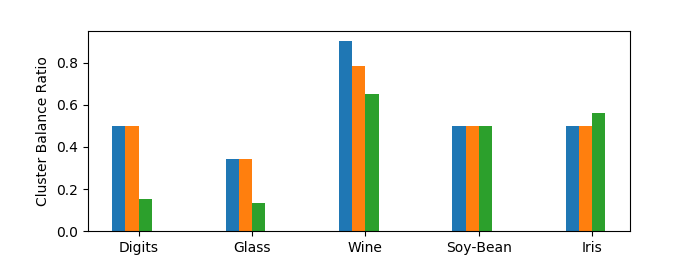}}
\subfigure[Cluster Balance on Filtered Data]{\label{fig:sub4}\includegraphics[width=80mm]{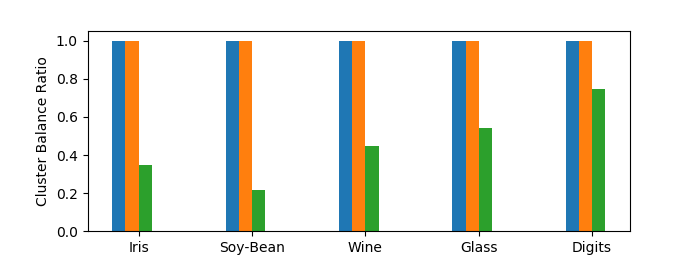}}
\caption{Rand Index and cluster balance on raw and filtered (randomly pruned for balance and $n=2^N$) UCI datasets. Legend (bars, left to right): Max Matching Affinity Clustering is blue, Min Matching Affinity Clustering is orange, Affinity Clustering is green.}
\end{figure*}

The Rand index $r(X,Y)$ is $(a+b)/\binom{n}{2}$. By having the ground truth clustering $T$ of a dataset, we define the Rand index score of a clustering $X$ to be $r(X,T)$.
\end{definition}

\subsection{Comparison with Affinity Clustering}

In addition, we are interested in evaluating the balance between cluster sizes in the clusterings, which indicates how good our algorithms are at evaluating balanced data. We use the \textit{cluster size ratio} of a clustering, which was observed in \citet{affinity}. For a clustering $X=\{X_1,\ldots,X_r\}$, the size ratio is $\min_{i,j\in[r]} |X_i|/|X_j|$.

In Figure \ref{fig:sub1}, we see the Rand indices of Max Matching Affinity Clustering (ie, in the similarity context), Min Matching Affinity Clustering (ie, in the distance context), and Affinity Clustering. A Rand index is between 0 and 1, where higher scores indicate the clustering is more similar to the ground truth. Matching Affinity Clustering performs similarly to state of the art algorithms like Affinity Clustering on all data except the Soy-Bean dataset. A full evaluation on other algorithms (see the Appendix) illustrates that Matching Affinity Clustering outperforms other algorithms like Random Clustering and Average Linkage.

Figure \ref{fig:sub2} depicts the same information but on a slightly modified dataset. Here, we randomly remove data until (1) the dataset is of size $2^N$, and (2) ground truth clusters are balanced. We did this 50 times and took the average results. This is motivated by Matching Affinity Clustering's stronger theoretical guarantees on datasets of size $2^N$ and ensured cluster balance. As expected, Matching Affinity Clustering performs consistently better than Affinity Clustering on filtered data, albeit by a a small margin in many cases. This shows that, experimentally, Matching Affinity Clustering performs better on balanced datasets of size $2^N$.

Finally, Figures \ref{fig:sub3} and \ref{fig:sub4} depict the cluster size ratios on the raw and filtered data respectively. In theory, at every level in the hierarchy of Matching Affinity Clustering, no cluster can be less than half as small as another (Lemma~\ref{lem:half}). However, in our evaluation, we are comparing a single $k$-clustering, which may not precisely correspond to a level in the hierarchy. In this case, we take some clusters from the first level with fewer than $k$ clusters and the last level with more than $k$ clusters. Therefore, since cluster sizes double at each level, the lower bound for the cluster size ratio is now $1/4$. This is reflected Figure~\ref{fig:sub3}, where Matching Affinity Clustering stays consistently above this minimum, and often exceeds it by quite a bit. On the filtered data (Figure~\ref{fig:sub4}), Matching Affinity Clustering maintains perfect balance in every instance, whereas Affinity Clustering performs much worse. Thus, Matching Affinity Clustering has proven empirically successful on small datasets.

\subsection{Comparison with other algorithms}
Here we include more complete visualizations of the performance of all tested algorithms. Like in the main body of text, we find the rand index and cluster size ratio on balanced and filtered data. These tests were run in the same way as the tests in the main body, we just present results on other common algorithms for completeness. The results are presented in Figure~\ref{fig:a} in the Appendix.

Most of these results are as expected and simply reproduce the results from \citet{affinity}. However, we add one more algorithm: random clustering. Again, this is the clustering that randomly recursively partitions the data into a hierarchy. In our experiments, random clustering had surprisingly good cluster balance ratios (see Figure~\ref{fig:suba3}). In fact, on raw data, it on average had more balanced clusters than Matching Affinity Clustering on three of the datasets.

There are three main observations about why Matching Affinity Clustering is still clearly more empirically successful than random clustering. First, notice that on filtered data in Figure~\ref{fig:suba4}, Matching Affinity Clustering has more balanced clusters than Random clustering by a very wide margin. Second, it is clear in Figures~\ref{fig:suba1} and~\ref{fig:suba2} that Matching Affinity Clustering consistently and significantly outperforms random clustering. Third, random clustering is nondeterministic, whereas Matching Affinity Clustering is works with high probability. Therefore, Matching Affinity Clustering's theoretical strengths and empirical performances are much stronger assurances than that of random clustering. Therefore, while an argument can be made that random clustering seems to empirically balance clusters well, Matching Affinity Clustering still does better in a number of respects, and thus is a more useful algorithm.

\section{Conclusion}\label{sec:conclusion}

Matching Affinity Clustering is the first hierarchical clustering algorithm to simultaneously achieve our four desirable traits. (1) Theoretically, it guarantees state-of-the-art approximations for revenue and value (given an approximation for MPC minimum perfect matching) when $n=2^N$, and good approximations for revenue and value in general. Affinity Clustering cannot approximate either function. (2) Compared to Affinity Clustering, our algorithm achieves similar empirical success on general datasets and performs even better when datasets are balanced and of size $2^N$. (3) Clusters are theoretically and empirically balanced. (4) It is scalable.

These attributes were proved through theoretical analysis and small-scale evaluation. While we were unable to perform the same large-scale tests as~\citet{affinity}, our methods still establish several advantages to the proposed approach. Matching Affinity Clustering simultaneously attains stronger broad theoretical guarantees, scalability through distribution, and small-scale empirical success. Therefore, we believe that Matching Affinity Clustering holds significant value over its predecessor as well as other state-of-the-art hierarchical clustering algorithms, particularly with its niche capability on balanced datasets.


\bibliography{references}
\bibliographystyle{icml2020}

\appendix

\section{Distributed maximum $k$-sized matching}
In this section, we prove our results for distributed $k$-sized matching and additionally provide the pseudocode.

\begingroup
\def\thetheorem{\ref{thm:kmatch}}
\begin{theorem}
\kmatch
\end{theorem}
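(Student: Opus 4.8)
The plan is to prove Theorem~\ref{thm:kmatch} by reducing $k$-sized maximum matching to ordinary maximum matching and invoking \citet{matching}'s MPC algorithm as a black box. Algorithm~\ref{alg:kmatch} builds an augmented graph $G'$ by adding a set $U$ of dummy vertices (line~\ref{line:transstart}), joining every dummy to every original vertex with an edge of a single tunable weight $Q$ (line~\ref{line:transend}), and leaving the original edges untouched. The size of $U$ is chosen so that $G'$ has even order and a perfect matching of $G'$ absorbs exactly the ``excess'' original vertices into dummy edges, leaving the remaining original vertices to be matched among themselves by exactly the target number of original edges. First I would make the structural claim precise: for a suitable threshold value of $Q$, a maximum-weight matching of $G'$, after the dummy edges are deleted (line~\ref{line:ghaf} onward), is a maximum-weight $k$-sized matching of $G$. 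The remaining work is (i) to locate the right $Q$ despite having only an approximate matching oracle, and (ii) to bound rounds and space.

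For the structural and approximation step, let $f(s)$ denote the maximum weight of a matching of $G$ using at most $s$ edges. The key fact is that $f$ is concave: the marginal gain $f(s{+}1)-f(s)$ is non-increasing in $s$. Writing any matching of $G'$ as $s$ original edges plus $d$ dummy edges (each of weight $Q$), its weight is at most $f(s)+dQ$ subject to the dummy budget and the vertex constraint $2s+d\le n$. Optimizing over $(s,d)$ shows that the optimal number of original edges $s^\star(Q)$ is governed by comparing marginal gains to the ``price'' $2Q$; by concavity $s^\star(Q)$ is non-increasing in $Q$, and there is a threshold $\theta$ (of order $\mathrm{OPT}_k/k$, where $\mathrm{OPT}_k=f(k)$) above which $s^\star(Q)$ equals the target and the original part attains $\mathrm{OPT}_k$. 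This monotonicity is exactly what makes the binary search in line~\ref{line:binary} well defined: the set of $Q$ producing a matching with $|M|\le k$ is an upward-closed interval whose left endpoint the search locates.

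The subtlety is that \citet{matching}'s oracle returns only a $(1-\delta)$-approximate matching of $G'$. Since at the threshold $\mathrm{OPT}(G')=\mathrm{OPT}_k+|U|Q$, deleting the dummy edges (of total weight at most $|U|Q$) from an approximate solution leaves an original matching $M$ with $w(M)\ge(1-\delta)\mathrm{OPT}_k-\delta\,|U|\,Q$. The error term grows with $Q$, which is precisely why the search seeks the \emph{smallest} admissible $Q$, and why the second condition $\tfrac{1}{k(1-\delta)}w(M)\le Q$ is imposed: it certifies that $Q$ is no larger than essentially the average edge weight $\mathrm{OPT}_k/k$. Here the hypothesis that the matching covers more than $n/2$ vertices is what saves us: it forces $|U|$ to be at most a constant multiple of the number of matched vertices, so that $\delta\,|U|\,Q$ is a small fraction of $\mathrm{OPT}_k$. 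The algorithm's choice of $\delta$ (via the relation $\epsilon=(c+1)\delta-\delta^2$ for $k\le cn$) is then calibrated so that the compounded loss from the oracle and from dummy-edge deletion comes out to exactly $\epsilon$, yielding $w(M)\ge(1-\epsilon)\mathrm{OPT}_k$. I would verify this cancellation in detail and confirm that the approximate oracle does not destroy enough monotonicity to mislead the search, using the two search conditions together to pin $Q$ into the correct window whp.

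For complexity, the binary search over $Q\in[nW]$ (line~\ref{line:binary}) runs in $O(\log(nW))$ iterations, each invoking \citet{matching}'s algorithm once in $O(\log\log(n)\cdot(1/\delta)^{1/\delta})$ rounds and $O(n/\mathrm{polylog}(n))$ space; since $\delta=\Theta(\epsilon)$ for constant $c$, the total is $O(\log(nW)\log\log(n)\cdot(1/\epsilon)^{1/\epsilon})$ rounds. The augmentation only doubles the vertex count, and since all dummy edges share the weight $Q$ they can be represented implicitly, so the per-machine space stays $O(n/\mathrm{polylog}(n))$; the high-probability guarantee follows from that of \citet{matching} by a union bound over the $O(\log(nW))$ calls. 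I expect the approximation analysis of the previous paragraph---reconciling the two binary-search conditions with the oracle error and the concavity of $f$---to be the main obstacle, whereas the reduction itself and the complexity bookkeeping are comparatively routine.
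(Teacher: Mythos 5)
Your proposal follows essentially the same route as the paper's proof: the same augmentation by dummy vertices joined to all of $V$ with a uniform tunable weight $Q$, the same binary search for the smallest admissible $Q$, the same black-box use of \citet{matching}'s oracle with its error $\delta$ calibrated through $\epsilon=(c+1)\delta-\delta^2$ so that the combined loss from the oracle and the dummy edges comes out to $\epsilon$, and the same round/space accounting over the $O(\log(nW))$ search iterations. The one divergence is minor: you justify the well-definedness of the binary search via concavity of the size-constrained matching profile $f(s)$, whereas the paper argues directly by contradiction (an unmatched dummy at the threshold $Q$ would force the original-edge matching to exceed its approximate optimum); both observations play the same role in an otherwise identical argument.
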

\addtocounter{theorem}{-1}
\endgroup

\begin{proof}
Let us define $Q$ by a binary search on values 1 through $nW$ to find the minimum $Q$ that satisfies a halting condition: that the resulting $M$ the algorithm finds satisfies $|M|=k$ and $\frac1{k(1-\delta)}w(M) = Q$ (line \ref{line:binary}) where $W$ is the largest weight in $G$. First we transform the graph. Create a vertex set $U$ of $n-2k$ dummy vertices, add them to our vertex set, and connect them to all vertices in $G$ with edge weights $Q$ (lines \ref{line:transstart} to \ref{line:transend}). Then we run \citet{matching}'s algorithm (line \ref{line:ghaf}) on this new graph with error $\delta$ being the minimum of the value satisfying $\epsilon = (c+1)\delta - \delta^2$ and $\epsilon$ itself, and $k \leq cn$. Find the portion of this matching in $G$, and use this to check our halting condition.

We must start by showing that if $M_{G,k}$ is a $(1-\epsilon)$-approximate $k$-matching in $G$, then when $Q = \frac1{k(1-\epsilon)}w(M_{G,K})$, our algorithm finds a $k$-matching with a $(1-\epsilon)$-approximate weight. Assume there is such a matching, and consider when our algorithm selects $Q$ for this value. Consider the matching the algorithm finds in the transformed graph. Assume for contradiction that some $u\in U$ is not matched to any $v\in V$, and every edge in the matching from $G$ has weight over $Q$. Because $u$ is not connected to any vertex in $U$, that means it isn't matched at all. And since $u$ is connected to all $v\in V$ with a positive edge, for $u$ to not have been matched, all $v\in V$ must have been matched. Since $k < n/2$, a perfect matching on $G$ has at least $k$ edges. Thus the weight of the algorithm's matching in $G$, $M_{\mathcal{A},G}$, is bounded below by $k$ edges of weight greater than $Q$.

\[w(M_{\mathcal{A},G}) > kQ = \frac1{1-\epsilon}w(M_{G,k}).\]

But \textsc{OPT}$_{G,k}$ for $k$-sized matching in $G$ must be at least as large as this, so then $(1-\epsilon)w(\textsc{OPT}_{G,k}) > w(M_{G,k})$, which is a contradiction on the assumption of $M_{G,k}$. Otherwise, if there is some $u\in U$ that is not matched to any $v\in V$ where one of the edges in the matching from $G$ has weight $Q$ or less, removing that match and pairing one of those vertices with $u$ can only improve the matching. Thus we can add a processing step afterwards to ensure all $n-2k$ new vertices are matched, while not decreasing the value of the matching.

Thus our algorithm matches all $n-2k$ vertices in $U$ to $n-2k$ vertices in $V$, and the remaining $2k$ vertices in $V$ create a matching for a total size of $k$ or less. Thus the algorithm outputs an at most $k$-sized matching in $G$, so this selection of $Q$ will make the halting condition true.

Let $M_{\mathcal{A}}$ be the matching our algorithm finds in the transformed graph. Then the total weight is,

\[w(M_{\mathcal{A}}) = (n-k)Q + w(M_{\mathcal{A},G}).\]

By the same argument as before, but without the $1-\epsilon$ factor, there is an \textsc{OPT} in the transformed graph that matches all vertices in $U$ to vertices in $V$. Thus the expression for \textsc{OPT} is similar, where \textsc{OPT}$_{G,k}$ is the optimum for a $k$-sized matching in $G$.

\[w(\textsc{OPT}) = (n-k)Q + w(\textsc{OPT}_{G,k}).\]

We know $M_{\mathcal{A}}$ is a $(1-\epsilon)$-approximation for \textsc{OPT}, so we can combine these two equations.

\begin{align*}(n-k)Q + w(M_{\mathcal{A},G})\geq& (1-\epsilon)(n-k)Q + (1-\epsilon)w(\textsc{OPT}_{G,k}) .\end{align*}

We are interested in the portion of the solution in $G$, or $M_{\mathcal{A},G}$.

\[w(M_{\mathcal{A},G}) \geq -\epsilon(n-k)Q + (1-\epsilon)w(\textsc{OPT}_{G,k}).\]

Recall that $Q = \frac1{k(1-\epsilon)} w(M_{G,k})$, and $M_{G,k}$ is a $k$-sized matching in $G$. Therefore $w(M_{G,k}) \leq(\textsc{OPT}_{G,k})$. We can apply this to our inequality and simplify.

\begin{align*}
w(M_{\mathcal{A},G}) \geq& -\epsilon(n-k)\frac1{k(1-\epsilon)}w(\textsc{OPT}_{G,k}) + (1-\epsilon)w(\textsc{OPT}_{G,k}),
\\=& \left(1-\epsilon(1+(n/k - 1)(1-\epsilon))\right)w(\textsc{OPT}_{G,k}).
\end{align*}

Since $k=O(n)$, then $n/k$ is bounded above by some constant. Then the approximation factor is $1-\epsilon(1 + c(1-\epsilon)) = 1 - (c+1)\epsilon + \epsilon^2$. So for any approximation factor $\epsilon$, we can select some $\delta$ to run \citet{matching}'s algorithm such that our algorithm gives a $(1-\delta)$-approximation.

The algorithm searches for the minimum $Q$ that satisfies this, so all that's left to prove is that a selection of $Q < \frac1{k(1-\delta)} w(M_{G,k})$ yields a $k$-sized matching $M_{\mathcal{A},G}$ where $Q = \frac1{k(1-\delta)}w(M_{\mathcal{A},G})$ still is a $(1-\epsilon)$-approximation. If this is true, it must have matched all $n-2k$ vertices in $U$ with vertices in $V$ and selected $k$ edges from $G$. The value of this, where we sub in our value for $Q$, is:

\begin{align*}
w(M_{\mathcal{A}}) &= (n-2k)Q + w(M_{\mathcal{A},G}),
\\&= (n-2k)w(M_{\mathcal{A},G}) + w(M_{\mathcal{A},G}),
\\&= (n-2k+1)w(M_{\mathcal{A},G}).
\end{align*}

Therefore, the approximation factor on the transformed graph is equivalent to the approximation factor on $G$. Since we ran \citet{matching}'s algorithm on the transformed graph with error $\delta$ where $\delta \leq \epsilon$, this yields a matching within error of \textsc{OPT}$_{G,k}$.

    Therefore, our algorithm returns the desired approximation. This algorithm requires $O(\log(nW))$ iterations for the binary search on $Q$. In each iteration, the only significant computation in both round and space complexity is the use of the \citet{matching} algorithm that uses $O(\log\log(n)\cdot(1/\epsilon)^{1/\epsilon})$ rounds and $O(n/polylog(n))$ machine space. Thus our algorithm runs in $O(\log(nW)\log\log(n)\cdot(1/\epsilon)^{1/\epsilon})$ rounds and $O(n/polylog(n))$.

\end{proof}

\section{Revenue approximation}

\begingroup
\def\thelemma{\ref{lem:half}}
\begin{lemma}
\half
\end{lemma}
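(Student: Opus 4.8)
The plan is to prove this by induction on the round index $t$, maintaining the invariant that the minimum cluster-size ratio $s_{\min}^{(t)}/s_{\max}^{(t)}$ stays at least $1/2$, where $s_{\min}^{(t)}$ and $s_{\max}^{(t)}$ denote the smallest and largest cluster sizes after round $t$. The base case is the state immediately after the first round of merges, and the inductive step tracks how a single matching-based merging round transforms these two extremes.

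For the base case I would read off the cluster sizes produced by the opening $k$-matching. Since $k = 2n - 2^N$ vertices are matched using $(2n-2^N)/2 = n - 2^{N-1}$ edges, exactly $n - 2^{N-1}$ clusters have size $2$ and the remaining $2^N - n$ vertices stay as singletons, giving $2^{N-1}$ clusters in total. Every cluster therefore has size in $\{1,2\}$, so $s_{\min}^{(1)}/s_{\max}^{(1)} \geq 1/2$ (with equality whenever $n \neq 2^N$, and ratio $1$ when $n = 2^N$, since then there are no singletons).

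For the inductive step I would use that every round after the first merges clusters along a \emph{perfect} matching of the clustering graph $\mathcal{C}$, so each new cluster is the union of exactly two old clusters. Granting this, if $s_{\min}^{(t)}/s_{\max}^{(t)} \geq 1/2$, then every new size is a sum of two old sizes and hence lies in $[2 s_{\min}^{(t)}, 2 s_{\max}^{(t)}]$; in particular $s_{\min}^{(t+1)} \geq 2 s_{\min}^{(t)}$ and $s_{\max}^{(t+1)} \leq 2 s_{\max}^{(t)}$, so
\[
\frac{s_{\min}^{(t+1)}}{s_{\max}^{(t+1)}} \geq \frac{2 s_{\min}^{(t)}}{2 s_{\max}^{(t)}} = \frac{s_{\min}^{(t)}}{s_{\max}^{(t)}} \geq \frac12,
\]
closing the induction. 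The whole run is correct whp by a union bound over the $O(\log n)$ rounds, since the only randomness enters through \citet{matching}'s matching subroutine, which succeeds whp.

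The step I expect to be the main obstacle is justifying the perfect-matching assumption used in the inductive step, because the subroutine only returns an approximate maximum-\emph{weight} matching, which need not saturate every cluster. Here I would lean on two structural facts: after the first round the clustering graph has $2^{N-1}$ vertices, a power of two and hence even, and each later clustering graph likewise has a power-of-two (even) vertex count, so a perfect matching always exists; and since the average-linkage edge weights are nonnegative, any leftover unmatched clusters can be paired among themselves on the complete graph $\mathcal{C}$ without deleting existing edges, so we may assume without loss that every cluster is matched to exactly one other. Securing this perfectness is precisely what prevents an unmatched cluster from lagging at its current size while its neighbors double, which is the only mechanism by which the balance could degrade; once it is in place, the size arithmetic and the induction are routine.
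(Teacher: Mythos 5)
Your proof is correct and follows essentially the same route as the paper: the opening $k$-sized matching leaves every cluster with $1$ or $2$ real vertices, and because every later round merges clusters in pairs via a perfect matching on the (even-sized) clustering graph, the minimum and maximum cluster sizes both double, so the ratio never drops below $1/2$. The only cosmetic difference is that the paper phrases the invariant in terms of duplicated dummy vertices (each padded pair contains at least one real vertex, enforced by an arbitrarily large weight between a vertex and its duplicate), whereas you run an explicit induction on the extremal cluster sizes and justify perfectness of the later matchings directly; both arguments are sound and equivalent in substance.
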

\addtocounter{lemma}{-1}
\endgroup

\begin{proof}
After the first round of merges, note that any duplicated vertex must be merged with its duplicate. This is because the edge weight between these vertices is essentially infinite (for the purposes of this paper, we will say it is arbitrarily large). Thus no duplicates will be matched with another duplicate, so each of the 2-sized clusters has at least one real vertex. In any subsequent merges, this property will hold. Thus it holds for all clusters beyond the initial singleton clusters.
\end{proof}

\begingroup
\def\thelemma{\ref{lem:transform}}
\begin{lemma}
\transform
\end{lemma}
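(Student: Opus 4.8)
The plan is to exploit the \emph{locality} of the average-linkage recursion: every edge of the new clustering graph is determined by only four edges of the old one, so no machine ever needs to see more than an $\widetilde{O}(n)$-sized slice of the data. First I would record the governing identity. For level-$(i+1)$ clusters $C_j = A_j \cup B_j$ and $C_k = A_k \cup B_k$ with $A_j,B_j,A_k,B_k \in C^i$, splitting the double sum $\sum_{u\in C_j,\, v\in C_k} w_G(u,v)$ into its four blocks and recognizing each block as a size product times a level-$i$ average linkage gives
\begin{align*}
w_{\mathcal{C}(G,C^{i+1})}(v_{C_j},v_{C_k})
&= \frac{1}{|C_j||C_k|}\Big(|A_j||A_k|\,w_{\mathcal{C}(G,C^i)}(v_{A_j},v_{A_k}) + |A_j||B_k|\,w_{\mathcal{C}(G,C^i)}(v_{A_j},v_{B_k}) \\
&\qquad + |B_j||A_k|\,w_{\mathcal{C}(G,C^i)}(v_{B_j},v_{A_k}) + |B_j||B_k|\,w_{\mathcal{C}(G,C^i)}(v_{B_j},v_{B_k})\Big).
\end{align*}
When clusters are balanced the four size products are equal and this collapses to the $\tfrac14$-average used in line~18 of Algorithm~\ref{alg:matchcluster}; maintaining an $O(n)$-sized table of cluster sizes lets us use the exact weighted form in general. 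The structural point to extract is that each new edge weight depends on exactly four old edge weights (and four sizes).

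Next I would describe the distribution, following line~16 of the algorithm: assign each new cluster $C_j$ to its own machine $m_j$. By the identity above, to compute \emph{all} edges incident to $C_j$ in $\mathcal{C}(G,C^{i+1})$, machine $m_j$ needs only the edges incident to its two constituents $A_j$ and $B_j$ in $\mathcal{C}(G,C^i)$. Since $\mathcal{C}(G,C^i)$ is complete on $n_i = |C^i| \le n$ vertices, that is at most $2(n_i-1) = O(n)$ edge weights, plus the $O(n)$ size table, all within the $\widetilde{O}(n)$ budget. For the routing I would observe that each old vertex $A \in C^i$ lies in a unique new cluster, so it ships its $n_i-1$ incident edge weights to that single machine; hence every machine sends and receives $O(n)$ words, which one MPC round permits, and after the exchange each $m_j$ computes its $n_{i+1}-1$ new edge weights by local application of the identity.

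The main thing to get right is the space accounting, since the clustering graph is \emph{complete} and has $\Theta(n^2)$ edges in total, far more than any single machine can hold. The argument that rescues us is precisely the locality of the recursion: a machine never needs the whole graph, only the $O(n)$ edges touching its two subclusters, and each old edge is forwarded to at most the two machines handling the new clusters of its endpoints, so both per-machine load and per-machine communication remain $\widetilde{O}(n)$ and the transformation completes in a single round. A minor case to dispatch is the first level, where unmatched vertices are duplicated: each level-$1$ cluster is still the union of two level-$0$ parts, so the identity and the routing apply verbatim.
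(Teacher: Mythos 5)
Your proposal is correct and follows essentially the same route as the paper's proof: one machine per level-$(i+1)$ cluster, each new edge weight computed as the (size-weighted) average of the four level-$i$ edge weights between the constituent subclusters, with only the $O(n)$ edges incident to a machine's two subclusters ever needing to reside on that machine. You are somewhat more careful than the paper — stating the general weighted identity rather than only the equal-size $\tfrac14$-average, and spelling out the one-round routing and the duplicated-vertex case explicitly — but the decomposition and distribution scheme are the same.
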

\addtocounter{lemma}{-1}
\endgroup

\begin{proof}
    We start by constructing the vertex set, $V^{i+1}$, which corresponds to the clusters at the new $i+1$th level. So for every $C^{i+1}_j\in C^{i+1}$, create a vertex $v_{C^{i+1}_j}$ and put it in vertex set $V^{i+1}$. It must be a complete graph, so we can add edges between all pairs of vertices.
    
    Consider two vertices $v_{C^{i+1}_j}$ and $v_{C^{i+1}_k}$. Since we merge sets of two clusters at each round, these must have come from two clusters in $C^i$ each. Say they merged clusters from vertices $u_1,u_2$ and $v_1,v_2$ respectively. Note that these vertices are from the previous clustering graph, $\mathcal{C}(G,C^i)$. 
    Then the edges $(u_1,v_1),(u_1,v_2),(u_2,v_1),(u_2,v_2)$  have weights that are the average distances between corresponding $i$-level clusters (because they were from the previous clustering graph, $\mathcal{C}(G,C^i)$). Since the clusters in $C^i$ all have the same size, we can calculate the weight as follows.
    \begin{align*}
        w_{\mathcal{C}(G, C^{i+1})}(v_{C^{i+1}j},v_{C^{i+1}_k}) = \frac14(&w_{\mathcal{C}(G, C^{i})}(u_1,v_1) + w_{\mathcal{C}(G, C^{i})}(u_1,v_2) + w_{\mathcal{C}(G, C^{i})}(u_2,v_1) + w_{\mathcal{C}(G, C^{i})}(u_2,v_2)).
    \end{align*}
    
    This is true because the weights in $\mathcal{C}(G,C^i)$ are already the average weights in the $i$th level clusters, so they are normalized for the clusters size, which is $1/4$th of the cluster size at the next level. So when we sum together the four edge weights, we account for all the edges that contribute to the edge weight in the next level, then we only need to divide by 4 to find the average.
    
    Matching Affinity Clustering can utilize one machine per $(i+1)$-level cluster . Each machine needs to keep track of the distance between its subclusters and all other subclusters at level $i$. It can then do this calculation to capture edge weights in one round with $O(n)$ space.
\end{proof}

\begingroup
\def\thelemma{\ref{lem:rev}}
\begin{lemma}
\lemrev
\end{lemma}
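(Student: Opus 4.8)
The plan is to expand the clustering revenue into a sum of per-merge revenues and evaluate each term from the structure of the clusters produced by the algorithm. By definition, $\clusterrev_G(C^i,C^{i+1}) = \sum_{(A,B)\in M_i}\mergerev_G(A,B)$, and each merge revenue is $\mergerev_G(A,B) = (n-|A|-|B|)\sum_{a\in A,b\in B}w_G(a,b)$, where $|A|,|B|$ count the \emph{true} data points in each cluster. So for every matched pair I must control two things: the leading factor $n-|A|-|B|$, and the inner edge-sum $\sum_{a\in A,b\in B}w_G(a,b)$, which has to be traded for the clustering-graph weight $w_{\mathcal{C}(G,C^i)}(v_A,v_B)$ that appears on the right-hand side. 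Here $N$ is the algorithm's exponent with $2^{N-1}<n\le 2^N$, which is what the $2^{N-i-1}-1$ term refers to.

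The weight step is the cleaner ingredient. I would first show, by induction on $i$ using the averaging recursion of Lemma~\ref{lem:transform} (line 19) with base case $\mathcal{C}(G,C^0)=G$, that $w_{\mathcal{C}(G,C^i)}(v_A,v_B) = 2^{-2i}\sum_{a\in\hat A,b\in\hat B}w_G(a,b)$, where $\hat A,\hat B$ are the nominal multisets formed by padding each cluster with its phantom duplicates up to size $2^i$. Rewriting this sum as $\sum_{a\in A,b\in B}m_A(a)m_B(b)\,w_G(a,b)$ with integer multiplicities, the key combinatorial observation is that duplicates are created \emph{only} in the first round: every later round matches an even, power-of-two number of clusters and hence needs no padding. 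Thus every multiplicity lies in $\{1,2\}$, and since edge weights are nonnegative we obtain $\sum_{a\in A,b\in B}w_G(a,b)\ge 2^{-2p}\,2^{2i}\,w_{\mathcal{C}(G,C^i)}(v_A,v_B)$: an exact equality when $p=0$ (all multiplicities one) and a loss of exactly $2^{-2}$ when $p=1$ (some products $m_A m_B$ reach $4$), which produces the $2^{-2p}$ in the statement.

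For the leading factor I would invoke the balance guarantee of Lemma~\ref{lem:half}: after the first round every level-$i$ cluster has true size at most its nominal size $2^i$, so $n-|A|-|B|\ge n-2^{i+1}$. When $p=0$ this is tight, since $n=2^N$ and $|A|=|B|=2^i$ give $n-|A|-|B| = 2^{i+1}(2^{N-i-1}-1)$; combined with the exact weight step, summing over $M_i$ yields the clean equality $\clusterrev_G(C^i,C^{i+1}) = 2^{3i+1}(2^{N-i-1}-1)\sum_{(A,B)\in M_i}w_{\mathcal{C}(G,C^i)}(v_A,v_B)$, matching the $p=0$ case. Assembling the two ingredients for general $i$ then gives the factor $2^{3i-2p+1}(2^{N-i-1}-1)$. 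I note that the ``whp'' qualifier is inherited purely from the matching subroutine: given the matching $M_i$ the algorithm actually produces, this revenue bound is deterministic.

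The main obstacle I anticipate is pinning the exact constant in the $p=1$ case, where $n<2^N$ strictly and the true cluster sizes genuinely vary (between $2^{i-1}$ and $2^i$ by Lemma~\ref{lem:half}). The tension is that smaller true sizes \emph{enlarge} the merge factor $n-|A|-|B|$ but simultaneously correspond to \emph{more} phantom duplicates, which is exactly what degrades the weight-conversion step; these two effects must be charged consistently. The cleanest route is to bound each term independently, so that the $2^{-2p}$ factor (from multiplicities in $\{1,2\}$) and the size factor (from balance) compose directly. Where this threatens to be slightly lossy, the remedy is to bring in the global accounting that exactly $2^N-n$ duplicates are created in round one and are distributed across the clusters, writing $|A|=2^i-x_A$ with $\sum_C x_C = 2^N-n$, and to trade the enlarged merge factor at heavily duplicated clusters against their weaker weight conversion. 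Verifying that this aggregate bookkeeping recovers precisely the claimed $2^{N-i-1}-1$ coefficient is the delicate part of the argument.
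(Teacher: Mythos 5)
Your proposal follows essentially the same route as the paper's proof: expand the clustering revenue into merge revenues, lower-bound the leading factor via the nominal cluster size $2^i$, and convert the raw edge sum $\sum_{a\in A,b\in B} w_G(a,b)$ into $2^{2i-2p}\,w_{\mathcal{C}(G,C^i)}(v_A,v_B)$ using the balance guarantee of Lemma~\ref{lem:half} (the paper does this via the definitional average linkage with true sizes $|A|,|B|\geq 2^{i-1}$ rather than via multiplicities in the padded recursion, but the two justifications yield the same bound). The concern in your final paragraph is unnecessary: both factors are bounded in the favorable direction independently and then multiplied, so the \emph{independent} bounding you call the cleanest route is precisely what the paper does and already recovers the claimed $2^{3i-2p+1}\left(2^{n-i-1}-1\right)$ coefficient without any global accounting of duplicates.
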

\addtocounter{lemma}{-1}
\endgroup

\begin{proof}
    First, we want to break down the clustering revenue into the sum of its merge revenues. Since each match in our matching $M_i$ defines a cluster in the next level of the hierarchy, we can view each merge as a match in $M_i$. Then we apply the definition of merge revenue.
    \begin{align*}
    \clusterrev_G(C^i, C^{i+1}) &= \sum_{(A,B)\in M_i}\mergerev_G(A,B),
    \\&= \sum_{(A,B)\in M_i}(2^n - |A| - |B|) \sum_{a\in A,b\in B} w_{G}(a,b).
    \end{align*}
Because we start with a power of two vertices after padding, each step can find a perfect matching, thus yielding a power of two many clusters of equal size at each level. Then since cluster size doubles each round, the cluster size at the $i$th iteration is $2^i$. Even though some of the vertices may not contribute to the revenue, this is an upper bound on the size. So $n-|A|-|B|$ in this formula is at least $2^n-2^{i+1}$. This is the work done in (\ref{11-1}) below. 
\begin{align*}
\clusterrev_G(C^i, C^{i+1}) &\geq (2^n - 2^{i+1})\sum_{(A,B)\in M_i}\sum_{a\in A,b\in B} w_G(a,b),\tag{1}\label{11-1}
\\&= (2^n - 2^{i+1})\sum_{(A,B)\in M_i}|A||B|w_{\mathcal{C}(G,C^i)}(v_A,v_B),\tag{2}\label{11-2}
\\&= 2^{2i-2p}(2^n - 2^{i+1})\sum_{(A,B)\in M_i}w_{\mathcal{C}(G,C^i)}(v_A,v_B),\tag{3}\label{11-3}
\\&= 2^{3i - 2p + 1}(2^{n-i-1} - 1)\sum_{(A,B)\in M_i}w_{\mathcal{C}(G,C^i)}(v_A,v_B).\tag{4}\label{11-4}
\end{align*}
In (\ref{11-2}), we simply substitute in place of the sum of the edge weights between $A$ and $B$ in $G$. By definition, the edge weight between $v_A$ and $v_B$ in the clustering graph is the average of the same edge edge weights in $G$. Thus if we just scale that by $|A||B|$, we can substitute it in for the sum of those edge weights. In (\ref{11-3}), we simply pull out $|A||B|$. These contain $2^i$ total vertices, and by Lemma \ref{lem:half}, they contain at least $2^{i-1}$ that contribute to the revenue for a total factor of $2^{2i-2}$. If there were $2^n$ vertices to start, then all clusters contain only real vertices, so the factor is $2^{2i}$. With the indicator, this is $2^{2i-2p}$. We then simplify in (\ref{11-4}).
\end{proof}

\begingroup
\def\thelemma{\ref{lem:costb}}
\begin{lemma}
\costb
\end{lemma}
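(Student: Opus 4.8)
The strategy is to sandwich the clustering cost between an explicit upper bound and the clustering-revenue lower bound of Lemma~\ref{lem:rev}, arranged so that both share the common factor $W_M := \sum_{(A,B)\in M_i}w_{\mathcal{C}(G,C^i)}(v_A,v_B)$ and cancel. Write $W_{\mathrm{all}}$ for the total weight of all edges of the clustering graph $\mathcal{C}(G,C^i)$. First I would establish the cost upper bound
\[
\clustercost_G(C^i,C^{i+1})\le 2^{3i}\sum_{(A,B)\in M_i}\Big(\sum_{C_j}w_{\mathcal{C}(G,C^i)}(v_A,v_{C_j})+\sum_{C_j}w_{\mathcal{C}(G,C^i)}(v_B,v_{C_j})\Big),
\]
by expanding $\clustercost=\sum_{(A,B)\in M_i}\mergecost_G(A,B)$, regrouping the outer vertices of each merge cost into the other level-$i$ clusters $C_j$, rewriting each block as $\sum_{a\in A,c\in C_j}w_G(a,c)=|A|\,|C_j|\,w_{\mathcal{C}(G,C^i)}(v_A,v_{C_j})$ via Definition~\ref{def:clustgraph}, and bounding every cluster size by $2^i$ exactly as in the proof of Lemma~\ref{lem:rev}.

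Next I would simplify the double sum. Because each inner sum omits $A$ and $B$'s own matching edge, a short counting argument shows every non-matching edge of $\mathcal{C}(G,C^i)$ is counted exactly twice and every matching edge not at all, so the bracketed double sum equals $2\,(W_{\mathrm{all}}-W_M)$, giving
\[
\clustercost_G(C^i,C^{i+1})\le 2^{3i+1}\,(W_{\mathrm{all}}-W_M).
\]

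The crux is bounding $W_{\mathrm{all}}-W_M$ by a multiple of $W_M$, and this is exactly where near-maximality of $M_i$ enters. For $i\ge1$ the clustering graph is complete on $2^{n-i}$ clusters, so it admits a $1$-factorization into $2^{n-i}-1$ perfect matchings, one of which may be taken to be $M_i$ itself (any perfect matching extends to a factorization of the complete graph). Each of the remaining $2^{n-i}-2$ factors is a perfect matching, hence has weight at most that of the optimal $M^*$, and since $M_i\ge(1-\epsilon)M^*$ each is at most $W_M/(1-\epsilon)$. Summing the factorization gives $W_{\mathrm{all}}-W_M\le \tfrac{2(2^{n-i-1}-1)}{1-\epsilon}\,W_M$, whence $\clustercost_G(C^i,C^{i+1})\le \tfrac{2^{3i+2}(2^{n-i-1}-1)}{1-\epsilon}\,W_M$. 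Dividing by the bound $\clusterrev_G(C^i,C^{i+1})\ge 2^{3i-2p+1}(2^{n-i-1}-1)W_M$ of Lemma~\ref{lem:rev}, the factors $2^{3i}$, $(2^{n-i-1}-1)$ and $W_M$ all cancel and leave precisely $\frac{2^{2p+1}}{1-\epsilon}$. The ``whp'' qualifier is inherited solely from the $(1-\epsilon)$-approximation guarantee of the matching subroutine.

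The main obstacle is this bound on $W_{\mathrm{all}}-W_M$: with only an \emph{approximate} matching the usual per-pair local-optimality inequalities fail, so I rely on the global estimate through the $1$-factorization, and obtaining the clean constant (rather than an extra $O(\epsilon/2^{n-i})$ slack) forces me to place $M_i$ itself into the factorization so that only the other $2^{n-i}-2$ factors pay the $1/(1-\epsilon)$ price. A secondary subtlety is that the argument presumes $M_i$ is a perfect matching on the clustering graph, which holds for every level $i\ge1$ and for $i=0$ when $p=0$; the first-level $k$-matching with $p=1$ (singleton clusters, only $2n-2^N$ vertices matched) is not perfect and must be handled by a direct computation that uses maximality of the $k$-matching together with the size-$2$/size-$1$ cluster balance of Lemma~\ref{lem:half}, which is exactly what produces the larger constant $2^{2p+1}=8$ in that regime.
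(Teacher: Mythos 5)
Your proposal is correct and follows essentially the same route as the paper: decompose the clustering cost into merge costs, rewrite in clustering-graph weights with the $2^{3i}$ size bound, observe that each non-matching edge is counted exactly twice, decompose those residual edges into $|V(\mathcal{C}(G,C^i))|-2$ perfect matchings each dominated by $w(M_i^*)\le w(M_i)/(1-\epsilon)$, and cancel against the bound of Lemma~\ref{lem:rev}. The only differences are cosmetic — you invoke a $1$-factorization of the complete graph containing $M_i$ where the paper peels perfect matchings off a doubled regular multigraph (same bound either way) — and you leave the first-level $k$-matching case ($i=0$, $p=1$) as a correctly identified sketch where the paper carries out the bipartite decomposition explicitly.
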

\addtocounter{lemma}{-1}
\endgroup

\begin{proof}[Proof of Lemma \ref{lem:costb}]
As in Lemma \ref{lem:rev}, we want to break apart the clustering cost at the $i$th level into a series of merge costs. Again, we know the merge costs can be defined through the matching $M_i$.
\begin{align*}
\clustercost_{G}(C^i, C^{i+1}) &= \sum_{(A,B)\in M_i}\mergecost_{G}(A,B),\tag{1}\label{12-1}
\\&= \sum_{(A,B)\in M_i}\left(|A|\sum_{b\in B, c\notin A\cup B} w_{G}(b,c)+ |B|\sum_{a\in A, c\notin A\cup B} w_{G}(a,c)\right),\tag{2}\label{12-2}
\\&\leq 2^i\sum_{(A,B)\in M_i}\left(\sum_{b\in B, c\notin A\cup B} w_{G}(b,c) + \sum_{a\in A, c\notin A\cup B} w_{G}(a,c)\right).\tag{3}\label{12-3}
\end{align*}
At this step, we broke the clustering cost into merge costs of the matching (\ref{12-1}), applied the definition of merge cost (\ref{12-2}), and pulled out $|A|=|B|\leq2^i$ (\ref{12-3}). Consider the inner clusters. Instead of selecting $c\notin A\cup B$, we can consider $c$ being in any other cluster from $C^i$. Let $C_1^i,\ldots,C_k^i\in C^i$ be all clusters other than $A$ or $B$ (i.e., $C^i_j\neq A,B$). Then we can define $c$ as an element in any cluster $C^i_j$ for $j\in[k]$. After, we simply rearrange the indices of summation.
\begin{align*}
\clustercost_{G}(C^i, C^{i+1}) &\leq 2^i\sum_{(A,B)\in M_i}\left(\sum_{b\in B,j\in[k]}\sum_{c\in C_j^i} w_{G}(b,c) + \sum_{a\in A,j\in[k]}\sum_{c\in C_j^i} w_{G}(a,c)\right),
\\&= 2^i\sum_{(A,B)\in M_i}\left(\sum_{j\in[k]}\sum_{b\in B, c\in C_j^i} w_{G}(b,c) + \sum_{j\in[k]}\sum_{a\in A,c\in C_j^i} w_{G}(a,c)\right).
\end{align*}
Recall that the edge weights in $\mathcal{C}(G,C^i)$ are the average edge weights between clusters in $C^i$ on graph $G$. Again, to turn this into just the summation of the edge weights, we must scale by $|B||C^i_j|$ and $|A||C^i_j|$.
\begin{align*}
\clustercost_{G}(C^i, C^{i+1})
&\leq 2^i\sum_{(A,B)\in M_i}\left(\sum_{j\in[k]}|B||C^i_j|w_{\mathcal{C}(G,C^i)}(v_B,v_{C_j})+ \sum_{j\in[k]}|A||C^i_j|w_{\mathcal{C}(G,C^i)}(v_A,v_{C^i_j})\right),
\\&= 2^{3i}\sum_{(A,B)\in M_i}\left(\sum_{j\in[k]}w_{\mathcal{C}(G,C^i)}(v_B,v_{C_j})+ \sum_{j\in[k]}w_{\mathcal{C}(G,C^i)}(v_A,v_{C_j})\right).
\end{align*}

For each iteration of the outer summation, we are taking all the edges with one endpoint as $A$ and all edges with one as $B$ (besides the edge from $A$ to $B$ itself) and adding their weights. Since the only edge in $M_i$ with an endpoint at $A$ or $B$ is the edge from $A$ to $B$, the summation covers all edges with one endpoint as either $A$ or $B$ that are not in $M_i$. Consider an edge from some $C$ to $C'$ that isn't in $M_i$. In every iteration besides possibly the first, $M_i$ matches everything, so we will consider $C$ and $C'$ in separate iterations of the sum. In both of these iterations, we add the weight $w_{\mathcal{C}(G,C^i)}(v_C,v_{C'})$. Thus, each edge in $\mathcal{C}(G,C^i)$ outside of $M_i$ is accounted for twice, and no edge in $M_i$ is accounted for.

Consider a multigraph $H$ with vertex set $V(\mathcal{C}(G,C^i))$ and an edge set that contains all edges \textit{except} those in $M_i$ twice over. Note since $\mathcal{C}(G,C^i)$ was a complete graph, $H$ must be a $2(|V(\mathcal{C}(G,C^i)|-2)$-regular graph. Thus, we could find a perfect matching in $H$ with a maximal matching algorithm, remove those edges to decrease all degrees by 1, and repeat on the new regular graph. Do this until all vertices have degree 0. Since each degree gets decremented by 1 each iteration, there must be a total of $2(|V(\mathcal{C}(G,C^i)|-2)$ matchings $N_1,N_2,\ldots,N_{2(|V(\mathcal{C}(G,C^i)|-2)}$. Thus the clustering cost can be alternatively thought of as the sum of the weights of these alternate matchings in clustering graph $\mathcal{C}(G,C^i)$.

\begin{align*}
&\clustercost_{G}(C^i, C^{i+1}) \\&\qquad\leq 2^{3i}\sum_{j\in[2(|V(\mathcal{C}(G,C^i)|-2)]} w_{\mathcal{C}(G,C^i)}(N_j),\tag{4}\label{12-4}
\\&\qquad\leq 2^{3i}\sum_{j\in[2(|V(\mathcal{C}(G,C^i)|-2)]} w_{\mathcal{C}(G,C^i)}(M_i^*),\tag{5}\label{12-5}
\\&\qquad\leq 2^{3i+1}(|V(\mathcal{C}(G,C^i)|-2) w_{\mathcal{C}(G,C^i)}(M_i^*).\tag{6}\label{12-6}
\end{align*}
In (\ref{12-4}), we viewed the summations as the sum of weights of the alternative matchings described earlier. Step (\ref{12-5}) utilizes the fact that $M_i^*$ is a maximum matching, so the weight of any $N_j$ is bounded above by the weight of $M_i^*$. Finally, in (\ref{12-6}), we note that the summation does not depend on $j$, and so we remove the summation.

Since $M_i$ is an approximation of the maximum matching on $\mathcal{C}(G,C^i)$, we know  $w_{\mathcal{C}(G,C^i)}(M_i^*) \leq w_{\mathcal{C}(G,C^i)}(M_i) / (1-\epsilon)$. We can substitute this in and then rewrite it as the summation of edge weights in $M_i$.

\begin{align*}
\clustercost_{G}(C^i, C^{i+1})&\leq 2^{3i+1}(|V(\mathcal{C}(G,C^i)|-2) \frac1{1-\epsilon} w_{\mathcal{C}(G,C^i)}(M_i),
\\&\leq 2^{3i+1}(|V(\mathcal{C}(G,C^i)|-2)\frac1{1-\epsilon} \sum_{(A,B)\in M_i} w_{\mathcal{C}(G,C^i)}(v_A,v_B).
\end{align*}

The total number of vertices in $\mathcal{C}(G,C^i)$ (ie, the total number of clusters at the $i$th level) is just the total number of vertices over the cluster sizes: $2^n/2^i$. Plugging that in gives the desired result.
\begin{align*}
\clustercost_{G}(C^i, C^{i+1}) &\leq 2^{3i+1}\left(\frac {2^n}{2^i}-2\right) \frac1{1-\epsilon}\sum_{(A,B)\in M_i}w_{\mathcal{C}(G,C^i)}(v_A,v_B),\tag{7}\label{12-i}
\\&\leq 2^{3i+2}\left(2^{n-i-1}-1\right) \frac1{1-\epsilon}\sum_{(A,B)\in M_i}w_{\mathcal{C}(G,C^i)}(v_A,v_B),
\\&\leq \frac{2^{2p+1}}{1-\epsilon} \clusterrev_G(C^i, C^{i+1}).\tag{8}\label{12-8}
\end{align*}
The first steps consist of plugging in the cluster sizes and performing algebraic simplifications. Finally, Step (\ref{12-8}) refers to Lemma \ref{lem:rev} for the clustering revenue. Recall this is an upper bound for the clustering cost in $G$, and so the proof is complete.

So far, we have covered most of the lemma's claim. Now we just need to account for the first step, when we utilize the $k$-sized matching. Note the argument is dependent on $M_i$ being a perfect matching, where $M_0$ may only be a maximum matching on $2N-2^n$ vertices. The proof structure here will function similarly. In this case, we still construct a multigraph $H$ as described on the subset of $G$ containing vertices matching in $M_0$, then we add double copies of all the edges between vertices in the matching that aren't matched to each other for a max degree of $4N-2^{n+1}-4$ and thus create $4N-2^{n+1}-4$ matchings on the $2N-2^n$ vertices to cover these edges. However, the cost also accounts for edges from the matched vertices to the unmatched vertices. We can construct a bipartite graph with all these edges once. Then all vertices on one side of the bipartition have degree $2^n-N$, and the vertices on the other side have degree $2N-2^n$. So we can construct $2^n-N$ matchings with $2N-2^n$ edges in this graph that cover all edges. Alternatively, this can be viewed as $2^{n+1}-2N$ matchings on $2N-2^n$ vertices. In this case, we have a bunch of sized matchings on $2N-2^n$ vertices, $N_1,N_2,\ldots,N_{2N-4}$. The rest of the arguments hold. Since $i=0$, step (\ref{12-i}) becomes the following.

\begin{align*}
\clustercost_G(C^0,C^1) &\leq 2\left(N-2\right)\cdot \frac1{1-\epsilon}\sum_{(A,B)\in M_i}w_{\mathcal{C}(G,C^i)}(v_A,v_B),\tag{9}\label{12-10}
\\&\leq 2(2^{n} - 2)\frac1{1-\epsilon}\sum_{(A,B)\in M_i}w_{\mathcal{C}(G,C^i)}(v_A,v_B),\tag{10}\label{12-11}
\\&\leq 4(2^{n-1} - 1)\frac1{1-\epsilon}\sum_{(A,B)\in M_i}w_{\mathcal{C}(G,C^i)}(v_A,v_B),
\\&\leq \frac{2^3}{1-\epsilon}\clusterrev_G(C^0,C^1).\tag{12}\label{12-13-1}
\end{align*}

This mirrors the computation in steps (\ref{12-i}) through (\ref{12-8}). In (9), we substitute $i$ in for (\ref{12-i}), and also replace the number of matchings with the new number of matchings (though recall at this point, we have already halved that value).  Step (10) applies the fact that $N<2^n$. Finally, in (11), we substitute in the clustering revenue. In the analysis for revenue, note that there do not exist unreal vertices yet, so we can consider $p=0$ when we refer to the Lemma \ref{lem:rev}. However, for this Lemma proof, this is the case where we do eventually duplicate vertices, so we analyze it along with other clusterings where $p=1$, so it only needs to meet the condition when $p=1$.

\begin{align*}
\clustercost_G(C^0,C^1) &\leq \frac{2^{2p+1}}{1-\epsilon}\clusterrev_G(C^0,C^1).\tag{13}\label{12-13}
\end{align*}

Thus concludes our proof.

\end{proof}

\begingroup
\def\thelemma{\ref{lem:apx}}
\begin{lemma}
\lemapx
\end{lemma}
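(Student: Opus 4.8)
The plan is to bound the optimal revenue from above and our tree's revenue from below, both in terms of the total edge weight $\sum_{i,j\in V(G)} w_G(i,j)$, and then take the ratio. For the upper bound, observe that for any tree $T'$ and any pair $i,j$, the subtree $T'[i\lor j]$ contains at least the two leaves $i$ and $j$, so $|\nonleaves(T'[i\lor j])|\le n-2$. Applying this termwise to the definition of revenue gives $\rev_G(\mathrm{OPT}) \le (n-2)\sum_{i,j\in V(G)} w_G(i,j)$, which I will use as the benchmark.

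For the lower bound on the revenue $\rev_G(T)$ of the tree $T$ that Matching Affinity Clustering produces, I would start from the two identities recorded in Section~\ref{sec:cost}: $\rev_G(T) = n\sum_{i,j}w_G(i,j) - \cost_G(T)$ and $\cost_G(T) = 2\sum_{i,j}w_G(i,j) + \sum_{\text{merges}}\mergecost_G(A,B)$. Substituting the second into the first yields $\rev_G(T) = (n-2)\sum_{i,j}w_G(i,j) - \sum_{\text{merges}}\mergecost_G(A,B)$. Since the merges of $T$ are organized by level, the total merge cost equals $\sum_i \clustercost_G(C^i,C^{i+1})$ and, likewise, the total merge revenue equals $\sum_i \clusterrev_G(C^i,C^{i+1}) = \rev_G(T)$.

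The key step is to control the total clustering cost by the total clustering revenue. Applying Lemma~\ref{lem:costb} at every level (noting that it holds uniformly in $i$ with the same indicator $p$, including the first level, which uses the $k$-sized matching) and summing gives $\sum_i\clustercost_G(C^i,C^{i+1}) \le \frac{2^{2p+1}}{1-\epsilon}\sum_i\clusterrev_G(C^i,C^{i+1}) = \frac{2^{2p+1}}{1-\epsilon}\rev_G(T)$ whp. Combining this with the identity above produces the self-referential inequality $\rev_G(T) \ge (n-2)\sum_{i,j}w_G(i,j) - \frac{2^{2p+1}}{1-\epsilon}\rev_G(T)$, which I would solve for $\rev_G(T)$ to obtain $\rev_G(T) \ge \big(1 + \tfrac{2^{2p+1}}{1-\epsilon}\big)^{-1}(n-2)\sum_{i,j}w_G(i,j)$.

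Finally I would divide by the OPT upper bound and substitute the two cases. When $n=2^N$ we have $p=0$, giving an approximation factor of $\frac{1-\epsilon}{3-\epsilon}$, and in general $p=1$, giving $\frac{1-\epsilon}{9-\epsilon}$. A short calculation shows these exceed $\tfrac13-\epsilon$ and $\tfrac19-\epsilon$ respectively (the gap to $\tfrac13$ is $\tfrac{2\epsilon}{3(3-\epsilon)}<\epsilon$, and the gap to $\tfrac19$ is $\tfrac{8\epsilon}{9(9-\epsilon)}<\epsilon$), so after a harmless rescaling of $\epsilon$ the claimed bounds follow whp. I expect the main obstacle to be bookkeeping rather than a deep idea: namely verifying that Lemma~\ref{lem:costb} may be invoked with the same constant $2^{2p+1}$ at every level — in particular at the first level with its $k$-matching — so that the per-level bounds telescope cleanly into a single inequality relating $\rev_G(T)$ to itself.
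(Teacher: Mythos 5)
Your proposal is correct and follows essentially the same route as the paper's own proof: bound OPT by $(n-2)\sum_{i,j}w_G(i,j)$, decompose the total cost into per-level clustering costs, apply Lemma~\ref{lem:costb} at every level (the paper handles the first-level $k$-matching case inside the proof of that lemma, exactly the bookkeeping point you flag), and solve the resulting self-referential inequality via the duality $\rev_G(T)=n\sum w_G-\cost_G(T)$ to get the factor $\frac{1-\epsilon}{2^{2p+1}+1-\epsilon}$. Your explicit verification that this exceeds $\frac{1}{2^{2p+1}+1}-\epsilon$ is slightly more careful than the paper's ``arbitrarily close'' remark, but the argument is the same.
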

\addtocounter{lemma}{-1}
\endgroup

\begin{proof}
We prove this by constructing Matching Affinity Clustering. Our algorithm starts by allocating one machine to each cluster. Run Algorithm \ref{alg:kmatch} for either the desired $k$- or $n/2$-matching, which finds our $1-\epsilon$ approximate matching, to create clusters of two vertices each, then apply the algorithm from Lemma \ref{lem:transform} to construct the next clustering graph based off this clustering. Repeat this process until we have a single cluster.

From Lemma \ref{lem:costb}, we see that at each round, the cumulative cost is bounded above by $\frac{2}{1-\epsilon}$ times the revenue.
Then  we utilize the definition of the cost of an entire hierarchy tree $T$ to get bounds.
\begin{align*}
\cost_G(T) &= 2\sum_{u,v\in G, u\neq v} w_G(u,v) + \sum_{\text{merges of } A,B} \mergecost_G(A,B),\tag{1}\label{6-1}
\\&= 2\sum_{u,v\in G, u\neq v} w_G(u,v) + \sum_{i\in[\log n]} \clustercost_G(C^i, C^{i+1}),\tag{2}\label{6-2}
\\&\leq 2\sum_{u,v\in G, u\neq v} w_G(u,v) + \frac{2^{3p+1}}{1-\epsilon}\sum_{i\in[\log n]} \clusterrev_G(C^i, C^{i+1}),\tag{3}\label{6-3}
\\&\leq 2\sum_{u,v\in G, u\neq v} w_G(u,v) + \frac{2^{2p+1}}{1-\epsilon}\rev_G(T).\tag{4}\label{6-4}
\end{align*}
Step (1) simply break down the total cost into merge costs, and then step (2) consolidates merge costs in each level of the hierarchy into clustering costs. Note that every iteration halves the number of clusters, so there must be $\log n$ iterations. In (3), we apply the result from Lemma \ref{lem:costb}, and finally in (4), we add up all the clustering revenues into the total hierarchy revenue. We can then examine hierarchy revenue.
\begin{align*}
\rev_G(T) &= n\sum_{u,v\in G, u\neq v} w_G(u,v) - \cost_G(T),
\\&\geq n\sum_{u,v\in G, u\neq v} w_G(u,v) - 2\sum_{u,v\in G, u\neq v} w_G(u,v) - \frac{2^{2p+1}}{1-\epsilon}\rev_G(T).
\end{align*}
The above simply utilizes the duality of revenue and cost, and then substitution from step (4). Next we only require algebraic manipulation to isolate $\rev_G(T)$.
\begin{align*}
&\frac{2^{2p+1} + 1 - \epsilon}{1-\epsilon}\rev_G(T) \geq (n-2)\sum_{u,v\in G, u\neq v} w_G(u,v).
\\&\rev_G(T) \geq \frac{1-\epsilon}{2^{2p+1} + 1 - \epsilon}(n-2)\sum_{u,v\in G, u\neq v} w_G(u,v).
\end{align*}
Then we know the optimal solution $T^*$ can't have more than $n-2$ non-leaves, which means that each edge will only contribute at most $n-2$ times its weight to the revenue. Thus, $\rev_G(T^*) \leq (n-2)\sum_{u,v\in G, u\neq v} w_G(u,v)$. In addition, since $\frac{1-\epsilon}{2^{2p+1}+1-\epsilon}$ can be arbitrarily close to $\frac1{2^{2p+1}+1}$, we rewrite it as $\frac1{2^{2p+1}+1} - \epsilon$. Apply all these to our most recent inequality to get the desired results.
\begin{align*}
\rev_G(T) \geq& \left(\frac1{2^{2p+1}+1}-\epsilon\right)\rev_G(T^*).
\end{align*}

For an input of size $2^n$, we have $p=0$ and get a $\frac13 - \epsilon$ approximation for revenue. For all other inputs, $p=1$, and we get a $\frac19 - \epsilon$ approximation. We note that these applications of cost and revenue properties are heavily inspired by Moseley and Wang's proof for the approximation of Average Linkage \cite{dual}.
\end{proof}

\begingroup
\def\thelemma{\ref{lem:bounds}}
\begin{lemma}
\bounds
\end{lemma}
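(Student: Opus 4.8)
The plan is to bound the number of rounds and the per-machine space separately, viewing the algorithm as $O(\log n)$ iterations of the while loop. Since every iteration performs a (near-)perfect matching on the current clustering graph and thereby halves the number of clusters, at most $\lceil\log n\rceil$ iterations occur before a single cluster remains. Each iteration consists of exactly two phases whose costs I would charge independently: one matching call, and one reconstruction of the clustering graph.

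For the round complexity I would first bound a generic iteration. By Lemma~\ref{lem:transform} the clustering-graph reconstruction takes a single round, and the matching on the current clustering graph is a call to \citet{matching}'s $(1-\epsilon)$-approximate routine costing $O(\log\log(n)\cdot(1/\epsilon)^{1/\epsilon})$ rounds. Summing over the $O(\log n)$ iterations, the perfect-matching iterations contribute $O(\log(n)\log\log(n)\cdot(1/\epsilon)^{1/\epsilon})$ rounds in total (the $O(\log n)$ reconstruction rounds being absorbed). I would then split into the two regimes. When $n=2^N$, the very first step is itself a perfect matching ($k=0$), so every iteration is of the generic type, giving the claimed $O(\log(n)\log\log(n)\cdot(1/\epsilon)^{O(1/\epsilon)})$ bound. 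When $n$ is not a power of two, the first iteration instead invokes the $k$-sized matching of Theorem~\ref{thm:kmatch}, whose binary search over $Q\in[nW]$ multiplies the matching cost by a $\log(nW)$ factor, for $O(\log(nW)\log\log(n)\cdot(1/\epsilon)^{1/\epsilon})$ rounds in that single step; since $\log(nW)\ge\log n$, this term subsumes the cost of all remaining iterations and yields the general bound $O(\log(nW)\log\log(n)\cdot(1/\epsilon)^{O(1/\epsilon)})$. The inflation of the exponent from $(1/\epsilon)^{1/\epsilon}$ to $(1/\epsilon)^{O(1/\epsilon)}$ is then justified by observing that, to guarantee a final $(1-\epsilon)$-approximation, \citet{matching}'s routine is run with accuracy parameter $\delta=\Theta(\epsilon)$ (via the conversion $\epsilon=(c+1)\delta-\delta^2$ from Theorem~\ref{thm:kmatch}), so that $(1/\delta)^{1/\delta}=(1/\epsilon)^{O(1/\epsilon)}$ for constant $c$.

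For the space bound I would note that only two primitives consume nontrivial space. The matching subroutine uses $O(n/polylog(n))$ space per machine by Theorem~\ref{thm:kmatch} and the guarantee of \citet{matching}. The clustering-graph construction, by Lemma~\ref{lem:transform}, assigns one machine per cluster with each machine storing only the $O(n)$ average-linkage weights from its cluster to every other cluster, hence $O(n)$ space per machine rather than the $O(n^2)$ that a naive complete-graph representation would require. Both primitives are $\widetilde{O}(n)$, so the per-machine space stays $\widetilde{O}(n)$ across all iterations.

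Since the statement follows by composing two results already established (Theorem~\ref{thm:kmatch} and Lemma~\ref{lem:transform}) with a straightforward iteration count, I do not expect a genuine obstacle here; the proof is essentially bookkeeping. The one point demanding care is the case analysis: ensuring that the $\log(nW)$ factor from the first-step binary search is confined to the non-power-of-two regime (and is absent when $n=2^N$, where the first step is an ordinary perfect matching), together with the clean accounting of the $\delta$-to-$\epsilon$ conversion so that the exponent is correctly reported as $(1/\epsilon)^{O(1/\epsilon)}$ in both bounds.
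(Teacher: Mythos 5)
Your proof follows essentially the same route as the paper's: one invocation of the $k$-sized matching (Theorem~\ref{thm:kmatch}) in the first step, $O(\log n)$ subsequent iterations each calling \citet{matching}'s routine plus a one-round graph reconstruction via Lemma~\ref{lem:transform}, and the observation that the $k$-matching step is skipped when $n=2^N$. If anything your additive accounting of the $\log(nW)$ term is slightly more careful than the paper's, which writes the total as a product $O(\log(nW)\log(n)\log\log(n)\cdot(1/\epsilon)^{O(1/\epsilon)})$ before stating the bound claimed in the lemma.
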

\addtocounter{lemma}{-1}
\endgroup
\begin{proof}
 First, we use Algorithm \ref{alg:kmatch} to obtain a $k$-sized matching, which runs in $O(\log(nW)\log\log(n)\cdot(1/\epsilon)^{1/\epsilon})$ rounds and $O(n/polylog(n))$ machine space. After this, there are $\log n$ iterations, and at each iteration, we use \citet{matching}'s matching algorithm Algorithm, which finds our $(1-\epsilon)$-approximate matching in $O(\log\log n \cdot (1/\epsilon)^{O(1)})$ rounds and $O(n/polylog(n))$ machine space \cite{matching}. We also transform the graph as in Lemma \ref{lem:transform}, which adds no round complexity, but requires $O(n)$ space. Thus in total, this requires $O(\log(nW)\log (n)\log\log (n) \cdot (1/\epsilon)^{O(1/\epsilon)})$ rounds and $O(n)$ space per machine. However, note that when $p=0$, we can just run \citet{matching}'s algorithm directly, and achieve a slightly better bound of $O(\log (n)\log\log (n) \cdot (1/\epsilon)^{O(1/\epsilon)})$ rounds and $O(n)$ space per machine.
 \end{proof}

\begingroup
\def\thetheorem{\ref{thm:lb}}
\begin{theorem}
\lb
\end{theorem}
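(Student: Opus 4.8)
The final statement to prove is Theorem~\ref{thm:lb}: \emph{\lb}

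\textbf{Plan.} The approach is to construct an explicit graph $G$ and show that Matching Affinity Clustering behaves identically to Average Linkage on it, then invoke the known tightness of Average Linkage's $1/3$-approximation for revenue established by \citet{linkage}. The paragraph preceding the statement already signals this route: ``We find that Matching Affinity Clustering acts the same as Average Linkage on these graphs.'' So the first step is to recall or reproduce the counterexample graph $G$ on which Average Linkage achieves only a $(1/3 + o(1))$-approximation of the optimal revenue. Since the theorem restricts attention to the regime $|V| = 2^N$ (the tight bound comes from the $p = 0$ case of Lemma~\ref{lem:apx}), I would take $G$ to have a power-of-two number of vertices, possibly by padding \citeauthor{linkage}'s construction without altering the relevant ratios.

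\textbf{Key steps, in order.} First, present the graph family $G$ (with $n = 2^N$) and state the optimal revenue $\rev_G(T^*)$ on it, citing \citeauthor{linkage}. Second, trace the execution of Matching Affinity Clustering on $G$: because $n = 2^N$, the $k$-sized matching step is vacuous ($k = 0$), so every level performs a genuine \emph{perfect} maximum matching on the clustering graph $\mathcal{C}(G, C^i)$, whose edge weights are exactly the average linkages between current clusters (Definition~\ref{def:clustgraph}). The crux is to argue that on this particular $G$, the maximum-weight perfect matching at each level merges precisely the same pairs that Average Linkage merges greedily — so the two algorithms produce the same hierarchy $T$, and hence the same revenue. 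Third, combine: the revenue Matching Affinity Clustering obtains equals the revenue Average Linkage obtains, which by \citeauthor{linkage}'s analysis is at most $(1/3 + o(1)) \rev_G(T^*)$, giving the claimed bound.

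\textbf{Main obstacle.} The delicate part is the second step: showing that a \emph{maximum-weight perfect matching} on the clustering graph selects the same merges as Average Linkage's \emph{single greedy} merge of the two highest-average-linkage clusters. These are not the same operation in general — Average Linkage merges one pair per step and recomputes, whereas Matching Affinity Clustering commits to a full matching simultaneously. The counterexample graph must therefore be engineered (or shown) to have enough symmetry or structure that the globally optimal matching at each level coincides with the greedily-merged pairs, and that the recomputed average linkages after a full round of merges still mirror what Average Linkage would see. I expect \citeauthor{linkage}'s construction to already possess this symmetry (it is designed to fool greedy average-linkage behavior), so the work is to verify that the matching formulation does not escape the trap — i.e., that no alternative perfect matching achieves strictly higher weight and thereby diverges from Average Linkage. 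A secondary, minor obstacle is handling the approximation slack: Matching Affinity Clustering uses only a $(1-\epsilon)$-approximate matching, so I would either argue the graph's weights are structured enough that the approximate matching still picks the intended pairs, or absorb the discrepancy into the $o(1)$ term, noting that $\epsilon$ can be taken small relative to the construction size.
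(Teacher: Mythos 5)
Your proposal matches the paper's proof essentially step for step: the paper takes \citeauthor{linkage}'s column/row clique construction with $N = 2^{3n}$ vertices (so the $k$-matching step is skipped), argues that the maximum perfect matching at every level stays within the heavier $(1+\epsilon)$-weight rows because the clustering-graph edge weights are average linkages and thus remain $1+\epsilon$ there, and then invokes \citeauthor{linkage}'s tightness bound — exactly your plan, including your handling of the approximation slack by choosing the matching error small enough. The obstacle you flag (greedy single merges versus a simultaneous perfect matching) is resolved in the paper precisely by the symmetry you anticipate.
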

\addtocounter{theorem}{-1}
\endgroup

\begin{proof}[Proof of Theorem \ref{thm:lb}]
The graph $G$ consists of $N$ vertices. We divide the vertices into $N^{1/3}$ ``columns'' to make large cliques. In every column, make a clique with edge weights of 1. In addition, enumerate all vertices in each column. For some index $i$, we take all $i$th vertices in each column and make a ``row'' (so there are $N^{2/3}$ rows that are essentially orthogonal to the columns). Rows become cliques as well, with edge weights of $1+\epsilon$. All non-edges are assumed to have weight zero.

This is the graph described by \citet{linkage} to show Average Linkage only achieves a $1/3$-approximation, at best, for revenue. They are able to achieve this because Average Linkage will greedily select all the $1+\epsilon$ edges to merge across first. We can then leverage these results by showing that Matching Affinity Clustering, too, merges across these edges first.

In our formulation, we assume $N=2^{3n}$ for some $n$. Then there are $2^n$ columns and $2^{2n}$ rows with $2^{2n}$ and $2^n$ vertices respectively. Additionally, our algorithm skips the $k$-matching steps (ie, all vertices are real). In the first round, we can clearly find a maximum perfect matching by simply matching within the rows, and we can assure this for our approximate matching by selecting a small enough error. In the next clustering graph, since edge weights are the average linkage between nodes, note that the highest edge weights are still going to be $1+\epsilon$ within the rows. Therefore, this matching will continue occurring until it can no longer find perfect matchings within the rows. Since the rows are cliques of $2^n$ vertices, this will happen until all each row is merged into a single cluster. This is then sufficient to refer to the results of \citet{linkage} to get a $1/3$ bound.
\end{proof}

\section{Value approximation}
Our goal in this section is to prove Theorem~\ref{thm:val}.

\begingroup
\def\thetheorem{\ref{thm:val}}
\begin{theorem}
\matchclusterval
\end{theorem}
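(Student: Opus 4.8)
The plan is to mirror the duality argument used for revenue in Lemma~\ref{lem:apx}, but to run it in the opposite direction, since $\val_G$ and $\cost_G$ share a formula while $\val_G$ is a maximization. First I would rewrite the value as a sum over merges: when a node merges clusters $A$ and $B$, every pair $\{i,j\}$ with $i\lor j$ at that node is a cross pair, so it contributes $(|A|+|B|)\,w_G(A,B)$, giving $\val_G(T)=\sum_{\text{merges }(A,B)}(|A|+|B|)\,w_G(A,B)$. By Lemma~\ref{lem:half} the clustering is balanced, so I can group these merges by level: when $n=2^N$ every cluster at level $\ell$ has size exactly $2^\ell$ and the level-$\ell$ contribution is $2^{\ell+1}\sum_{(A,B)\in M_\ell}w_G(A,B)$, while in general the size ratio stays within a factor of two, which I track with the indicator $p$ exactly as in the revenue proofs.

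Next I would establish the per-level analogue of Lemma~\ref{lem:costb}, now bounding clustering revenue (on the distance weights) by clustering value. The minimum-weight matching merges the closest clusters first, which forces heavy, dissimilar pairs to be cut high in the tree where their leaf-count multiplier is large; quantitatively, the key per-level inequality compares the total cut weight $\sum_{(A,B)\in M_\ell}w_G(A,B)$ of the matching against the total internal weight of the level-$\ell$ clusters, using that $M_\ell$ is an $\alpha$-approximate minimum matching and that a perfect matching must spread across equal-sized clusters. Summing this inequality over all $\log n$ levels yields $\rev_G(T)\le \tfrac{2^{2p}}{2\alpha}\,\val_G(T)$, the value-side counterpart of the cost bound, with the factor $2^{2p}$ again absorbing the size-ratio-two imbalance of the general case together with the first $k$-matching step.

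Finally I would invoke the duality $\val_G(T)=n\sum_{i,j}w_G(i,j)-\rev_G(T)$ from Section~\ref{sec:cost}, substitute the bound from the previous step, and solve for $\val_G(T)$ to obtain $\val_G(T)\ge \tfrac{2\alpha}{2\alpha+2^{2p}}\,n\sum_{i,j}w_G(i,j)$. Since any tree has $|\leaves(T^*[i\lor j])|\le n$, the optimum satisfies $\val_G(T^*)\le n\sum_{i,j}w_G(i,j)$, so the ratio $\tfrac{2\alpha}{2\alpha+2^{2p}}$ transfers directly into an approximation guarantee; checking $\tfrac{2\alpha}{2\alpha+1}\ge\tfrac23\alpha$ for $p=0$ and $\tfrac{\alpha}{\alpha+2}\ge\tfrac13\alpha$ for $p=1$ (both holding since $\alpha\le1$) gives the claimed $\tfrac23\alpha$ and $\tfrac13\alpha$ factors. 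The round and space bounds then follow from the $\log n$ iterations and the transform of Lemma~\ref{lem:transform}, each calling the assumed $O(f(n))$-round matching routine, for $O(f(n)\log n)$ rounds in $\widetilde{O}(n)$ space.

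The main obstacle is the per-level revenue-to-value inequality. Unlike the single-merge, greedily optimal step of Average Linkage analyzed by \citet{cohen-addad}, here an entire approximate minimum matching is committed at once, so I must argue that no level can accumulate too much internal, cut-late weight relative to the value it generates, uniformly across all levels and under only an $\alpha$-approximate, merely near-balanced matching. Getting the constant exactly right, so that the geometric growth of cut weights across levels yields the clean $\tfrac{1}{2\alpha}$ (respectively $\tfrac{2}{\alpha}$) bound rather than a weaker constant, is the delicate part, and is precisely where Lemma~\ref{lem:half} and the approximation quality of the matching both enter.
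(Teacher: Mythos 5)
Your overall strategy is recognizably the same as the paper's at its core: both arguments hinge on a per-level comparison between the weight the matching cuts at level $\ell$ and the internal weight already absorbed by the level-$\ell$ clusters, powered by the $\alpha$-approximate minimality of the matching together with the near-balance from Lemma~\ref{lem:half}. The packaging differs: the paper follows \citet{cohen-addad} directly, proving by induction on levels that the truncated tree $T_C$ already collects a $\left(\frac13\right)^p\left(\frac23\right)^{1-p}\alpha$ fraction of $\sum_i |C_i|\,w(C_i)$, whereas you route through the identity $\val_G(T) = n\sum_{i,j}w_G(i,j) - \rev_G(T)$ and a global bound $\rev_G(T)\le \frac{2^{2p}}{2\alpha}\val_G(T)$, mirroring Lemma~\ref{lem:apx}. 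Your arithmetic at the end checks out (for $\alpha\le 1$), and the complexity claim is fine.

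However, two steps you defer are precisely where the content lies, and as written they are gaps. First, the per-level inequality itself: the paper proves a dedicated lemma stating that the average cross-edge weight of the pairs merged at the current level is at least $2\alpha$ times the average internal edge weight of those clusters, and its proof is a non-trivial double averaging. One must argue that every internal edge was cut at some earlier level, so some earlier level $C'$ has average cut weight at least the overall internal average; then decompose the current cross edges into matchings that were \emph{feasible} at the time $C'$ was chosen to produce a comparison matching $C''$; and only then invoke the $\alpha$-approximate minimality of the matching actually selected to chain $C'\preceq \alpha\, C''$. Saying the minimum matching ``forces heavy pairs to be cut high'' does not substitute for this construction, and it is the only place the matching quality enters. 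Second, ``summing the per-level inequality over all $\log n$ levels'' to get $\rev_G(T)\le \frac{2^{2p}}{2\alpha}\val_G(T)$ is not a routine summation: that global inequality is \emph{equivalent} (via the duality you invoke) to the theorem's conclusion, and a naive level-by-level version $\clusterrev_G(C^\ell,C^{\ell+1})\le c\cdot(\text{level-}\ell\text{ value})$ is false at low levels, since both quantities are the same cut weight scaled by $n-|A|-|B|$ versus $|A|+|B|$. What the per-level lemma actually gives is a recursive lower bound of cut weight in terms of the sum of all earlier cut weights, and extracting the final constant from that recursion is exactly what the paper's induction does; the bound is tight, so there is no slack to absorb a sloppier aggregation. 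You should either carry out that extremal/inductive argument explicitly or adopt the paper's induction on $T_C$.
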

\addtocounter{theorem}{-1}
\endgroup

Since this is effectively the same algorithm as the revenue context, we can the analysis of Lemma~\ref{lem:bounds} and simplify it to show the complexity of the algorithm. All that is left to do is prove the approximation. Our proof will be quite similar to that of~\citet{cohen-addad}, however we will require some clever manipulation to handle many merges at a time. Fortunately, the fact that we merge based off a minimum matching with respect to Average Linkage makes the analysis still follow the~\citet{cohen-addad} proof quite well. We start with a lemma.

\begin{lemma}
\vallem
\end{lemma}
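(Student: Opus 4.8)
The plan is to follow the layer-at-a-time strategy that the authors advertise: rather than charging a single merge, I track the total weight of all edges ``resolved'' at each level of the hierarchy and relate consecutive levels through the optimality of the minimum matchings. For a level $m$, let $W_m$ denote the total $G$-weight of all edges whose two endpoints first land in a common cluster at level $m$; equivalently $W_m=\sum_{(X,Y)\in M_m}w(X,Y)$, the cross-weight of the matching $M_m$ used at that level. If $C$ sits at level $\ell$ so that each $A_i,B_i$ is a level-$(\ell-1)$ cluster, then $\sum_i w(A_i,B_i)=W_\ell$, and since every internal edge of a level-$\ell$ cluster is resolved at some earlier level, $\sum_i(w(A_i)+w(B_i))=\sum_{m=0}^{\ell-1}W_m$. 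After dividing out the uniform size factors supplied by Lemma~\ref{lem:half}, the target inequality becomes a clean comparison between $W_\ell$ and $\sum_{m<\ell}W_m$.

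First I would prove the one-level recursion $W_{m+1}\ge 2\alpha\,W_m$ by a global exchange argument. Fix level $m$ and the pairing of level-$(m+1)$ clusters produced at level $m+1$; for each matched pair $(P,Q)$, the minimum matching at level $m$ joined the two halves of $P$ and the two halves of $Q$. I would build an alternative perfect matching on the level-$m$ clustering graph by re-pairing, \emph{simultaneously for every such group}, the halves of $P$ with the halves of $Q$. Performing the swap globally is the crucial point: only then does the $\alpha$-approximation guarantee apply to two complete matchings and localize cleanly, whereas a per-group swap would leave an uncancelled common term. Averaging the two complementary swaps, and invoking Lemma~\ref{lem:transform}'s identity $w_{\mathcal C}(P,Q)=\tfrac14\sum_{s,t}w_{\mathcal C}(P^s,Q^t)$ together with the equal-size property of level-$m$ clusters, the four cross terms reassemble into $w(P,Q)$ while the two internal terms reassemble into $w(P^1,P^2)+w(Q^1,Q^2)$; summing over all pairs gives $2\alpha\,W_m\le W_{m+1}$.

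Then I would iterate: $W_m\le(2\alpha)^{-(\ell-m)}W_\ell$, so $\sum_{m=0}^{\ell-1}W_m\le W_\ell\sum_{j\ge1}(2\alpha)^{-j}$, a geometric sum that evaluates, in the relevant regime for $\alpha$, to at most $\tfrac{2^\ell-1}{\alpha\,2^\ell}W_\ell$. Substituting the balanced counts $\sum_i|A_i||B_i|=k\,2^{2\ell}$ and $\sum_i(|A_i|(|A_i|-1)+|B_i|(|B_i|-1))=2k\,2^\ell(2^\ell-1)$ collapses the constants to exactly the claimed factor $2\alpha$, and the chain is in fact tight for exact matchings. I expect the exchange step to be the main obstacle, in two respects. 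The first is making the approximation localize, which I resolve by swapping all groups at once instead of one at a time. The second is the non-power-of-two case: there Lemma~\ref{lem:half} only guarantees balance within a factor of two, so the normalizations $w(X,Y)/(|X||Y|)$ no longer cancel uniformly across clusters and must instead be controlled up to a constant, and this constant is precisely what degrades the guarantee in the general setting.
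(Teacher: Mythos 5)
Your route is genuinely different from the paper's. The paper sets up no level-by-level recursion: it writes the pooled internal average $\sum_i (w(A_i)+w(B_i))/(a+b)$ as a mediant of the per-level cross averages, so that by averaging a \emph{single} lower level $C'$ already has cross average at least the pooled internal average; it then decomposes the current cross edges into perfect matchings of the level-$C'$ clusters to extract, again by averaging, one alternative matching $C''$ whose average is at most the current cross average; and it invokes the minimum-matching guarantee exactly once, comparing $C'$ against $C''$. Your proof instead invokes the guarantee at every level via the exchange that re-pairs the halves of each $(P,Q)\in M_{m+1}$. The global-swap observation, the use of the quarter-sum identity from Lemma~\ref{lem:transform}, and the resulting recursion $W_m\le(\alpha/2)W_{m+1}$ (writing $\alpha\ge1$ for a minimization guarantee) are all sound, and your geometric sum $\sum_{j=1}^{\ell-1}2^{-j}=1-2^{-(\ell-1)}=(s-1)/s$ does hit the target exactly when $\alpha=1$. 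What the paper's single application of optimality buys is precisely that the approximation factor is paid once rather than compounded.

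That compounding is the one genuine shortfall. For $\alpha>1$ your bound is $\sum_{m<\ell}W_m\le W_\ell\sum_{j\ge1}(\alpha/2)^j=\tfrac{\alpha}{2-\alpha}W_\ell$, whereas the lemma requires $\sum_{m<\ell}W_m\le\alpha\tfrac{s-1}{s}W_\ell$; for large $\ell$ these differ by a factor of roughly $\tfrac{1}{2-\alpha}=1+O(\alpha-1)$, so your chain recovers the stated constant only up to a $1+O(\epsilon)$ loss when $\alpha=1+\epsilon$. This is harmless for the paper's downstream use (it folds into the $\epsilon$'s already present), but the clean factor in the statement is not literally reached by your method, and you should say so. Two smaller points: your telescoping still passes through the comparison of $W_1$ with $W_2$, i.e.\ through the first, $k$-sized matching, where some level-one clusters are singletons or contain duplicated vertices and the ``re-pair the halves'' exchange is undefined, so that base case needs a separate argument even though the lemma itself only speaks of levels above the first; and the degradation in the general-$n$ case actually enters the paper through the cluster-balance factor of Lemma~\ref{lem:half} in the theorem's induction, not through this lemma, whose statement is identical in both cases.
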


\begin{proof}
Let $a = \frac12\sum_{i=1}^k|A_i|(|A_i|-1)$ and $b = \frac12\sum_{i=1}^k|B_i|(|B_i|-1)$. Let $A = \cup_{i=1}^k A_i$ and $B = \cup_{i=1}^k B_i$. Using these, one can see that the average edge weight of all edges contained in any $A_i$ or $B_i$ cluster is: \[\frac{\sum_{i=1}^k (w(A_i) + w(B_i))}{a + b}\]
These edges were all merged across at some point lower in the hierarchy. This means that the edge set between $A_i$'s and $B_i$'s are the union of all edges merged across in lower clusterings in the hierarchy. Therefore, by averaging, we can say there exists a clustering $C'$ (with $|C'|=k'$) below $C$ in the hierarchy, with clusters and splits similarly defined as $C'$, $A_i'$ and $B_i'$ respectively, such that:
\[\frac{\sum_{i=1}^{k'} w(A_i',B_i')}{\sum_{i=1}^k|A_i'|\cdot |B_i'|} \geq \frac{\sum_{i=1}^k (w(A_i) + w(B_i))}{a + b}\]
Now we would like to build a similar expression for the edges between all $A_i$ and $B_i$. The average of these edge weights is the following expression:
\[\frac{\sum_{i=1}^kw(A_i, B_i)}{\sum_{i=1}^k |A_i| \cdot |B_i|}\]
Consider the iteration that formed $C'$. Notice because $C$ is a higher level of the hierarchy, every cluster $A_j'$ and $B_j'$ must be a subset of some $A_i$ or $B_i$. Fix some $i$, and consider all the edges that cross from some $A_j'$ to some $B_k'$ such that $A_j', B_k' \subset A_i\cup B_i = C_i$. The union of all these edges precisely makes up the set of edges between $A_i$ and $B_i$. Do this for every $i$, and we can see this makes up all the edges of interest. We can decompose this into a set of matchings across the entire dataset. By another averaging argument, we can say there exists another alternate clustering $C''$ (as opposed to $C'$) which only matches clusters $A_j'$ and $B_k'$ that are descendants of $A_i$ and $B_i$ respectively such that:
\[\frac{\sum_{i=1}^kw(A_i'', B_i'')}{\sum_{i=1}^k |A_i''| \cdot |B_i''|} \leq \frac{\sum_{i=1}^kw(A_i, B_i)}{\sum_{i=1}^k |A_i| \cdot |B_i|}\]
Note that this was a valid matching, and therefore a valid clustering, at the same time that $C'$ was selected. Also, note that either both $C$ and $C'$ were perfect matchings, or they were both restricted to the same $k$ size in the first step of the algorithm. Thus, since $C'$ was an $\alpha$-approximate minimum ($k$-sized) matching in the graph where edges are the average edge weights between clusters, we know:
\[\frac{\sum_{i=1}^kw(A_i', B_i')}{\sum_{i=1}^k |A_i'| \cdot |B_i'|} \leq \alpha\frac{\sum_{i=1}^kw(A_i'', B_i'')}{\sum_{i=1}^k |A_i''| \cdot |B_i''|}\]
Putting this all together, we find the desired result:
\[\frac{\sum_{i=1}^k (w(A_i) + w(B_i))}{a + b} \leq \alpha\frac{\sum_{i=1}^kw(A_i, B_i)}{\sum_{i=1}^k |A_i| \cdot |B_i|}\]
\end{proof}

And we can use this to prove our theorem, similar to the results of~\citet{cohen-addad}.

\begin{proof}[Proof of Theorem~\ref{thm:val}]
We prove this by induction on the level of the tree. At some level, with clustering $C$, consider truncating the entire tree $T$ at that level, and thus only consider the subtrees \textit{below} $C$, ie with roots in $C$. Call this tree $T_C$. We will consider the aggregate value accumulated by this level.  Trivially, the base case holds. Then we can split the value of $C$ into the value accumulated at the most recent clustering step and value one step below $C$. We use induction on the latter value. Since the approximation ratio is $\frac13$ for $p=1$ and $\frac12$ for $p=0$, we can write the ratio as $\left(\frac13\right)^p \left(\frac12\right)^{1-p}$.
\begin{align*}
\val(T_C) \geq&  \sum_{i=1}^k (|A_i| + |B_i|) w(A_i,B_i) + \left(\frac13\right)^p \left(\frac23\right)^{1-p}\sum_{i=1}^k(|A_i|w(A_i) + |B_i|w(B_i))
\end{align*}
Now we would like to apply Lemma 1 to modify the first term in a similar manner to Cohen-Addad et al. Specifically, we want to extract terms of the form $|A_i|w(B_i)$ and $|B_i|w(A_i)$. We will find this is harder to do with our formulation of Lemma 1, and therefore we have to rely on Lemma x that says that the cluster balance is at least $\frac12$. Let $m=\min\left(\{|A_i|\}_{i=1}^k \cup\{|B_i|\}_{i=1}^k\}\right)$ be the minimum cluster size. This and our cluster balance ratio implies that $m\leq  |A_i|,|B_i| \leq 2m$ for all $i$. Note that when $N=2^n$, we have perfect cluster balance, so $|A_i|=|B_i| = m$. Thus for our indicator $p$, $m\leq |A_i|,|B_i| \leq 2^pm$. Now we can manipulate our Lemma 1 result. Start by isolating the numerator on the left.
\begin{align*}
&\sum_{i=1}^k w(A_i, B_i) \geq 2\alpha\frac{\sum_{i=1}^k |A_i|\cdot |B_i|\sum_{i=1}^k (w(A_i) + w(B_i))}{ \sum_{i=1}^k (|A_i|(|A_i| -1 ) + |B_i|(|B_i|-1))}
\end{align*}
Note now that $\sum_
{i=1}^k |A_i|\cdot |B_i| \geq m^2$ and $|A_i|(|A_i|-1) \leq 2^{2p}m^2$  and similarly for $B_i$.
\begin{align*}
\sum_{i=1}^k w(A_i, B_i) \geq& 2\alpha\frac{km^2\sum_{i=1}^k (w(A_i) + w(B_i))}{2^{2p+1}km^2}
\\=& \alpha\frac{\sum_{i=1}^k (w(A_i) + w(B_i))}{2^{2p}}
\end{align*}
To get the correct term on the left, we see that $\sum_{i=1}^k (|A_i| + |B_i|)w(A_i,B_i) \geq 2m\sum_{i=1}^k w(A_i,B_i)$. So we can multiply this result by $2m$, and then plug it into a portion of the first term. To preserve the ratio for both $p=1$ and $p=0$, we multiply it by $\left(\frac23\right)^p \left(\frac12\right)^{1-p}$.

\begin{align*}
\val(T_C) \geq& \left(1-\frac23\right)^{1-p} \left(1-\frac13\right)^{p}\sum_{i=1}^k (|A_i| + |B_i|) w(A_i,B_i) \\&+ \frac{2m}{2^{2p}}\cdot\left(\frac23\right)^p \left(\frac13\right)^{1-p}\alpha\sum_{i=1}^k(w(A_i) + w(B_i)) \\&+ \left(\frac13\right)^p \left(\frac23\right)^{1-p}\alpha\sum_{i=1}^k(|A_i|w(A_i) + |B_i|w(B_i))
\end{align*}
Next, note $m\geq |A_i|, |B_i|$. This can be used on the second term.
\begin{align*}
\val(T_C) \geq& \left(\frac13\right)^{p} \left(\frac23\right)^{1-p}\sum_{i=1}^k (|A_i| + |B_i|) w(A_i,B_i) \\&+ \cdot\left(\frac13\right)^p \left(\frac23\right)^{1-p}\alpha\sum_{i=1}^k(w(A_i) + w(B_i)) \\&+ \left(\frac13\right)^p \left(\frac12\right)^{1-p}\alpha\sum_{i=1}^k(|A_i|w(A_i) + |B_i|w(B_i))
\\\geq& \left(\frac13\right)^p\left(\frac23\right)^{1-p} \alpha\sum_{i=1}^k |C_i| w(C_i)
\end{align*}
Therefore, this captures $\frac23\alpha$ of the weight of each subtree at height $C$ when $n=2^N$, and $\frac13\alpha$ more generally.
\end{proof}

Finally, we can show Theorem~\ref{thm:lbval}, which shows the tightness of the stronger approximation factor.

\begingroup
\def\thetheorem{\ref{thm:lbval}}
\begin{theorem}
\lbval
\end{theorem}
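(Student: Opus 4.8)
The plan is to mirror the strategy used for the revenue lower bound in Theorem~\ref{thm:lb}: exhibit a single instance of size $n=2^N$, in the distance context, on which Average Linkage is known to be no better than a $(2/3+o(1))$-approximation for value, and then argue that Matching Affinity Clustering reproduces exactly the same sequence of merges on that instance. Since the value of a tree depends only on which clusters merge and at what height, matching the merge pattern of Average Linkage forces the identical (suboptimal) value. I would take the tight Average-Linkage-for-value construction associated with \citet{cohen-addad}, namely a weighted instance in which greedily merging the closest (lightest) pairs first locks the algorithm into a tree of value asymptotically $2/3$ of optimal, and, as in Theorem~\ref{thm:lb}, arrange $n$ to be a power of two so that the $k$-sized matching step degenerates to an ordinary minimum perfect matching ($k=0$) and every level is perfectly balanced.

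The key step is to show that the minimum-matching merges of Matching Affinity Clustering coincide, level by level, with the greedy minimum-Average-Linkage merges. In the distance context the algorithm repeatedly computes a minimum (perfect) matching on the clustering graph, whose edge weights are precisely the average linkages between the current clusters (Definition~\ref{def:clustgraph}). I would choose the construction so that at every level the globally lightest average-linkage edges already form a perfect matching, and moreover any minimum perfect matching must use exactly those edges. When this holds, the single cheapest pair that Average Linkage would greedily merge lies in the forced minimum matching, and by the symmetry of the instance the remaining forced edges are copies of that same cheapest pair; merging them all in one round therefore yields the same clustering Average Linkage reaches after that batch of greedy steps. Iterating up the hierarchy produces an identical tree up to the (value-irrelevant) order of merges within a level, so $\val_G(T)$ equals the value of the Average Linkage tree, which is at most $(2/3+o(1))\,\val_G(T^*)$ for the optimal $T^*$.

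The main obstacle is exactly this coincidence argument: Average Linkage merges one pair at a time and re-evaluates average linkages after each merge, whereas Matching Affinity Clustering commits to an entire matching in a single round, so I must ensure the instance is symmetric enough that the parallel batch cannot ``cheat'' by choosing a different, more profitable set of pairs than the greedy algorithm would. Concretely I would verify that (i) at each level the minimum perfect matching is essentially unique up to the automorphisms of the instance, and (ii) those automorphisms carry Average Linkage's greedy trajectory to itself, so no matching available to the algorithm can achieve strictly larger value than the Average Linkage tree. A secondary point is to confirm that the near-balanced first step and the use of an $\alpha$-approximate matching oracle do not help the algorithm escape the bound: since a lower bound needs only one bad instance and we may take $\alpha\to1$, I would fix the matching error small enough that the forced minimum matching is still selected, leaving the $(2/3+o(1))$ bound intact.
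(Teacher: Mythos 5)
There is a genuine gap here: your argument is a proof plan whose load-bearing components are all deferred. You never exhibit the instance --- you only posit that a tight Average-Linkage-for-value construction exists with the very strong extra properties you need (that at every level the globally lightest average-linkage edges form a perfect matching, that this matching is essentially the unique minimum, and that the batch of matched merges reproduces the state Average Linkage reaches after a sequence of one-at-a-time greedy merges with re-evaluation in between). You correctly identify the sequential-versus-batch coincidence as ``the main obstacle,'' but you do not overcome it, and it is not obvious that any known tight instance for Average Linkage under value has these properties. Without the concrete graph and the verification of (i) and (ii), no $(2/3+o(1))$ bound has actually been established. In the revenue case (Theorem~\ref{thm:lb}) this replication strategy works because the paper inherits a fully specified rows-and-columns construction and checks the matching behavior explicitly; for value you have no such starting point, so the proposal as written does not close.

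The paper's actual proof sidesteps the coincidence argument entirely with a much simpler instance: a complete bipartite graph between equal parts $A$ and $B$ with a single perfect matching deleted ($w(a_i,b_j)=1$ for $i\neq j$, $w(a_i,b_i)=0$, all intra-part weights zero). Matching Affinity Clustering may begin by merging across the deleted matching, since $\{(a_i,b_i)\}_i$ is a zero-weight (hence minimum) perfect matching. After that one step the cluster graph is a uniform complete graph, and by \citet{dasgupta}'s observation every hierarchy on a uniform complete graph has the same value, so the remaining tree is forced up to an exactly computable constant, namely $\tfrac43(n^3-n)$; comparing against the hierarchy that merges all of $A$, then all of $B$, then joins them (value $\approx 2n^3$) gives the $2/3$ ratio directly. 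If you want to salvage your approach, you would need to either supply and verify the symmetric tight instance you describe, or switch to an instance like the paper's where the algorithm's first move collapses the problem into one where all continuations are equivalent.
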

\addtocounter{theorem}{-1}
\endgroup

\begin{proof}
Consider $G$ which is almost a bipartite graph between partitions $A$ and $B$ (with $|A|=|B|$), except with a single perfect matching removed. For instance, if we enumerate $A=\{a_1,\ldots,a_n\}$ and $B=\{b_1,\ldots,b_n\}$, we have $w(a_i,b_j)=1$ for all $i\neq j$ and $w(a_i,b_i) =0$. And since it's bipartite, $w(a_i,a_j)=w(b_i,b_j)=0$ for all $i, j$.
\\\\Consider the removed perfect matching to get clusters $\{a_1,b_1\},\ldots,\{a_n,b_n\}$. Matching Affinity Clustering could start by executing these merges, as this is a zero weight (and thus minimum) matching.
\\\\Now consider $G'$, the remaining graph after these merges with a vertex for each cluster and edges representing the total edge weight between clusters. This is a complete graph of size $n$ with 2-weight edges. By \citet{dasgupta}'s results, we know the value (with is calculated the same as cost) of any hierarchy on $G'$ is $\frac23(n^3-n)$. However, this ignores the fact that clusters are size 2, so the contribution of this part to the hierarchy on $G$ yields a revenue of $\frac43(n^3-n)$.
\\\\Now let's consider the obvious good hierarchy, where we simply merge all of $A$, then all of $B$, then merge $A$ and $B$ together. Thus all $n(n-2)$ edges will be merged into a cluster of size $2n$ for a total value of $2(n^3-2n^2)$.
\\\\Asymptotically, then, Matching Affinity Clustering only achieves a ratio of $2/3$ on this graph.
\end{proof}

\section{Affinity Clustering approximation bounds}
 
We now formally prove our bounds on the theoretical performance of Affinity Clustering.

\begingroup
\def\thetheorem{\ref{thm:affinity}}
\begin{theorem}
\affinity
\end{theorem}
\addtocounter{theorem}{-1}
\endgroup

Recall that this is shown with the following two lemmas:

\begingroup
\def\thelemma{\ref{lem:affinityrev}}
\begin{lemma}
\affinityrev
\end{lemma}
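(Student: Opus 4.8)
The plan is to exhibit a family $\{G_n\}$ whose maximum spanning tree is a star, so that Affinity Clustering is forced to collapse the entire vertex set into a single flat supercluster in its very first Bor\r{u}vka round, earning no revenue, while a caterpillar hierarchy (of the kind Matching Affinity Clustering builds) earns revenue $\Omega(n^2)$. The quotient of the two is then $O(1/n)$, which is exactly the claimed bound (and in fact stronger).

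Concretely, I would take $G_n$ to have a center $c$ and leaves $\ell_1,\dots,\ell_{n-1}$, with $w_{G_n}(c,\ell_i)=1$ for every $i$ and all remaining edges of weight $0$ (the unit weights may be perturbed to distinct values $1+i\delta$ to remove ties, without changing the behavior). First I would trace Affinity Clustering in the max-weight version of Bor\r{u}vka's algorithm: each leaf $\ell_i$ has a single positive incident edge, the spoke to $c$, so it is forced to select that spoke regardless of how ties are broken, and the center selects some spoke as well. The union of the selected edges is therefore the whole star, so all $n$ vertices lie in one connected component and are merged into a single supercluster node in one round. Since this node is flat and contains every leaf, for each positive edge $(c,\ell_i)$ the least common ancestor $c\lor\ell_i$ is the root, with $\nonleaves(T[c\lor\ell_i])=\emptyset$, and hence $\rev_{G_n}(T_{\mathrm{Aff}})=0$.

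Next I would lower bound the optimum by exhibiting a good hierarchy. Building the caterpillar that first merges $c$ with $\ell_1$ and then inserts $\ell_2,\ell_3,\dots$ one at a time, the smallest cluster containing $c$ and $\ell_i$ has exactly $i+1$ leaves, so its complement contributes $n-i-1$ non-leaves and
\begin{align*}
\rev_{G_n}(T_{\mathrm{good}}) \;=\; \sum_{i=1}^{n-1}(n-i-1) \;=\; \binom{n-1}{2} \;=\; \Omega(n^2).
\end{align*}
Since $\rev_{G_n}(T_{\mathrm{good}}) \le \rev_{G_n}(T^*)$, Affinity's revenue is at most a $O(1/n)$ fraction of the optimum, giving the lemma.

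The main obstacle is robustness: as flagged just before the lemma, one must rule out \emph{every} way Affinity might resolve its edge choices rather than one convenient run. The star is chosen precisely so this is automatic --- each leaf has only one positive edge, so the first-round component is the whole graph under all tie-breaking rules --- and the argument leans on the definition that Affinity merges an entire component into a single \emph{flat} supercluster, which is what forces the $\nonleaves$ count to vanish; I would make both points explicit, since the bound fails if the intra-component merge were allowed to be a nontrivial binary tree. The only remaining care is to confirm that the comparison tree is one Matching Affinity Clustering could actually produce (dropping the initial $k$-matching step, as the excerpt permits), so that the bound is stated against a legitimately achievable hierarchy.
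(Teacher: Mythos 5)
There is a genuine gap here, and you have actually put your finger on it yourself in your final paragraph. Your star construction is exactly the example the paper mentions in the sentence immediately preceding this lemma as ``a brief counterexample,'' and the paper then explicitly declines to rest on it: ``To evaluate this algorithm, we must consider all possible ways Affinity Clustering might decide to resolve edges on the max spanning tree of the input graph.'' In other words, the evaluation standard adopted by the paper is the generous one in which the merge of an entire Bor\r{u}vka component may be unfolded into \emph{any} sequence of binary merges along the selected edges, and the lemma must hold against the best such unfolding. Under that standard your star fails completely: unfolding the star component as the caterpillar $c,\ell_1,\ell_2,\dots$ recovers revenue $\binom{n-1}{2}$, which is essentially optimal for that graph, so the ratio is $1-o(1)$ rather than $O(1/n)$. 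Your argument only yields the weaker statement for the flat-supercluster reading, which you concede, but conceding it is not the same as closing it --- the entire difficulty of the lemma lives in that concession.

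The paper's construction is designed precisely to survive arbitrary intra-component resolutions. It takes the complete bipartite graph on $2^n$ vertices with unit cross edges and exhibits a legal first round in which all of $L\setminus\{c_L\}$ selects $c_R$ and all of $R\setminus\{c_R\}$ selects $c_L$, so the selected edges form two large stars and the whole graph becomes one component. The key point is that \emph{every} unit-weight edge incident to a non-center vertex that gets merged is a spoke of one of these stars, so no matter in what order the component is resolved into binary merges, each unit edge is merged while one side of the merge is a singleton; symmetry then pins the total at $O(2^{2n})$ regardless of the order, against an achievable $O(2^{3n})$ from balanced (matching-based) merging. That order-independence argument is the missing idea in your proposal: you need a graph where the adversarial structure is forced by the edge selections themselves, not by the flatness of the supercluster. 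As written, your proof establishes the lemma only for a strictly weaker model of Affinity Clustering than the one the paper commits to analyzing.
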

\addtocounter{lemma}{-1}
\endgroup
 
\begin{proof}
Consider a complete bipartite graph $G$ with $2^n$ vertices such that each partition, $L$ and $R$, has $2^{n-1}$ vertices, and all edges have weight 1. To make it a complete graph, we simply fill in the rest of the graph with 0 weight edges. We first consider how Affinity Clustering might act on $G$. To start, each vertex reaches across one if its highest weight adjacent edges, a weight one edge, and merges with that vertex. Therefore, a vertex in $L$ will merge with a vertex in $R$, and vice versa. There are many ways this could occur. We consider one specific possibility.

Take vertices $c_L$ and $c_R$ from $L$ and $R$ respectively. It is possible that every vertex in $L\setminus c_L$ will merge with $c_R$ and every vertex in $R\setminus c_R$ will merge with $c_L$. It doesn't matter which vertices $c_R$ and $c_L$ try to merge with. Then $G$ is divided into two subgraphs, both of which are stars centered at $c_L$ and $c_R$ respectively. The spokes of the stars have unit edge weights, and all other edges have weight 0. In the next step, the two subgraphs will merge into one cluster. Since there are no non-leaves at that point, it contributes nothing to the total revenue. Therefore, all revenues are encoded in the first step.

Since the subgraphs are identical, they will contribute the same amount to the hierarchy revenue. Recall that we are trying to prove a bound for whatever method Affinity Clustering might choose to break down a merging of a large subgraph into a series of independent clusters. Notice, however, due to the symmetries of the subgraphs, it does not matter in what order the independent merges occur. Therefore, we consider an arbitrary order. Let $T$ be the hierarchy of Affinity Clustering with this arbitrary order. We must break it down into individual merges, and let $T_0$ be the portion of the hierarchy contributing to one of the stars. We only need to sum over the merges of nonzero weight edges.

\begin{align*}
\rev_G(T) =& 2\rev_G(T_0).
\end{align*}

At each step of merging the star subgraph, we merge a single vertex across a unit weight edge into the cluster containing the star center. Call this growing cluster $C_i$ at the $i$th merge. Let $v_i$ be the vertex that gets merged with $C$ at the $i$th step. Then we can break this down into $2^{n-1}-1$ total merges.

\begin{align*}
\rev_G(T) =& 2\sum_{i=1}^{2^{n-1}-1} merge\text{-}\rev_G(\{v_i\}, C_i),\tag{1}\label{13-1}
\\=& 2\sum_{i=0}^{2^{n-1}-2} 2^n - i  - 1,\tag{2}\label{13-2}
\\=& O(2^{2n}).\tag{3}\label{13-3}
\end{align*}

In step (1), we simply break a single star's merging into a series of individual merges in temporal order. Step (2) uses the fact that there are $2^n$ total vertices and $i$ and 1 vertices in the groups being merged to apply the definition of merge revenue. Finally, in (3), we simply evaluate the summation.

Now we consider how Matching Affinity Clustering will act on this graph. It simply finds the maximum matching. In the first iteration, it must match across unit weight edges, and therefore is a perfect matching on the bipartite graph. After this, due to symmetry, it simply finds any perfect matching at each iteration until all clusters are merged. Since the number of vertices is $2^n$, it can always find such a perfect matching. Let $T'$ be Matching Affinity Clustering's hierarchy on $G$. We will break this down into clusterings at each level, as we did in Section \ref{sec:results}. Note there are $\log_2(2^n) = n$ total clusterings required in the hierarchy. And at each clustering, we have a matching $M_i$ that we merge across, so we can break it down into merges across matches. Let any $C^i$ be the usual clustering at the $i$th level of this algorithm.

\begin{align*}
\rev_G(T') =& \sum_{i=1}^{n} \clusterrev_G(C^i, C^{i+1}),
\\=& \sum_{i=1}^{n} \sum_{(A,B)\in M_i} \mergerev_G(A,B).
\end{align*}

Note that by symmetry, each merge on a level contributes the same amount to the revenue. Therefore, we can simply count the number of merges and their contributions. At the $i$th level, there are $2^{n-i}$ total merges. The size of each cluster being merged is $2^{i-1}$, so the number of non-leaves is $2^n-2\cdot 2^{i-1} = 2^n-2^i$. Finally, the number of edges being merged across at each merge, since the graph is bipartite, is just $2^i$.

\begin{align*}
\rev_G(T') =& \sum_{i=1}^{n} 2^{i-1}(2^n - 2^i)\cdot 2^i
= O(2^{3n}).
\end{align*}

Therefore, if we take the ratio of the revenues in the long run, we find the following.

\begin{align*}
\frac{\rev_G(T')}{\rev_G(T)} = O(2^n).
\end{align*}

So by cleverly selecting $n$, we can make the clustering found by Affinity Clustering arbitrarily worse. If the size of the graph is $N=2^n$, then Affinity Clustering can only achieve at best a $1/O(n)$ approximation for this graph.

\end{proof}

\begingroup
\def\thelemma{\ref{lem:affinityval}}
\begin{lemma}
\affinityval
\end{lemma}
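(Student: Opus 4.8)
The plan is to mirror the structure of the revenue lower bound (Lemma~\ref{lem:affinityrev}): I will exhibit an explicit weighted graph, argue that Affinity Clustering is \emph{forced} into a hierarchy whose value I can bound from above, then exhibit one concrete alternative hierarchy whose value bounds the optimum from below, and finally take the ratio. The conceptual difference I must confront first is that the pathology exploited for revenue---collapsing many clusters in a single Bor\r uvka round---actually \emph{helps} value, since merging all points at the root gives every pair leaf count $n$, which is the maximum each term $w_G(i,j)\,|\leaves(T[i\lor j])|$ can attain. Moreover, Bor\r uvka's natural order (lightest edges merged first, heaviest last) tends to align with what value wants (distant pairs high in the tree). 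So I need a genuinely different mechanism: force Affinity to \emph{trap} a small set of very heavy (very distant) edges at a tiny leaf count, while a smarter hierarchy lifts exactly those edges to the root.

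The construction I propose uses ``cells.'' Take a complete graph $G$ on $n$ vertices with $3\mid n$, partitioned into $n/3$ cells $\{a_i,b_i,c_i\}$. Inside each cell set the two ``bridge'' edges $(a_i,c_i),(c_i,b_i)$ to weight $1$, and the ``heavy'' edge $(a_i,b_i)$ to a large weight $W=n^3$; all edges between distinct cells get weight $2$. Because each of $a_i,b_i$ has a strictly minimum incident edge of weight $1$ (its bridge), and $c_i$'s two cheapest edges are its bridges, the first Bor\r uvka round is \emph{forced} to contract each cell into one cluster of size $3$ via the bridges. Consequently $a_i$ and $b_i$ first share a cluster at that size-$3$ merge node, so the heavy edge $(a_i,b_i)$ has $|\leaves(T[a_i\lor b_i])|=3$ no matter what Affinity does afterward.

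I then bound the two values. For Affinity's tree $T_{\mathrm{AC}}$: the $n/3$ heavy edges contribute $(n/3)\cdot 3\cdot W=nW=n^4$; the $2n/3$ bridges (weight $1$, leaf count $3$) contribute $O(n)$; and the $\Theta(n^2)$ inter-cell edges (weight $2$) contribute at most $\Theta(n^2)\cdot 2\cdot n=O(n^3)$ regardless of how the cells are later merged, giving $\val_G(T_{\mathrm{AC}})=O(n^4)$. For the optimum, I exhibit any hierarchy $T^\star$ whose topmost split places every $a_i$ on one side and every $b_i$ on the other (the $c_i$ distributed arbitrarily): then each $(a_i,b_i)$ has $a_i\lor b_i$ equal to the root with leaf count $n$, so since all weights are nonnegative $\val_G(T^\star)\ge (n/3)\cdot n\cdot W=n^5/3$. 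Taking the ratio,
\[
\frac{\val_G(T_{\mathrm{AC}})}{\val_G(T^\star)}\;\le\;\frac{O(n^4)}{n^5/3}\;=\;O(1/n),
\]
which is the claimed bound. Since Affinity's cell-contraction step is forced (not merely one adversarial tie-break), this holds for every execution.

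The main obstacle, which the choice $W=n^3$ is designed to overcome, is that the per-edge factor-$n$ gap on the heavy edges must \emph{dominate} the total value rather than be washed out by the $\Theta(n^2)$ remaining edges. The tension is that forcing the heavy endpoints to merge early requires lighter bridge edges, yet the edges I want to dominate the objective are the heaviest; routing each heavy pair through a shared light neighbor $c_i$ resolves this, and scaling $W$ polynomially above the inter-cell weight guarantees the heavy edges control both $\val_G(T_{\mathrm{AC}})$ and $\val_G(T^\star)$. I would double-check only that the inter-cell contribution is genuinely lower order ($O(n^3)\ll n^4$) and that no later Affinity round can retroactively raise the trapped heavy edges' leaf count (it cannot, since $a_i\lor b_i$ is fixed once they first co-occur); the remaining steps are routine arithmetic.
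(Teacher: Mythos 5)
Your proof is correct, but it takes a genuinely different route from the paper's. The paper's hard instance is a perfect matching of weight-$1$ edges on $4n$ vertices padded with zero-weight edges: since every vertex's cheapest incident edge there has weight $0$, there are many valid Bor{\r u}vka selections, and the paper exhibits one adversarial resolution (grouping the matched pairs into $4$-vertex sets along a particular spanning structure) under which each weight-$1$ edge is trapped at leaf count $4$, giving $\val_G(T)\le 8n$ against an optimum of $8n^2$. Your cell gadget instead separates the three weight scales strictly ($1<2<n^3$), so the first round is \emph{forced}: each $a_i$ and $b_i$ selects its bridge to $c_i$, every cell contracts to a size-$3$ cluster, and each heavy edge $(a_i,b_i)$ is pinned at leaf count at most $3$ in \emph{every} execution, with no tie-breaking argument needed. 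The remaining bookkeeping is right: the heavy edges contribute $n^4$ to Affinity's tree versus at least $n^5/3$ to the tree that splits the $a_i$'s from the $b_i$'s at the root, and the $\Theta(n^2)$ weight-$2$ inter-cell edges contribute only $O(n^3)$, so they are genuinely lower order. What your version buys is robustness --- the paper's bound holds for some execution of Affinity Clustering, whereas yours holds for all of them --- while the paper's construction buys slightly simpler arithmetic, since its instance has only one nonzero weight class. Both establish the claimed $O(1/n)$ bound.
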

\addtocounter{lemma}{-1}
\endgroup

\begin{proof}
Consider simply a graph $G$ that is a matching (ie, each vertex is connected to exactly one other vertex with edge weight 1) with $4n$ vertices. Again, recall that Affinity Clustering must match along the edges of a minimum spanning tree.

Partition the vertices into sets of four, which consist of two pairs. Consider one such set: $v_1$ is matched to $v_2$ and $u_1$ is matched to $u_2$. Most of the edges here are zero. Therefore, a potential component of the minimum spanning tree is the line $v_1,u_1,u_2,v_2$. If we do this for all sets, we can then simply pick an arbitrary root for each set (ie, $v_1$), make some arbitrary order of sets, and connect the roots in a line. All of these added edges are weight 0, so this is clearly a valid MST.

However, note that each edge is contained within some set of four. So if Affinity Clustering merges across these edges first, then the largest cluster the 1-weight edges can be merged in has four vertices. Say the tree returned by Affinity Clustering is $T$. Note that we have $2n$ edges.
\[val(T) \leq 4 \sum_{i=1}^{2n} 1 = 8n.\]
We now observe the optimal solution. Since this is bipartite, we can simply merge each side of the partition first. Then we can merge the two partitions together at the top of the hierarchy. This means all edges are merged into the final, $4n$-sized cluster. Call this $T^*$.
\[val(T^*)  = 4n\sum_{i=1}^{2n} 1 = 8n^2.\]
Thus $val(T)/val(T^*) = 1/O(n)$. Thus, Affinity Clustering cannot achieve better than a $1/O(n)$ approximation on this family of graphs.
\end{proof}

\subsection{Experiments}
Here we provide the full depiction of all experiments.

\begin{figure*}\label{fig:a}
\centering 
\subfigure[Legend]{\includegraphics[width=50mm]{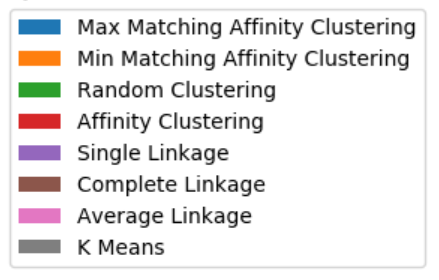}}
\subfigure[Rand Index on Raw Data ]{\label{fig:suba1}\includegraphics[width=110mm]{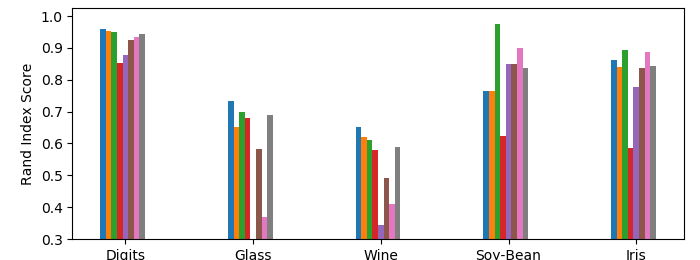}}
\subfigure[Rand Index on Filtered Data]{\label{fig:suba2}\includegraphics[width=110mm]{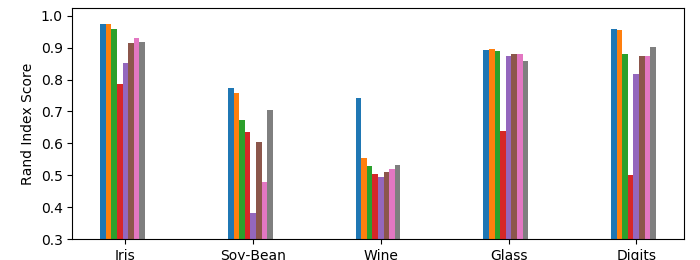}}
\subfigure[Cluster Balance on Raw Data]{\label{fig:suba3}\includegraphics[width=110mm]{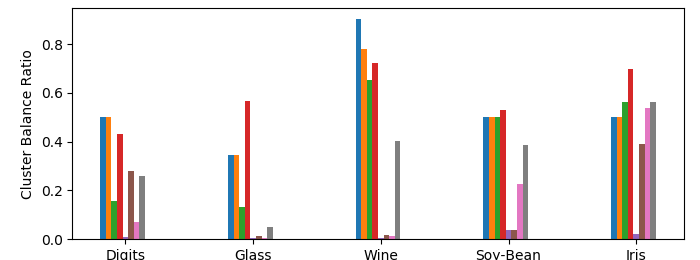}}
\subfigure[Cluster Balance on Filtered Data]{\label{fig:suba4}\includegraphics[width=110mm]{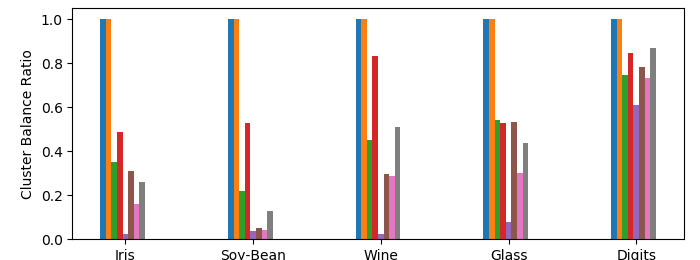}}
\caption{Rand Index scores and cluster balance on raw and filtered (randomly pruned so ground truth is balanced, $n=2^N$) UCI datasets.}
\end{figure*}

\end{document}